\documentclass[lettersize,journal]{IEEEtran}
\usepackage{amsmath,amsfonts}
\usepackage{amsthm,bm}
\usepackage{algorithm}
\usepackage{algpseudocode}
\usepackage{booktabs}
\usepackage{array}
\usepackage[caption=false,font=normalsize,labelfont=sf,textfont=sf]{subfig}
\usepackage{textcomp}
\usepackage{stfloats}
\usepackage{url}
\usepackage{verbatim}
\usepackage{graphicx}
\usepackage{cite}
\usepackage{cleveref}
\usepackage{tikz}
\newtheorem{theorem}{Theorem}[section]
\newtheorem{proposition}[theorem]{Proposition}
\newtheorem{question}[theorem]{Question}
\newtheorem{corollary}[theorem]{Corollary}

\newtheorem{definition}[theorem]{Definition}
\newtheorem{lemma}[theorem]{Lemma}
\newtheorem{example}{Example}

\begin{document}

\title{Constructing Good Abelian Codes via Shift Bounds and Genetic Algorithms}
\author{
	\IEEEauthorblockN{Cong Yu}
	\IEEEauthorblockA{School of Mathematics and Big Data\\
		Chaohu University\\
		Hefei, Anhui 238024, China\\
		Email: zyu6952@gmail.com}
	
	\and
	\IEEEauthorblockN{Hao Chen$^*$}
	\IEEEauthorblockA{College of Information Science and Technology/Cyber Security\\
		Jinan University\\
		Guangzhou, Guangdong 510632, China\\
		Email: haochen@jnu.edu.cn}
		
	\and
	\IEEEauthorblockN{Zhonghua Sun}
	\IEEEauthorblockA{School of Mathematics\\
		Hefei University of Technology\\
		Hefei, Anhui 230009, China\\
		Email: sunzhonghuas@163.com}
		
	\and
	\IEEEauthorblockN{Shixin Zhu}
	\IEEEauthorblockA{School of Mathematics\\
		Hefei University of Technology\\
		Hefei, Anhui 230009, China\\
		Email: zhushixinmath@hfut.edu.cn}
}


\maketitle

\begin{abstract}
This paper investigates the construction of linear codes via abelian codes over finite fields. By exploiting the algebraic structure of multivariate polynomial quotient rings, we derive lower bounds on the minimum distance using a generalized shift bound, which extends the classical van Lint–Wilson bound for cyclic codes. Several infinite families of abelian codes are explicitly constructed, including binary and ternary cases that extend previously known cyclic constructions. To find more abelian codes with good parameters, we apply a genetic algorithm that searches over defining sets represented as binary chromosomes of cyclotomic cosets; the fitness function compares the computed minimum distance against the best known linear code (BKLC) bounds. The search yields multiple record-breaking codes over $\mathbb{F}_3$ and $\mathbb{F}_4$, with improvements over Grassl’s tables. Furthermore, the nested structure of these codes enables the application of Construction X, yielding additional linear codes with improved parameters. The results demonstrate that abelian codes, combined with heuristic search, form a viable way for discovering linear codes with unknown parameters. 
\end{abstract}

\begin{IEEEkeywords}Abelian code; Shift bound; Genetic algorithm; Record-breaking code.
\end{IEEEkeywords}

\section{Introduction}
Abelian codes are a broad generalization of classical cyclic codes, defined as ideals of the finite group algebra $\mathbb{F}_qG$, where $\mathbb{F}_q$ is a finite field and $G$ denotes a finite Abelian group. Early foundational studies on abelian codes trace back to Berman and MacWilliams( see \cite{bibBerman}\cite{bibMac1}\cite{bibMac2}), who first analyzed the ideal structure of group algebras over finite fields. Compared with one-dimensional cyclic codes defined on cyclic groups, abelian codes constructed from direct-product abelian groups $G \cong \mathsf{C}_{a_1} \times \cdots \times \mathsf{C}_{a_r}\ (r\ge 2)$ possess much richer parameter flexibility in code length, dimension and minimum Hamming distance, which makes them a promising source for constructing linear codes with competitive or even record-breaking parameters, where $\mathsf{C}_{a_i}$ denotes the cyclic group of order $a_i$.

The algebraic structure theory of abelian codes has been well-developed in existing literature. Jitman et al. systematically investigated abelian codes over Principal Ideal Group Algebras (PIGAs) in \cite{bibJintman}. They established generator and check elements analogous to generator/parity-check polynomials of cyclic codes, characterized Euclidean self-dual, self-orthogonal, reversible and complementary dual abelian codes, and proved that abelian codes over non-semisimple PIGAs are asymptotically poor, similar to repeated-root cyclic codes. Nevertheless, this work mainly focuses on structural classification and enumeration rather than explicit construction and distance evaluation of practical abelian codes with good parameters. Ferraz and Milies analyzed primitive idempotent decomposition of group algebras over abelian $p$-groups, deriving basic formulas for dimension and minimum weight of minimal abelian codes\cite{Ferraz}. For semisimple group algebras satisfying $\gcd(q,\lvert G\rvert)=1$, the discrete Fourier transform over group characters is well-defined, which enables powerful distance bounding techniques\cite{bib9}. Van Lint and Wilson originally proposed the shifting technique to derive the Van Lint–Wilson bound (shift bound) for cyclic codes\cite{bibVanlint}. Later, Feng, Hollmann and Xiang generalized the shift bound to arbitrary abelian codes in \cite{bib8}, and further applied this tool to prove refined versions of the Donoho–Stark uncertainty principle over finite abelian groups. The generalized shift bound provides a universal lower bound for the minimum distance of abelian codes by constructing independent sets on the dual character group, and it unifies classical bounds including BCH bound as its special case. However, their work only gave theoretical derivations and small illustrative examples, and there exists no prior literature utilizing the shift bound to systematically construct infinite families of abelian codes with guaranteed minimum distance lower bounds.

Constructing linear codes with good parameters is a central problem in algebraic coding theory. Analytic constructions can yield infinite families of abelian codes with provable distance lower bounds, yet they are restricted to specific group orders and cannot cover all abelian codes corresponding to arbitrary unions of $q$-cyclotomic cosets. Exhaustive enumeration of all cyclotomic coset combinations is computationally infeasible for moderate-size abelian groups due to exponential search space growth. Genetic algorithms (GAs), as population-based metaheuristic optimization approaches simulating natural selection and genetic evolution \cite{holland1975}, have achieved remarkable performance in constructing various linear codes. Korban et al. designed multiple GA and virus optimization frameworks to search binary self-dual and LCD codes, which outperformed traditional exhaustive search methods significantly \cite{korban2024a,korban2024b,korban2024c}. Recently, Won et al. proposed two modified genetic iteration search algorithms tailored for quaternary Hermitian LCD codes, equipped with redesigned mutation, crossover, modified crossover and complement operators to satisfy the Hermitian LCD constraint \cite{won2026}. Their algorithms reduce the exponential storage complexity of brute-force search to polynomial magnitude, and successfully discovered over one hundred inequivalent optimal or near-optimal quaternary Hermitian LCD codes with new distance parameters. Nevertheless, no existing work has introduced genetic algorithms into the search of abelian codes over principal ideal group algebras. To fill this research gap, this paper makes the first attempt to apply genetic algorithms to the construction and search of abelian codes. Our main contributions are summarized as follows:
\begin{enumerate}
	\item We formalize the multivariate $q$-cyclotomic coset theory and derive a concrete shift bound formula for two-dimensional abelian groups $G=\mathsf{C}_m\times\mathsf{C}_n$. We extend infinite families of abelian codes with guaranteed minimum distance (see \Cref{sufficient conditions}).
	\item We design a  genetic algorithm for abelian code search. The chromosome is built from cyclotomic cosets, and the fitness function measures code performance against Grassl’s best-known linear code tables with true minimum distances computed via Magma. Tournament selection, single-point crossover and bit-flip mutation are tailored to guarantee algebraically valid defining sets during evolution (see \Cref{alg:ga_abelian}).
	\item We conduct extensive numerical experiments over  abelian groups such as $\mathsf{C}_{13}\times\mathsf{C}_{13}$ and $\mathsf{C}_{15}\times\mathsf{C}_{5}$. A large number of record-breaking linear codes are obtained, surpassing the previously known distance parameters (see \Cref{tab:record-codes}).
\end{enumerate}
\section{Preliminaries}
In this paper, let $\mathbb{F}_q$ be the finite field with $q$ elements, where $q$ is a prime power. An $[n,k,d]_q$ linear code $\mathcal{C}$ is a $k$-dimensional subspace of $\mathbb{F}_q^n$, where $d$ is the minimum distance of $\mathcal{C}$. The dual code of $\mathcal{C}$ is defined as:
$$
\mathcal{C}^{\bot}=\left\lbrace \mathbf{x}\in \mathbb{F}_q^n\mid \left\langle \mathbf{x},\mathbf{c} \right\rangle =0,\quad  \text{for all} \quad \mathbf{c}\in \mathcal{C}  \right\rbrace ,
$$ where $\left\langle \mathbf{x},\mathbf{c} \right\rangle$ denotes the standard Euclidean inner-product between $ \mathbf{x}$ and $\mathbf{c}$.\\ \indent To evaluate the quality of a linear code, several classical bounds on the minimum distance have been established. The Hamming bound (or sphere-packing bound) states that for a code correcting $\left \lfloor (d-1)/2  \right \rfloor$ errors, the spheres of radius $t$ around codewords must be disjoint, leading to the inequality: \begin{equation}\label{Hamming bound}
	\sum_{i=0}^{t}\binom{n}{i}(q-1)^{i} \leq q^{n-k}.
\end{equation} 
Let $a$ be an integer and $\gcd(a,q)=1$. The $q$-cyclotomic cosets modulo $a$ are defined as follows:
$$C_s^{(a,q)}=\left\lbrace s,sq,\dots sq^{t-1} \right\rbrace, $$
where $t$ is the smallest positive integer such that $s\equiv sq^t \mod a$ and $0\le s \le a-1$. Abbreviated as $C_{s}$. \\ \indent Given an element $\mathbf{a}=(a_1,\dots ,a_r)\in \mathbb{Z}^r$ and $\gcd(a_i,q)=1, 1\le i \le r$. We define $r$-dimensional $q$-cyclotomic cosets modulo $\mathbf{a}$ as:
$$
C_{\mathbf{s}}^{(\mathbf{a},q)}=\left\lbrace \mathbf{s},\mathbf{s}q, \dots ,\mathbf{s}q^{t-1} \right\rbrace, 
$$ where $\mathbf{s}=(s_1,s_2,\dots ,s_r)$ and $t$ is the smallest positive integer such that $\mathbf{s}=(s_1q^t\mod a_1,\dots ,s_rq^t\mod a_r)$. Abbreviated as $C_{\mathbf{s}}$. It is easy to see that $$\lvert C_{\mathbf{s}}\rvert=\text{lcm}(\lvert C_{s_1}\rvert,\lvert C_{s_2}\rvert, \dots, \lvert C_{s_r}\rvert). $$ We recall two fundamental lemmas concerning the construction of new linear codes from existing ones.
\begin{lemma}\label{ConstructionX}(Construction X, \cite{bib10}) Let $\mathcal{C}_1$ and $\mathcal{C}_2$ be two linear codes with parameters $[n,k_1,d_1]_q$ and $[n,k_2,d_2]_q$, respectively. If $\mathcal{C}_1\subset \mathcal{C}_2$ and $k_1< k_2$, using a third code $\mathcal{C}_3$ with parameters $[n_3,k_2-k_1,d_3]_q$, then a code with parameters $[n+n',k_2,\ge \min\left\lbrace d_1,d_2+d_3\right\rbrace]_q $ can be obtained.	
\end{lemma}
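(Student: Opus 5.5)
We prove Construction X directly by writing down a generator matrix and bounding the weight of every nonzero codeword, splitting into two cases according to whether the codeword comes from $\mathcal{C}_1$. (The statement writes $n'$ for the length of $\mathcal{C}_3$; we read it as $n_3$.)

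\textbf{Step 1: the generator matrix.} Since $\mathcal{C}_1\subset\mathcal{C}_2$, choose a basis $\mathbf{g}_1,\dots,\mathbf{g}_{k_1}$ of $\mathcal{C}_1$ and extend it by vectors $\mathbf{h}_1,\dots,\mathbf{h}_{k_2-k_1}$ to a basis of $\mathcal{C}_2$. Write $G_1$ for the matrix with rows $\mathbf{g}_i$ and $H$ for the matrix with rows $\mathbf{h}_j$. Let $G_3$ be a generator matrix of $\mathcal{C}_3$; it is $(k_2-k_1)\times n_3$ of full rank. Define $\mathcal{C}$ to be the code generated by
$$
\begin{pmatrix} G_1 & 0\\ H & G_3\end{pmatrix}.
$$
Thus every codeword has the form $(\mathbf{u}G_1+\mathbf{v}H,\ \mathbf{v}G_3)$ with $\mathbf{u}\in\mathbb{F}_q^{k_1}$ and $\mathbf{v}\in\mathbb{F}_q^{k_2-k_1}$. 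Its length is $n+n_3$.

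\textbf{Step 2: dimension.} The $k_2$ rows are independent, because the left blocks alone are a basis of $\mathcal{C}_2$. Hence $\dim\mathcal{C}=k_2$.

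\textbf{Step 3: minimum distance.} Take a nonzero codeword $\mathbf{c}=(\mathbf{u}G_1+\mathbf{v}H,\ \mathbf{v}G_3)$.

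\emph{Case $\mathbf{v}=0$.} Then $\mathbf{u}\neq 0$, and $\mathbf{c}=(\mathbf{u}G_1,0)$ with $\mathbf{u}G_1$ a nonzero codeword of $\mathcal{C}_1$. So $\mathrm{wt}(\mathbf{c})\ge d_1$.

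\emph{Case $\mathbf{v}\neq 0$.} Since $G_3$ has full row rank, $\mathbf{v}G_3$ is a nonzero codeword of $\mathcal{C}_3$, so its weight is at least $d_3$. The left part $\mathbf{u}G_1+\mathbf{v}H$ lies in $\mathcal{C}_2$, and it is nonzero because the $\mathbf{h}_j$ are independent modulo $\mathcal{C}_1$. So its weight is at least $d_2$. Hence $\mathrm{wt}(\mathbf{c})\ge d_2+d_3$.

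Together the two cases give minimum distance at least $\min\{d_1,d_2+d_3\}$.

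The only delicate point is the second case. There we need both that $\mathbf{v}G_3\neq 0$, which comes from the full rank of $G_3$, and that the left block is nonzero, which comes from choosing the $\mathbf{h}_j$ as a complement of $\mathcal{C}_1$ in $\mathcal{C}_2$. Everything else is routine.
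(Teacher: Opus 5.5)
Your proof is correct: the block generator matrix $\begin{pmatrix} G_1 & 0\\ H & G_3\end{pmatrix}$, together with the case split on whether $\mathbf{v}=0$, is exactly the standard argument for Construction X. Reading the paper's $n'$ as $n_3$ is also right. The paper gives no proof of its own; it cites the lemma from the literature, where it is proved by this same construction.
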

\begin{lemma}\label{PSE}\cite{bib3}
	Let $\mathcal{C}$ be an $[n,k,d]_q$ code. Then we can construct an $[n-1,k,d-1]_q$ code by puncturing, an $[n-1,k-1,d]_q$ code by shortening and an $[n+1,k,d]_q$ code by extending.
\end{lemma}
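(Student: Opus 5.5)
We prove the three claims of \Cref{PSE} separately. Each one is an explicit coordinate operation on $\mathcal{C}$, and in each case we check that the result is linear, has the stated length, dimension and minimum distance. Throughout, assume $d\ge 2$, and $n\ge 2$ where needed. These are the standing conventions under which the statement is meant; for shortening we also assume $k\ge 1$.

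\emph{Puncturing.} Take a minimum-weight codeword $\mathbf{c}$ and choose a coordinate $i$ with $c_i\neq 0$. Define $\mathcal{C}^{(i)}$ as the image of $\mathcal{C}$ under the linear map $\pi_i$ that deletes coordinate $i$.

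The kernel of $\pi_i$ consists of codewords supported only on $\{i\}$. These have weight at most $1<d$, so the kernel is trivial, $\pi_i$ is injective, and $\dim\mathcal{C}^{(i)}=k$. Deleting one coordinate lowers every weight by at most one, so the minimum distance is $\ge d-1$. Since $\pi_i(\mathbf{c})$ has weight exactly $d-1$, the minimum distance equals $d-1$.

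\emph{Shortening.} Because $k\ge1$ and $\mathcal{C}\ne 0$, some coordinate $i$ is not identically zero on $\mathcal{C}$. The functional $\mathbf{x}\mapsto x_i$ on $\mathcal{C}$ is then surjective, so the subcode $\{\mathbf{c}\in\mathcal{C}: c_i=0\}$ has dimension $k-1$.

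Deleting coordinate $i$ from this subcode gives a code of length $n-1$. The map is weight-preserving on the subcode, so the dimension stays $k-1$ and the minimum distance is $\ge d$. If equality $=d$ is wanted rather than $\ge d$, choose $i$ outside the support of a minimum-weight word; this is possible when $d<n$. Alternatively, one can weaken the minimum distance by the usual remark that a code with minimum distance $\ge d$ yields a code with minimum distance exactly $d$ after modification.

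\emph{Extending.} Append one coordinate. Over $\mathbb{F}_q$ take the overall parity check, $c_{n+1}=-\sum_{j} c_j$. Alternatively, append a zero coordinate, which trivially preserves the length increase, the dimension $k$, and every weight.

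The parity extension is an injective linear map, since the original coordinates are retained. Weights do not decrease under it, so the minimum distance is $\ge d$.

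The only delicate point in the whole argument is getting \emph{exact} equality of the minimum distance, and it arises in shortening and extending. The plan is to handle it through the choice of coordinate described above, or else to adopt the standard interpretation that parameters $[n,k,d]$ mean minimum distance at least $d$.
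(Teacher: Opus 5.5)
The paper gives no proof of this lemma; it quotes it from \cite{bib3}, and your argument is the standard textbook one. Under the usual convention that the third parameter is a lower bound, the proof is complete. Your puncturing at a coordinate in the support of a minimum-weight word gives exactly $d-1$, and your zero-coordinate extension gives exactly $d$. So extending is not actually a delicate case.

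The step that fails as written is the exactness fix for shortening. The coordinate $i$ must lie outside $\mathrm{supp}(\mathbf{c})$ for a minimum-weight word $\mathbf{c}$. It must \emph{also} be a coordinate on which $\mathcal{C}$ is not identically zero; otherwise $\{\mathbf{x}\in\mathcal{C}: x_i=0\}=\mathcal{C}$ and the dimension does not drop. You never show these two requirements are compatible, and $d<n$ is not what guarantees it. For example, take the binary code spanned by $(1,1,0,0,0)$ and $(0,0,1,1,0)$, with $\mathbf{c}=(1,1,0,0,0)$ and $i=5$. Then $d=2<5$, yet your recipe gives a $[4,2,2]$ code, not a $[4,1,2]$ code.

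The repair is this. For $k\ge 2$, $\mathrm{supp}(\mathcal{C})$ always strictly contains $\mathrm{supp}(\mathbf{c})$. Otherwise, take $\mathbf{c}'\in\mathcal{C}\setminus\mathbb{F}_q\mathbf{c}$ and $j\in\mathrm{supp}(\mathbf{c})$; the word $\mathbf{c}'-(c'_j/c_j)\mathbf{c}$ would be a nonzero codeword of weight at most $d-1$. Any $i\in\mathrm{supp}(\mathcal{C})\setminus\mathrm{supp}(\mathbf{c})$ then gives dimension $k-1$. It also keeps $\mathbf{c}$, of weight $d$, in the shortened code, so the distance is exactly $d$.

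Your fallback ``usual remark'' is true for linear codes, but you should prove it rather than invoke it. Take a basis containing a minimum-weight word $\mathbf{g}_1$ and set one nonzero entry of $\mathbf{g}_1$ to zero.
\begin{itemize}
\item The dimension stays $k$: otherwise a weight-one vector would be a codeword of $\mathcal{C}$, which is impossible for $d\ge 2$.
\item Every nonzero word of the new code differs from a nonzero word of $\mathcal{C}$ in at most one position, so its weight is at least $d-1$.
\item The modified $\mathbf{g}_1$ has weight exactly $d-1$.
\end{itemize}
So the minimum distance becomes exactly $d-1$, and repeating this brings any distance $\ge d$ down to exactly $d$.
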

\subsection*{Some notations}\label{notations}
\begin{itemize}
	\item $\varepsilon_k$: the primitive $k$-th root of unity in $\mathbb{F}_{q^t}$, where $t$ is the smallest positive integer such that $k\mid (q^t-1)$;
	\item $\mathbb{Z}_{k}$:  the ring of integers modulo $k$;
	\item $\mathbb{Z}$: the ring of integers;
	\item $\mathsf{C}_k$: the cyclic group of order $k$.
	\item $M_{m\times n}(\mathbb{F}_q)$: the set of all $m\times n$ matrices over $\mathbb{F}_q$.
\end{itemize}
\section{Abelian codes}
Abelian codes are an important family of algebraic codes in coding theory, belonging to the broader family of group algebra codes. Defined over the group algebra of a finite abelian group, they generalize classical cyclic codes by leveraging the algebraic structure of abelian groups. Let $G$ be a finite abelian group. The group algebra $\mathbb{F}_qG$ consists of formal linear combinations of group elements:
$$\mathbb{F}_qG=\left\lbrace \sum_{g\in G}\alpha_gg\mid \alpha_g\in \mathbb{F}_q \right\rbrace,$$ with addition and multiplication defined component-wise via field operations and the group operations, respectively. An \textbf{abelian code} is actually an ideal in $\mathbb{F}_qG$. This algebra forms the foundation for constructing abelian codes.  By Maschke’s theorem, $\mathbb{F}_qG$ is semisimple if the field characteristic $p$ does not divide $\lvert G\rvert$. In this case, all ideals in $\mathbb{F}_qG$ are principal. The structure theorem for finite abelian groups states that every such group decomposes into a direct product of cyclic groups:
$$G\cong \mathsf{C}_{a_1}\times \mathsf{C}_{a_2}\times \cdots \times \mathsf{C}_{a_r}, 1\le i \le r.$$   Let $R_{q}^{\mathbf{a}}[\mathbf{x}]=\mathbb{F}_q[x_1,\dots,x_r]/< x_i^{a_i}-1>$, where $\mathbf{a}=(a_1,a_2,\dots ,a_r)$, $1\le i \le r$. Then $R_{q}^{\mathbf{a}}[\mathbf{x}]$ is a multivariate polynomial residue class ring and it is ring isomorphic to the group algebra $\mathbb{F}_qG$: $$R_{q}^{\mathbf{a}}[\mathbf{x}]\cong \mathbb{F}_qG.$$  Let $g_{1+i_1+\sum_{a=2}^{r}(\prod_{s=1}^{a-1}a_s)i_{a}}=x_1^{i_1}x_2^{i_2}\dots x_r^{i_r}$ and $n=\lvert G\rvert$, where $0\le i_j \le a_j-1, 1\le j\le r$. There is a natural bijection between $R_{q}^{\mathbf{a}}[\mathbf{x}]$ and $\mathbb{F}_q^{n}$:
$$
\begin{array}{ll}
	\phi :\mathbb{F}_{q}^{n} & \rightarrow R_{q}^{\mathbf{a}}[\mathbf{x}] \\
	\left(c_{1}, c_{2}, \ldots, c_{n}\right) & \mapsto c_{1}g_1+c_{2} g_2+\cdots+c_{n} g_{n}.
\end{array}
$$ 
It is well-known that a cyclic code over $\mathbb{F}_q$ of length $n$ is actually a principal ideal of $\mathbb{F}_q[x]/ \left\langle x^n-1 \right\rangle $ (see Chapter 4 in \cite{bib3}).  Abelian codes are ideals in $R_{q}^{\mathbf{a}}[\mathbf{x}]$. Let $f=\sum_{i=1}^{n}c_ig_i\in R_{q}^{\mathbf{a}}[\mathbf{x}]$ and $L=\left\lbrace g_if \right\rbrace_{i=1}^{n} $, then $\text{Span}(L)$ is a principal ideal of $R_{q}^{\mathbf{a}}[\mathbf{x}]$. Let 
$$
G_{f}=\begin{pmatrix}
	\phi^{-1}(g_1f)\\
	\phi^{-1}(g_2f)\\
	\vdots \\
	\phi^{-1}(g_nf)\\
\end{pmatrix}
$$ and the code generated by the row space of the matrix $G_{f}$ is denoted as $\left\langle f \right\rangle $. Thus under the map $\phi$, we can consider $\left\langle f \right\rangle $ as a principal ideal of $R_{q}^{\mathbf{a}}[\mathbf{x}]$. Actually $G_{f}$ is a block circulant matrix, i.e. 
$$G_{f}=\begin{pmatrix}
	A_1 &A_2  &\dots  & A_{a_r}\\
	A_{a_r} &A_1  &\dots  &A_{a_r-1} \\
	\vdots & \vdots & \vdots & \vdots\\
	A_2& A_3 & \dots &A_{1}
\end{pmatrix},$$ where $A_j$ is also a block circulant matrix, $1\le j \le a_r$. When $r=2$, $$
A_{j}=\begin{pmatrix}
	c_{a_1 (j-1)+1} & c_{a_1 (j-1)+2} & \ldots & c_{a_1j} \\
	c_{a_1j} & c_{a_1 (j-1)+1} & \ldots & c_{a_1 (j-1)+a_1-1} \\
	\vdots & \vdots & \vdots & \vdots \\
	c_{a_1 (j-1)+2} & c_{a_1 (j-1)+3} & \cdots & c_{a_1 (j-1)+1}
\end{pmatrix}$$$$1 \leq j \leq a_1 .$$
The zeros and defining set of $f$ are defined as follow:
\begin{definition}\label{def3.1}
	Let $\varepsilon_k$ be the primitive $k$-th root of unity over $\mathbb{F}_q$, $\mathbf{a}=(a_1,\dots,a_r)$. For $f=\sum c_{i_1,\dots ,i_r}x_1^{i_1}\cdots x_r^{i_r}\in R_{q}^{\mathbf{a}}[\mathbf{x}]$, the zeros of $\left\langle f \right\rangle$ is defined as:
	$$
	\mathcal{Z}(f)=\left\lbrace (\varepsilon_{a_1}^{i_1},\dots,\varepsilon_{a_r}^{i_r})\mid f(\varepsilon_{a_1}^{i_1},\dots,\varepsilon_{a_r}^{i_r})=0, 0\le i_k \le a_k-1\right\rbrace.	
	$$	
	The defining set of $\left\langle f \right\rangle$ is defined as:
	$$
	\mathcal{D}(f)=\left\lbrace (i_1,\dots,i_r)\mid f(\varepsilon_{a_1}^{i_1},\dots,\varepsilon_{a_r}^{i_r})=0, 0\le i_k \le a_k-1 \right\rbrace.
	$$
	The weight of $f$ is defined as:
	$$
	\text{wt}(f)=\lvert \left\lbrace (i_1,\dots ,i_r)\mid c_{i_1,\dots ,i_r}\ne 0 \right\rbrace \rvert.
	$$
\end{definition}
\begin{theorem}(\cite{bib1})
	Let $f\in R_{q}^{\mathbf{a}}[\mathbf{x}]$, if $\gcd(a_k,q)=1, 1\le k\le r$, then the dimension of $\left\langle f \right\rangle$  is equal to $\prod_{k=1}^ra_k-\lvert\mathcal{Z}(f)\rvert$.
\end{theorem}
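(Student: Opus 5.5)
The plan is to prove the statement via the discrete Fourier transform, i.e.\ the Chinese Remainder Theorem decomposition of $R_{q}^{\mathbf{a}}[\mathbf{x}]$ after extending scalars. Let $\mathbb{F}=\mathbb{F}_{q^t}$ be a finite extension containing all the $\varepsilon_{a_k}$; such an extension exists because $\gcd(a_k,q)=1$. Define the evaluation map
$$\Psi: \mathbb{F}[x_1,\dots,x_r]/\langle x_k^{a_k}-1\rangle \to \mathbb{F}^{\prod a_k},\qquad h\mapsto \bigl(h(\varepsilon_{a_1}^{i_1},\dots,\varepsilon_{a_r}^{i_r})\bigr)_{0\le i_k\le a_k-1}.$$
Since $x_k^{a_k}-1=\prod_{i}(x_k-\varepsilon_{a_k}^{i})$ has distinct roots (again by $\gcd(a_k,q)=1$), the ideals $\langle x_k-\varepsilon_{a_k}^{i}\rangle$ are pairwise comaximal in one variable. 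The CRT applied iteratively, or directly to the maximal ideals $\langle x_1-\varepsilon^{i_1},\dots,x_r-\varepsilon^{i_r}\rangle$ whose intersection is $\langle x_k^{a_k}-1\rangle$, shows that $\Psi$ is a ring isomorphism. Alternatively, injectivity follows from a tensor-product/Vandermonde argument: the evaluation matrix is a Kronecker product of invertible Vandermonde matrices, and both sides have dimension $\prod a_k$.

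Next, compute the ideal generated by $f$ over $\mathbb{F}$. Since $\Psi$ is a ring isomorphism, $\Psi(\langle f\rangle_{\mathbb{F}})=\Psi(f)\cdot\mathbb{F}^{\prod a_k}$. This is the coordinate subspace supported on the positions where $f(\varepsilon^{\mathbf{i}})\neq 0$, so its $\mathbb{F}$-dimension is $\prod a_k-\lvert\mathcal{Z}(f)\rvert$. Here I use that the points $(\varepsilon_{a_1}^{i_1},\dots)$ are distinct for distinct index tuples, so $\lvert\mathcal{Z}(f)\rvert=\lvert\mathcal{D}(f)\rvert$.

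The remaining step, which I expect to be the only delicate one, is descending from $\mathbb{F}$ to $\mathbb{F}_q$. The code $\langle f\rangle$ is the $\mathbb{F}_q$-span of $\{g_if\}$, and $\langle f\rangle_{\mathbb{F}}$ is the $\mathbb{F}$-span of the same vectors, because the ideal generated by $f$ over $\mathbb{F}$ is spanned by the monomial multiples $g_if$. The rank of a matrix with entries in $\mathbb{F}_q$ is unchanged under field extension; for example, row reduction happens entirely in $\mathbb{F}_q$. Applied to $G_f$, this gives $\dim_{\mathbb{F}_q}\langle f\rangle=\operatorname{rank}G_f=\dim_{\mathbb{F}}\langle f\rangle_{\mathbb{F}}=\prod_{k=1}^r a_k-\lvert\mathcal{Z}(f)\rvert$, which is the statement.
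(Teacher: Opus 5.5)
The paper does not prove this theorem; it quotes it from \cite{bib1}, so there is no in-paper argument to compare your route against. Your argument is correct and complete on its own, and it is the standard discrete Fourier transform proof. It has three steps: $\Psi$ is an isomorphism because its matrix is a Kronecker product of invertible Vandermonde matrices on the distinct nodes $\varepsilon_{a_k}^{i}$; the ideal $fR_{\mathbb{F}}$ maps onto the coordinate subspace supported on the non-zeros of $f$; and $\dim_{\mathbb{F}_q}\langle f\rangle=\operatorname{rank}G_f$ does not change when $G_f$ is viewed over $\mathbb{F}$. This is the same transform the paper itself uses later, in the lemma on minimal polynomials. There the map $\mathcal{F}$ is, up to the invertible factor $\prod_k a_k$ and after identifying vectors with coefficient lists, the inverse of your $\Psi$.

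There are two small points to tighten. First, in the paper's notation $\varepsilon_{a_k}$ lives in $\mathbb{F}_{q^{t_k}}$ with $t_k$ depending on $k$. The evaluation $f(\varepsilon_{a_1}^{i_1},\dots,\varepsilon_{a_r}^{i_r})$ only makes sense once you fix one field, e.g.\ the splitting field of $x^{L}-1$ with $L=\operatorname{lcm}(a_1,\dots,a_r)$, and take all $\varepsilon_{a_k}$ there. This choice does not affect $\lvert\mathcal{Z}(f)\rvert$, since any choice of primitive roots parametrizes the same finite set of points $(z_1,\dots,z_r)$ with $z_k^{a_k}=1$. Second, the CRT phrasing ``whose intersection is $\langle x_k^{a_k}-1\rangle$'' is asserted rather than shown. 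The Vandermonde dimension count you give as the alternative is the cleaner justification and should be the main one.

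Your framework also yields the next cited result from \cite{bib1} almost for free. Put $N=\prod_k a_k$. If $\mathcal{Z}(f_1)=\mathcal{Z}(f_2)$, then $\{g_if_1\}$ and $\{g_if_2\}$ have the same $\mathbb{F}$-span. Since $(\mathbb{F}W)\cap\mathbb{F}_q^{N}=W$ for every $\mathbb{F}_q$-subspace $W\subseteq\mathbb{F}_q^{N}$, it follows that $\langle f_1\rangle=\langle f_2\rangle$.
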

For the remainder of our discussion, we  assume that $\gcd(\prod_{k=1}^ra_k,q)=1$. In classical cyclic code theory we know that two cyclic codes must be the same if their generator polynomials have the same zeros(\cite{bib3}). In fact, abelian codes  have a similar conclusion.
\begin{theorem}(\cite{bib1})
	Let $f_1,f_2\in R_{q}^{\mathbf{a}}[\mathbf{x}]$. If  $\mathcal{Z}(f_1)=\mathcal{Z}(f_2)$, then $\left\langle f_1 \right\rangle=\left\langle f_2 \right\rangle $.
\end{theorem}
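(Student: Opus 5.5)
The approach is to use the semisimple structure of $R_{q}^{\mathbf{a}}[\mathbf{x}]$ under the standing assumption $\gcd(\prod_{k=1}^r a_k,q)=1$. The plan is to show that an ideal $\langle f\rangle$ is exactly the set of elements vanishing on $\mathcal{Z}(f)$:
$$\langle f\rangle = I(\mathcal{Z}(f)) := \{h\in R_{q}^{\mathbf{a}}[\mathbf{x}] \mid h(\mathbf{z})=0 \ \text{for all } \mathbf{z}\in \mathcal{Z}(f)\}.$$
The right-hand side depends only on $\mathcal{Z}(f)$. So once this identity holds, $\mathcal{Z}(f_1)=\mathcal{Z}(f_2)$ immediately gives $\langle f_1\rangle=\langle f_2\rangle$.

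First I would prove the inclusion $\langle f\rangle\subseteq I(\mathcal{Z}(f))$. Every element of $\langle f\rangle$ has the form $hf$. For $\mathbf{z}=(\varepsilon_{a_1}^{i_1},\dots,\varepsilon_{a_r}^{i_r})$ we have $z_k^{a_k}=1$, so evaluation at $\mathbf{z}$ is well defined on the quotient ring and is a ring homomorphism. Hence $(hf)(\mathbf{z})=h(\mathbf{z})f(\mathbf{z})=0$ for every $\mathbf{z}\in\mathcal{Z}(f)$.

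Next I would compare dimensions. The points $\mathbf{z}$ range over the full grid $\prod_k\{\varepsilon_{a_k}^{i}\}$, and $\gcd(a_k,q)=1$ makes each $x_k^{a_k}-1$ separable. The Chinese remainder theorem then gives an isomorphism
$$R_{q}^{\mathbf{a}}[\mathbf{x}]\otimes\mathbb{F}_{q^t}\cong \mathbb{F}_{q^t}^{\,n},\qquad h\mapsto (h(\mathbf{z}))_{\mathbf{z}},$$
where $\mathbb{F}_{q^t}$ contains all the $\varepsilon_{a_k}$. Equivalently, this evaluation map is the multivariate discrete Fourier transform, whose matrix is a Kronecker product of Vandermonde matrices and is therefore invertible. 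It follows that the conditions $h(\mathbf{z})=0$ for $\mathbf{z}\in\mathcal{Z}(f)$ are linearly independent over $\mathbb{F}_{q^t}$, so the extension-field solution space has dimension $n-|\mathcal{Z}(f)|$.

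The step I expect to be the main obstacle is descending this count to $\mathbb{F}_q$. For this I would use that $\mathcal{Z}(f)$ is closed under the Frobenius map $\mathbf{z}\mapsto\mathbf{z}^q$, since $f$ has coefficients in $\mathbb{F}_q$; in other words, $\mathcal{D}(f)$ is a union of $r$-dimensional $q$-cyclotomic cosets. The system defining $I(\mathcal{Z}(f))$ is therefore Galois-stable. By Galois descent (Hilbert 90 for vector spaces), its $\mathbb{F}_q$-solution space has $\mathbb{F}_q$-dimension $n-|\mathcal{Z}(f)|$. Alternatively, one can avoid descent by working with idempotents. Then $\langle f\rangle=\langle e\rangle$ for the idempotent $e$ that equals $1$ off $\mathcal{Z}(f)$ and $0$ on it; the identity $e=uf$ comes from CRT componentwise inverses, and $u$ has coefficients in $\mathbb{F}_q$ by Frobenius invariance.

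Finally, the preceding theorem of the paper says $\dim\langle f\rangle = n-|\mathcal{Z}(f)|$. Since this matches $\dim I(\mathcal{Z}(f))$, the inclusion from the first step is an equality, and the theorem follows.
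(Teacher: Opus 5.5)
The paper gives no proof of this statement; it quotes it from \cite{bib1}. So there is no in-paper argument to compare against, and your proposal has to stand on its own. It does.

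\begin{itemize}
\item The inclusion $\langle f\rangle\subseteq I(\mathcal{Z}(f))$ holds because evaluation at grid points is a ring homomorphism on the quotient ring.
\item The DFT/CRT isomorphism makes the vanishing conditions independent over $\mathbb{F}_{q^t}$.
\item Combined with the previously stated formula $\dim\langle f\rangle=n-\lvert\mathcal{Z}(f)\rvert$, this yields $\langle f\rangle=I(\mathcal{Z}(f))$. That ideal depends only on $\mathcal{Z}(f)$, and the identity is a stronger, more informative statement than the theorem itself.
\end{itemize}

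\textbf{The descent step is not needed.} You only need the upper bound $\dim_{\mathbb{F}_q} I(\mathcal{Z}(f))\le n-\lvert\mathcal{Z}(f)\rvert$. If $W$ is the $\mathbb{F}_{q^t}$-solution space, then $I(\mathcal{Z}(f))=W\cap\mathbb{F}_q^n$. Vectors of $\mathbb{F}_q^n$ that are independent over $\mathbb{F}_q$ stay independent over $\mathbb{F}_{q^t}$, since a nonzero maximal minor remains nonzero. Hence $\dim_{\mathbb{F}_q}(W\cap\mathbb{F}_q^n)\le\dim W$ holds automatically. Together with your inclusion and the cited dimension formula, this already forces equality, with no Galois descent and no appeal to Frobenius-stability.

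\textbf{The idempotent route is the stronger one.} It is the most self-contained argument and is worth making the main one.
\begin{itemize}
\item Define $u$ by $u(\mathbf{z})=f(\mathbf{z})^{-1}$ off $\mathcal{Z}(f)$ and $u(\mathbf{z})=0$ on it.
\item Frobenius-closedness of $\mathcal{Z}(f)$ (Theorem~\ref{Th 3.3}), together with invertibility of $n$ in $\mathbb{F}_q$, gives $u$ coefficients in $\mathbb{F}_q$.
\item Then $e=uf\in\langle f\rangle$ and $f=ef\in\langle e\rangle$, so $\langle f\rangle=\langle e\rangle$, where $e$ is determined by $\mathcal{Z}(f)$ alone.
\end{itemize}
This proves the theorem directly, without the separately cited dimension theorem. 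Extending scalars to $e(R_q^{\mathbf{a}}[\mathbf{x}]\otimes\mathbb{F}_{q^t})$ also recovers that dimension formula as a byproduct.
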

Similarly we have the following conclusion similar to that in the cyclic codes.
\begin{theorem}\label{Th 3.3}
	Let $f\in R_{q}^{\mathbf{a}}[\mathbf{x}]$. If $(\varepsilon_{a_1}^{i_1},\dots,\varepsilon_{a_r}^{i_r})\in \mathcal{Z}(f)$, so is $(\varepsilon_{a_1}^{qi_1},\dots,\varepsilon_{a_r}^{qi_r})$.
\end{theorem}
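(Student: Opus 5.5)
The plan is to use the Frobenius map, exactly as in the cyclic case. Write $f=\sum c_{i_1,\dots,i_r}x_1^{i_1}\cdots x_r^{i_r}$ with all coefficients $c_{i_1,\dots,i_r}\in\mathbb{F}_q$. Work in a common extension field $\mathbb{F}_{q^t}$ containing every $\varepsilon_{a_k}$; for instance, take $t$ with $\operatorname{lcm}(a_1,\dots,a_r)\mid q^t-1$, which exists since $\gcd(\prod a_k,q)=1$. The key tool is the map $\sigma:\mathbb{F}_{q^t}\to\mathbb{F}_{q^t}$, $z\mapsto z^q$. It is a ring homomorphism: additivity holds because the characteristic divides the binomial coefficients $\binom{q}{j}$ for $0<j<q$, and multiplicativity is obvious. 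Moreover, $\sigma$ fixes $\mathbb{F}_q$ pointwise, since $c^q=c$ for $c\in\mathbb{F}_q$.

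The steps are then as follows. Suppose $f(\varepsilon_{a_1}^{i_1},\dots,\varepsilon_{a_r}^{i_r})=0$. Apply $\sigma$ to both sides. Since $\sigma$ is a ring homomorphism fixing the coefficients, we get
$$0=\Big(\sum c_{j_1,\dots,j_r}\prod_k \varepsilon_{a_k}^{i_kj_k}\Big)^q=\sum c_{j_1,\dots,j_r}\prod_k \varepsilon_{a_k}^{qi_kj_k}=f(\varepsilon_{a_1}^{qi_1},\dots,\varepsilon_{a_r}^{qi_r}).$$
Hence $(\varepsilon_{a_1}^{qi_1},\dots,\varepsilon_{a_r}^{qi_r})$ is a zero of $f$.

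One small point of well-definedness remains, since $f$ lives in the quotient ring $R_q^{\mathbf{a}}[\mathbf{x}]$. Evaluating at points whose $k$-th coordinate is an $a_k$-th root of unity does not depend on the chosen representative, because every generator $x_k^{a_k}-1$ of the ideal vanishes there. Also, $\varepsilon_{a_k}^{qi_k}=\varepsilon_{a_k}^{qi_k \bmod a_k}$. So the new point has the form required in Definition~\ref{def3.1}, with exponents $qi_k \bmod a_k$, and its index vector is the next element of the $r$-dimensional cyclotomic coset $C_{\mathbf{s}}$.

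There is no real obstacle here. The only step needing care is justifying $(u+v)^q=u^q+v^q$ in characteristic $p$ with $q=p^m$. This follows by iterating the case $q=p$, or equivalently from the fact that the Frobenius automorphism of $\mathbb{F}_{q^t}$ fixes exactly $\mathbb{F}_q$.
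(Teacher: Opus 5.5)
Your proof is correct and takes essentially the same approach as the paper: both raise the identity $f(\varepsilon_{a_1}^{i_1},\dots,\varepsilon_{a_r}^{i_r})=0$ to the $q$-th power and use that $z\mapsto z^q$ is a ring homomorphism fixing the coefficients in $\mathbb{F}_q$. Your additional remarks on a common extension field and on well-definedness of evaluation in the quotient ring fill in details the paper leaves implicit.
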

\begin{proof}
	Assume that $$f=\sum_{(j_1,\dots ,j_r)\in \prod_{k=1}^r\mathbb{Z}_{a_k}}c_{j_1,\dots ,j_r}x_1^{j_1}\cdots x_r^{j_r}\in R_{q}^{\mathbf{a}}[\mathbf{x}].$$ Then 
	$$
	f(\varepsilon_{a_1}^{i_1},\dots,\varepsilon_{a_r}^{i_r})=\sum_{(j_1,\dots ,j_r)\in \prod_{k=1}^r\mathbb{Z}_{a_k}}c_{j_1,\dots ,j_r}\varepsilon_{a_1}^{i_1j_1}\cdots \varepsilon_{a_r}^{i_rj_r}=0.
	$$
	And then
	$$
	\begin{aligned}
		f(\varepsilon_{a_1}^{qi_1},\dots ,\varepsilon_{a_r}^{qi_r})&=\sum_{(j_1,\dots ,j_r)\in \prod_{k=1}^r\mathbb{Z}_{a_k}}c_{j_1,\dots ,j_r}\varepsilon_{a_1}^{qi_1j_1}\cdots \varepsilon_{a_r}^{qi_rj_r}\\ &=\sum_{(j_1,\dots ,j_r)\in \prod_{k=1}^r\mathbb{Z}_{a_k}}c^q_{j_1,\dots ,j_r}\varepsilon_{a_1}^{qi_1j_1}\cdots \varepsilon_{a_r}^{qi_rj_r}\\
		&=f^q\\
		&=0.
	\end{aligned}	
	$$	So we complete the proof.	
\end{proof}

If $\mathcal{D}(f) = C_{\mathbf{s}}$, we call $f$ the \textbf{minimal polynomial} of $(\varepsilon_{a_1}^{s_1},\dots,\varepsilon_{a_r}^{s_r}) $ over $\mathbb{F}_q$. In \cite{bib9}, the author gives a method to calculate  the minimum polynomials  of $(\varepsilon_m^i,\varepsilon_n^j) $ over $\mathbb{F}_2$. In fact, it can be generalized to $\mathbb{F}_q$. 
\begin{lemma}
	For given $\mathbf{a}=(a_1,a_2,\dots ,a_r)\in \mathbb{Z}^r$ and $\mathbf{s}=(s_1,s_2\dots ,s_r)\in \prod_{i=1}^{r}\mathbb{Z}_{a_i}$, where $\gcd(a_i,q)=1, 1\le i \le r$. Then there exists a polynomial $f\in R_q^{\mathbf{a}}[\mathbf{x}]$ such that 
	$$\mathcal{D}(f)=C_{\mathbf{s}}.$$
\end{lemma}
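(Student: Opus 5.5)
Since $\gcd(\prod_k a_k,q)=1$, I would build $f$ from the discrete Fourier transform on $G=\prod_k \mathsf{C}_{a_k}$, i.e.\ interpolate its prescribed values at all points $(\varepsilon_{a_1}^{i_1},\dots,\varepsilon_{a_r}^{i_r})$. Let $n=\prod_k a_k$ and let $t$ be the smallest integer with $\varepsilon_{a_k}\in\mathbb{F}_{q^t}$ for all $k$ (take $t$ so that $\mathrm{lcm}(a_1,\dots,a_r)\mid q^t-1$). For $\mathbf{i}=(i_1,\dots,i_r)$ write $\varepsilon^{\mathbf{i}}=(\varepsilon_{a_1}^{i_1},\dots,\varepsilon_{a_r}^{i_r})$ and $\mathbf{x}^{\mathbf{j}}=x_1^{j_1}\cdots x_r^{j_r}$. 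I will fix the target set $T=\prod_k\mathbb{Z}_{a_k}\setminus C_{\mathbf{s}}$ of exponents where $f$ must \emph{not} vanish, and aim for
$$f(\varepsilon^{\mathbf{i}})=\begin{cases}0,&\mathbf{i}\in C_{\mathbf{s}},\\ 1,&\mathbf{i}\in T.\end{cases}$$

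First, I define $f$ by the inverse transform
$$f=\frac{1}{n}\sum_{\mathbf{j}}\Big(\sum_{\mathbf{i}\in T}\prod_{k=1}^r\varepsilon_{a_k}^{-i_kj_k}\Big)\mathbf{x}^{\mathbf{j}}.$$
Here $1/n$ exists in $\mathbb{F}_q$ because $p\nmid n$. The orthogonality relation $\sum_{j=0}^{a-1}\varepsilon_a^{(i-i')j}=a\,\delta_{i,i'}$ holds in characteristic $p\nmid a$ for a primitive $a$-th root of unity. Taking the product over $k$ then gives $f(\varepsilon^{\mathbf{i}'})=1$ if $\mathbf{i}'\in T$ and $0$ otherwise. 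Hence $\mathcal{D}(f)=C_{\mathbf{s}}$ exactly, as required.

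Second, and this is the main point, I must check that $f$ has coefficients in $\mathbb{F}_q$ rather than only in $\mathbb{F}_{q^t}$. Raising the coefficient $c_{\mathbf{j}}=\frac1n\sum_{\mathbf{i}\in T}\prod_k\varepsilon_{a_k}^{-i_kj_k}$ to the $q$-th power gives
$$c_{\mathbf{j}}^q=\frac1n\sum_{\mathbf{i}\in T}\prod_k\varepsilon_{a_k}^{-qi_kj_k},$$
since Frobenius is additive and fixes $1/n\in\mathbb{F}_q$. The map $\mathbf{i}\mapsto q\mathbf{i}$ (componentwise mod $a_k$) is a permutation of $\prod_k\mathbb{Z}_{a_k}$ because each $\gcd(q,a_k)=1$. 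This permutation maps the cyclotomic coset $C_{\mathbf{s}}$ onto itself, so it also maps the complement $T$ onto itself. Reindexing the sum then gives $c_{\mathbf{j}}^q=c_{\mathbf{j}}$, so $c_{\mathbf{j}}\in\mathbb{F}_q$ and $f\in R_q^{\mathbf{a}}[\mathbf{x}]$.

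Alternatively, one could take the idempotent $e$ of the ideal corresponding to $C_{\mathbf{s}}$ and set $f=1-e$, but the explicit construction above is self-contained. The only delicate points are the choice of the extension degree $t$ and the invertibility of $n$ and of each $a_k$; both follow from the standing assumption $\gcd(\prod a_k,q)=1$.
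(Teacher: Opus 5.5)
Your proof is correct and follows essentially the same route as the paper, which also takes the inverse Fourier transform of the indicator of the complement of $C_{\mathbf{s}}$; the paper omits your $1/n$ normalization, so it obtains the nonzero value $\prod_i a_i$ off $C_{\mathbf{s}}$ instead of $1$. Your Frobenius argument that the coefficients lie in $\mathbb{F}_q$, using that $\mathbf{i}\mapsto q\mathbf{i}$ permutes the complement of $C_{\mathbf{s}}$, is a step the paper silently skips, so your version is the more complete one.
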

\begin{proof}
	Define a map 
	$$\begin{matrix}
		\mathcal{F}:R_q^{\mathbf{a}}[\mathbf{x}]\longmapsto R_q^{\mathbf{a}}[\mathbf{x}] \\
		g\longrightarrow \sum_{i_1=0}^{a_1-1}\dots\sum_{i_r=0}^{a_r-1}g(\varepsilon _{a_1}^{-i_1},\dots,\varepsilon _{a_r}^{-i_r})x_1^{i_1}\cdots x_r^{i_r},  
	\end{matrix}$$ which is actually an inverse discrete Fourier transform. Let $$f'=\sum_{(j_1,\dots,j_r)\in\prod_{j=1}^{r}\mathbb{Z}_{a_j}\setminus C_{\mathbf{s}}}x_1^{j_1}\cdots x_r^{j_r} .$$ Then 
	$$
	\begin{aligned}
		\mathcal{F}(f')&=\sum_{i_1=0}^{a_1-1}\dots\sum_{i_r=0}^{a_r-1}f'(\varepsilon _{a_1}^{-i_1},\dots,\varepsilon _{a_r}^{-i_r})x_1^{i_1}\cdots x_r^{i_r},
	\end{aligned}
	$$ we have 
	$$
	\begin{aligned}
		\mathcal{F}(f')(\varepsilon_{a_1}^{s_1},\dots ,\varepsilon_{a_r}^{s_r})&=\sum_{i_1,\dots,i_r}f'(\varepsilon _{a_1}^{-i_1},\dots,\varepsilon _{a_r}^{-i_r})\varepsilon_{a_1}^{s_1i_1}\cdots\varepsilon_{a_r}^{s_ri_r}\\ &=\sum_{i_1,\dots,i_r} \sum_{j_1,\dots,j_r}\varepsilon_{a_1}^{(s_1-j_1)i_1}\cdots \varepsilon_{a_r}^{(s_r-j_r)i_r}.
	\end{aligned}
	$$ Since $s_1-j_1,\dots s_r-j_r$ are not all 0, then $\mathcal{F}(f')(\varepsilon_{a_1}^{s_1},\dots ,\varepsilon_{a_r}^{s_r})=0$. That implies $C_{\mathbf{s}}\subset \mathcal{D}(\mathcal{F}(f'))$. Assume that $(s_1',\dots ,s_r')\notin C_{\mathbf{s}}$, then 
	$$
	\begin{aligned}
		\mathcal{F}(f')(\varepsilon_{a_1}^{s_1'},\dots ,\varepsilon_{a_r}^{s_r'}) &=\sum_{i_1,\dots,i_r} \sum_{j_1,\dots,j_r}\varepsilon_{a_1}^{(s_1'-j_1)i_1}\cdots \varepsilon_{a_r}^{(s_r'-j_r)i_r}\\&=\sum_{i_1,\dots,i_r}1\\ &=\prod_{i=1}^ra_i.
	\end{aligned}
	$$ Thus $(s_1',\dots ,s_r')\notin \mathcal{D}(\mathcal{F}(f'))$, and so $\mathcal{D}(\mathcal{F}(f'))=C_{\mathbf{s}}$.  Let $f=\mathcal{F}(f')$, the theorem is proved.
\end{proof}

\section{Shift bound}
In this section, we introduce a famous bound for abelian codes so called \textbf{shift bound} which was mentioned in \cite{bib8}. The shift bound for abelian codes was a natural generalization of Van Lint-Wilson bound (Page 154, \cite{bib3}) for cyclic codes. First, we give a generalized definition of \textit{independent sequence} in \cite{bib3}.
\begin{definition}
	Let $\mathcal{N}= \prod_{i=1}^r\mathbb{Z}_{k_i} $, $S\subset \mathcal{N}$, where $\mathbb{Z}_{k_i}$ denotes the ring of integers modulo $k_i$. A sequence $I_0,I_1,\dots $ of subsets of $\mathcal{N}$ is called an \textit{independent sequence with respect to $S$} if the following conditions hold:
	\begin{itemize}
		\item $I_0=\emptyset$;
		\item If $i> 0$, either $I_i=I_{i-1}\cup \left\lbrace \mathbf{b}\right\rbrace $  such that $I_{i-1}\subset S$ and $\mathbf{b}\in \mathcal{N}\setminus S$, or $I_i=\left\lbrace \mathbf{b} \right\rbrace +I_{i-1} $ and $\mathbf{b}\ne \mathbf{0}$.
	\end{itemize}
	If $I=I_i\subset \mathcal{N}$ for some $i$, then $I$  is independent with respect to $S$. 
\end{definition}
\begin{example}\label{ex1}
	Let $\mathcal{N}=\left\lbrace (i,j)\mid (i,j)\in \mathbb{Z}_3\times  \mathbb{Z}_3 \right\rbrace $ and $\mathcal{N}\setminus S=\left\lbrace (1,0),(2,0) \right\rbrace $, then we can construct an independent sequence with respect to $S$: 
	$$
	I_0=\emptyset,  
	$$
	$$I_1=I_0\cup \left\lbrace (1,0) \right\rbrace=\left\lbrace (1,0) \right\rbrace,$$$$I_2=I_1+\left\lbrace (0,1) \right\rbrace=\left\lbrace (1,1) \right\rbrace\subset S,$$
	$$I_3=I_2\cup \left\lbrace (1,0) \right\rbrace=\left\lbrace (1,1),(1,0) \right\rbrace, $$$$I_4=I_3+\left\lbrace (0,1) \right\rbrace=\left\lbrace (1,2),(1,1) \right\rbrace\subset S,$$
	$$
	I_5=I_4\cup \left\lbrace (1,0) \right\rbrace=\left\lbrace(1,2), (1,1),(1,0) \right\rbrace,
	$$
	$$I_6=I_5+\left\lbrace (2,0) \right\rbrace=\left\lbrace (0,2),(0,1),(0,0) \right\rbrace\subset S,$$
	$$
	I_7=I_6\cup \left\lbrace (1,0) \right\rbrace=\left\lbrace(0,2),(0,1),(0,0),(1,0) \right\rbrace,
	$$
	$$I_8=I_7+\left\lbrace (0,1) \right\rbrace=\left\lbrace (0,0),(0,2),(0,1),(1,1) \right\rbrace\subset S,$$
	$$
	I_9=I_8\cup \left\lbrace (1,0) \right\rbrace=\left\lbrace(0,0),(0,2),(0,1),(1,1),(1,0) \right\rbrace,
	$$
	$$I_{10}=I_9+\left\lbrace (0,1) \right\rbrace=\left\lbrace (0,1),(0,0),(0,2),(1,2),(1,1) \right\rbrace\subset S,$$
	$$
	I_{11}=I_{10}\cup \left\lbrace (1,0) \right\rbrace=\left\lbrace(0,1),(0,0),(0,2),(1,2),(1,1),(1,0) \right\rbrace.
	$$
\end{example}

\begin{theorem}\label{GBCH}
	Suppose that $f\in R_{q}^{\mathbf{a}}[\mathbf{x}] $. and let $I\subset S=\mathcal{D}(f)$ be independent with respect to $S$, where $\mathbf{a}\in \mathbb{Z}^r$. Then $\text{wt}(f)$ is at least $\lvert I\rvert+1.$  
\end{theorem}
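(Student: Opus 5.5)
The plan is to adapt the van Lint--Wilson rank argument (Page 154, \cite{bib3}) to the multivariate setting. Write $f=\sum_{\mathbf{j}\in J} c_{\mathbf{j}}\mathbf{x}^{\mathbf{j}}$, where $J$ is the support of $f$, so $|J|=\text{wt}(f)=w$ and every $c_{\mathbf{j}}\neq 0$. For each $\mathbf{j}\in J$ and $\mathbf{i}\in\mathcal{N}=\prod_k\mathbb{Z}_{a_k}$ put $\beta_{\mathbf{j}}^{\mathbf{i}}=\prod_k\varepsilon_{a_k}^{i_kj_k}$, which is a character value. Then $f(\varepsilon^{\mathbf{i}})=\sum_{\mathbf{j}\in J}c_{\mathbf{j}}\beta_{\mathbf{j}}^{\mathbf{i}}$. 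To a subset $A\subset\mathcal{N}$ attach the vectors $v_{\mathbf{i}}=(\beta_{\mathbf{j}}^{\mathbf{i}})_{\mathbf{j}\in J}\in\overline{\mathbb{F}}_q^{\,w}$ for $\mathbf{i}\in A$.

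I will prove by induction along the independent sequence $I_0,I_1,\dots$ the following claim: \emph{for every $i$, the vectors $\{v_{\mathbf{i}}:\mathbf{i}\in I_i\}$ are linearly independent and $|I_i|$ is as listed.} Since the vectors live in a $w$-dimensional space, the claim gives $|I|\le w$. That yields only $\text{wt}(f)\ge|I|$, so the stronger statement is needed. Following van Lint--Wilson, I instead consider the $w\times w$ diagonal matrix $D=\mathrm{diag}(c_{\mathbf{j}})$ and the strengthened claim: if $I\subset S$, then the vectors $v_{\mathbf{i}}$ ($\mathbf{i}\in I$) together with the vector $c=(c_{\mathbf{j}})$ span a space of dimension $|I|+1$. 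Equivalently, I work with the vectors $v_{\mathbf{i}}$ and show they are independent modulo a one-dimensional obstruction. The cleanest formulation is this: for every independent set $I$, the family $\{v_{\mathbf{i}}\}_{\mathbf{i}\in I}$ is linearly independent, and if moreover $I\subset S$, then it does not span anything containing a vector $u$ with $\langle u,c\rangle\neq0$. Since $\langle v_{\mathbf{i}},c\rangle=f(\varepsilon^{\mathbf{i}})=0$ for $\mathbf{i}\in S$, those vectors lie in the hyperplane $c^\perp$, which has dimension $w-1$. The final step then applies this to the last set $I=I_{i-1}\cup\{\mathbf{b}\}$ in the sequence, where $I_{i-1}\subset S$.

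The inductive steps are as follows. For a shift step $I_i=\mathbf{b}+I_{i-1}$, we have $v_{\mathbf{b}+\mathbf{i}}=\Lambda_{\mathbf{b}}v_{\mathbf{i}}$, where $\Lambda_{\mathbf{b}}=\mathrm{diag}(\beta_{\mathbf{j}}^{\mathbf{b}})$ is invertible. Hence linear independence, and hence rank, is preserved. For an adding step $I_i=I_{i-1}\cup\{\mathbf{b}\}$ with $I_{i-1}\subset S$ and $\mathbf{b}\notin S$, all $v_{\mathbf{i}}$ with $\mathbf{i}\in I_{i-1}$ lie in $c^\perp$ while $\langle v_{\mathbf{b}},c\rangle=f(\varepsilon^{\mathbf{b}})\neq0$. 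Therefore $v_{\mathbf{b}}$ is not in their span, and the family remains independent. This gives $|I|\le w$ for every independent $I$.

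To reach $|I|+1$, note that a nonempty independent set exists only after an adding step. Consider instead the maximal $I\subset S$ reached, say $I_{i-1}$: its vectors are independent inside $c^\perp$, so $|I_{i-1}|\le w-1$. Now $I$ in the theorem is contained in $S$, and a shift preserves size. So it suffices to know that any independent $I\subset S$ has independent vectors, all lying in $c^\perp$, which gives $|I|\le w-1$, i.e. $\text{wt}(f)\ge|I|+1$. This is exactly the hypothesis $I\subset S$ in the statement.

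The main obstacle is the bookkeeping in the definition. Shifts may wrap modulo the $a_k$, and the identity $v_{\mathbf{b}+\mathbf{i}}=\Lambda_{\mathbf{b}}v_{\mathbf{i}}$ must be checked to hold modulo each $a_k$ using $\varepsilon_{a_k}^{a_k}=1$. One must also confirm that $\mathcal{D}(f)$ is taken with respect to the same $\varepsilon_{a_k}$ used to define $\beta$. The case $f=0$ is excluded implicitly; if $f\neq0$, then $c\neq0$ and $c^\perp$ genuinely has dimension $w-1$.
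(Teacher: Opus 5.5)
Your argument is correct and is essentially the paper's proof. You induct along the independent sequence to show that the evaluation vectors $v_{\mathbf{i}}$, $\mathbf{i}\in I_i$, stay linearly independent: an invertible diagonal matrix handles shift steps, and $v_{\mathbf{b}}\notin c^{\perp}\supseteq\mathrm{Span}\{v_{\mathbf{i}}:\mathbf{i}\in I_{i-1}\}$ handles adding steps. You then use $I\subset S$ to place these vectors in the $(w-1)$-dimensional hyperplane $c^{\perp}$, which makes explicit the step the paper leaves implicit in $V(I)c=\mathbf{O}$.

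You should cut the detour through $\mathrm{diag}(c_{\mathbf{j}})$ and the remark about applying the claim to $I_{i-1}\cup\{\mathbf{b}\}$: that set is not contained in $S$, so the remark is wrong as written. You are right that $f\neq 0$ must be assumed, since for $f=0$ and $I=\emptyset$ the statement fails.
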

\begin{proof}
	The proof for $r> 2$ is the same as the proof for $r=2$, so we only need to prove the case for $r=2$. We may assume that $R_{q}^{\mathbf{a}}[\mathbf{x}]=\mathbb{F}_q[x_1,x_2]/<x_1^m-1, x_2^n-1>$. Suppose that $f=\sum_{k=1}^wc_kx_1^{i_k}x_2^{j_k}$, $c_k \ne 0$. We need to prove $w\ge \lvert I\rvert+1$. Let $I=\left\lbrace (a_1,b_1),(a_2,b_2),\dots ,(a_{\lvert I\rvert},b_{\lvert I\rvert}) \right\rbrace $. Since $I\subset S$, we have that 
	$$
	\begin{pmatrix}
		\varepsilon_m^{a_1i_1}\varepsilon_n^{b_1j_1} &\varepsilon_m^{a_1i_2}\varepsilon_n^{b_1j_2}  &\dots  &\varepsilon_m^{a_1i_w}\varepsilon_n^{b_1j_w} \\
		\varepsilon_m^{a_2i_1}\varepsilon_n^{b_2j_1} &\varepsilon_m^{a_2i_2}\varepsilon_n^{b_2j_2}  &\dots  &\varepsilon_m^{a_2i_w}\varepsilon_n^{b_2j_w}\\
		\vdots&\vdots  & \vdots & \vdots\\
		\varepsilon_m^{a_{\lvert I\rvert}i_1}\varepsilon_n^{b_{\lvert I\rvert}j_1} &\varepsilon_m^{a_{\lvert I\rvert}i_2}\varepsilon_n^{b_{\lvert I\rvert}j_2}  &\dots  &\varepsilon_m^{a_{\lvert I\rvert}i_w}\varepsilon_n^{b_{\lvert I\rvert}j_w}
	\end{pmatrix}\begin{pmatrix}
		c_1\\
		c_2\\
		\vdots \\
		c_w
	\end{pmatrix}={\bf O}.
	$$ Let $I_0,I_1,\dots ,I_r=I$ be an independent sequence with respect to $S$ and $$V(J)=\left\lbrace (\varepsilon_m^{ai_1}\varepsilon_n^{bj_1}, \varepsilon_m^{ai_2}\varepsilon_n^{bj_2}, \dots , \varepsilon_m^{ai_w}\varepsilon_n^{bj_w})\mid (a,b) \in J \right\rbrace. $$ 
	Thus we just need to prove that $V(I_i)$ is linearly independent in $\mathbb{F}_{q^t}^w$, where $\mathbb{F}_{q^t}=\mathbb{F}_q(\varepsilon_{N})$ is an algebraic extension of $\mathbb{F}_q$, $N=\text{lcm}(m,n)$ and $0\le i \le r$. We prove it using  induction on $i$. Since $I_0=\emptyset$, $V(I_0)$ is indeed linearly independent, the base case holds. Assume that $V(I_j)$ is  linearly independent for some $0\le j < i$. Consider $V(I_{j+1})$, there are two ways to obtain $I_{j+1}$ from $I_j$: \begin{enumerate}
		\item $I_{j+1}=I_{j}\cup \left\lbrace (a,b) \right\rbrace $, $(a,b)\in \mathcal{N}\setminus S$. Suppose that $V(I_{j+1})$ is linearly dependent, then $V(\left\lbrace (a,b) \right\rbrace )\subset  \text{Span}(V(I_j))$. That implies $(a,b)\in S$, a contradiction as  $(a,b)\in \mathcal{N}\setminus S$. Thus in this case, $V(I_{j+1})$ is linearly independent.
		\item $I_{j+1}=I_j+\left\lbrace (c,d) \right\rbrace $ and $(c,d)\ne (0,0)$. Then $$\begin{aligned}
			V(I_{j+1})=V(I_j)\text{diag}(\varepsilon_m^{ci_1}\varepsilon_n^{dj_1}, \varepsilon_m^{ci_2}\varepsilon_n^{dj_2}, \dots , \varepsilon_m^{ci_w}\varepsilon_n^{dj_w}). 
		\end{aligned}$$
		Since $V(I_j)$ is linearly independent and the diagonal matrix on the right is full rank, $V(I_{j+1})$ is also linearly independent.
	\end{enumerate} 
	By induction, $V(I_i)$ is linearly independent, $0\le i \le r$.
\end{proof}
From Theorem \ref{GBCH}, we can immediately get:
\begin{corollary}\label{fweight}
	Let $f\in R_{q}^{\mathbf{a}}[\mathbf{x}]$. If there exists $\delta \ge 2$ and $\mathbf{b},\mathbf{c}\in \mathcal{N}$ such that $\mathbf{b},\mathbf{b}+\mathbf{c},\dots,\mathbf{b}+(\delta-2)\mathbf{c}\in \mathcal{D}(f)$ and $\mathbf{b}+(\delta-1)\mathbf{c} \notin \mathcal{D}(f)$, then $\text{wt}(f)\ge \delta$.
\end{corollary}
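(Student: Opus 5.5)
We derive the corollary from Theorem \ref{GBCH} by building an explicit independent sequence with respect to $S=\mathcal{D}(f)$ whose final set is $I=\{\mathbf{b},\mathbf{b}+\mathbf{c},\dots,\mathbf{b}+(\delta-2)\mathbf{c}\}\subset S$, which has $\delta-1$ elements. Theorem \ref{GBCH} then gives $\text{wt}(f)\ge |I|+1=\delta$. This is the same shift-and-add pattern as in Example \ref{ex1}, with the \emph{added} point chosen as the unique non-zero $\mathbf{u}=\mathbf{b}+(\delta-1)\mathbf{c}\notin S$.

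The sequence alternates two moves: adjoin $\mathbf{u}$, then translate by $-\mathbf{c}$. Set $I_0=\emptyset$ and $I_1=\{\mathbf{u}\}$. Adjoining $\mathbf{u}$ to $I_0$ is allowed because $I_0=\emptyset\subset S$ and $\mathbf{u}\in\mathcal{N}\setminus S$. Translating gives $I_2=I_1+\{-\mathbf{c}\}=\{\mathbf{b}+(\delta-2)\mathbf{c}\}$, which lies in $S$.

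Inductively, suppose that after $2k$ steps we have
$$J_k=\{\mathbf{b}+(\delta-1-k)\mathbf{c},\dots,\mathbf{b}+(\delta-2)\mathbf{c}\}\subset S, \qquad 1\le k\le \delta-1.$$
For $k<\delta-1$ we adjoin $\mathbf{u}$, which is legal since $J_k\subset S$ and $\mathbf{u}\notin S$. We then translate by $-\mathbf{c}$ and obtain
$$J_{k+1}=\{\mathbf{b}+(\delta-2-k)\mathbf{c},\dots,\mathbf{b}+(\delta-2)\mathbf{c}\}.$$
All the exponents $0,\dots,\delta-2$ that occur in $J_{k+1}$ index points assumed to be in $\mathcal{D}(f)$, so $J_{k+1}\subset S$ again. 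At $k=\delta-1$ the set $J_{\delta-1}$ is exactly $I$, and it is independent with respect to $S$.

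The main point requiring care is checking the side conditions of the definition. The translation vector $-\mathbf{c}$ must be non-zero. If $\mathbf{c}=\mathbf{0}$, then $\mathbf{b}+(\delta-1)\mathbf{c}=\mathbf{b}\in\mathcal{D}(f)$, contradicting the hypothesis; so $\mathbf{c}\neq\mathbf{0}$ follows automatically. The cardinality also needs justification: we need $|I|=\delta-1$, i.e., the points $\mathbf{b}+i\mathbf{c}$ for $0\le i\le\delta-2$ are distinct in $\mathcal{N}$.

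To prove distinctness, suppose the additive order $o$ of $\mathbf{c}$ satisfies $o\le\delta-1$. Then $\mathbf{u}=\mathbf{b}+(\delta-1)\mathbf{c}$ equals $\mathbf{b}+((\delta-1)\bmod o)\mathbf{c}$. The coefficient $(\delta-1)\bmod o$ lies in $\{0,\dots,o-1\}\subset\{0,\dots,\delta-2\}$, so $\mathbf{u}\in\mathcal{D}(f)$, again a contradiction. Hence $o\ge\delta$ and the $\delta-1$ points are distinct. Each adjoining step also genuinely adds the new point $\mathbf{u}$, since $\mathbf{u}\notin S\supset J_k$, so the set sizes grow by one at each step as claimed.

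Applying Theorem \ref{GBCH} to $I$ finishes the proof.
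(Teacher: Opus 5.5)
Your proof is correct and follows the route the paper intends when it says the corollary follows immediately from Theorem~\ref{GBCH}: alternately adjoin $\mathbf{u}$ and translate by $-\mathbf{c}$, as in Example~\ref{ex1}, ending at $I=\{\mathbf{b},\dots,\mathbf{b}+(\delta-2)\mathbf{c}\}\subset\mathcal{D}(f)$. Your check that $\mathbf{c}$ has additive order at least $\delta$ (so $|I|=\delta-1$ and the translation vector is non-zero) supplies details the paper omits; the only slip is calling $\mathbf{u}$ ``non-zero'', which is neither guaranteed nor needed, since only translation vectors must be non-zero.
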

Notice that Theorem \ref{GBCH} and Corollary \ref{fweight} only give a lower bound on the weight of $f$, not on the minimum distance of code $\left\langle f \right\rangle $. When we compute the shift bound of $\text{wt}(f)$, we always assume that some element $\mathbf{b}\notin\mathcal{D}(f)$, then we can obtain an independent sequence. From this sequence, it is easy to get a lower bound $d$ on the weight of $f$. Thus for $g\in R_{q}^{\mathbf{a}}[\mathbf{x}]$, if $\mathcal{D}(f)\subset \mathcal{D}(g)$ and $\mathbf{b}\notin \mathcal{D}(g)$, then $\text{wt}(g)$ is also greater than or equal to $d$. Based on this fact, we can calculate the lower bound on the minimum distance of $\left\langle f \right\rangle $ by following the steps below:
\begin{enumerate}
	\item Assume that $\mathbf{s}_0\notin \mathcal{D}(f_0)$, then we can construct an independent sequence $I_0^{(0)},I_1^{(0)},\dots,I_{k_0}^{(0)}$ with respect to $\mathcal{D}(f)$, where $I_{k_0}^{(0)}\subset \mathcal{D}(f)$. 
	\item Let $f_1\in R_q^{\mathbf{a}}[\mathbf{x}]$ such that $\mathcal{D}(f_1)=C_{\mathbf{s}_0}\cup \mathcal{D}(f)$. Assume that $\mathbf{s}_1\notin \mathcal{D}(f_1)$, then we can construct an independent sequence $I_0^{(1)},I_1^{(1)},\dots,I_{k_1}^{(1)}$ with respect to $\mathcal{D}(f_1)$, where $I_{k_1}^{(1)}\subset \mathcal{D}(f_1)$.
	\item Let $f_i\in R_q^{\mathbf{a}}[\mathbf{x}]$ such that $\mathcal{D}(f_{i})=C_{\mathbf{s}_{i-1}}\cup \mathcal{D}(f_{i-1})$. Assume that $\mathbf{s}_i\notin \mathcal{D}(f_i)$, then we can construct an independent sequence $I_0^{(i)},I_1^{(i)},\dots,I_{k_i}^{(i)}$ with respect to $\mathcal{D}(f_i)$, where $I_{k_i}^{(i)}\subset \mathcal{D}(f_i)$.
\end{enumerate}
\begin{corollary}(Shift bound)\label{Minimumdistance}
	With the notations above, if there exists $j$ such that $f_j=0$ and $f_{j-1}\ne 0$, then the minimum distance of $\left\langle f \right\rangle $ is at least $$\min_{0\le i\le j-1}\lvert I_{k_i}^{(i)}\rvert+1.$$
\end{corollary}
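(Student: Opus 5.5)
The plan is to take an arbitrary nonzero codeword $g\in\langle f\rangle$ and show that $\text{wt}(g)\ge |I_{k_i}^{(i)}|+1$ for some $i\le j-1$. The first fact needed is that every codeword $g=hf$ of $\langle f\rangle$ satisfies $\mathcal{D}(f)\subseteq\mathcal{D}(g)$. This holds because evaluation at $(\varepsilon_{a_1}^{i_1},\dots,\varepsilon_{a_r}^{i_r})$ is a ring homomorphism on $R_q^{\mathbf{a}}[\mathbf{x}]$: it kills every $x_k^{a_k}-1$, so $g$ vanishes wherever $f$ does. The chain $\mathcal{D}(f)=\mathcal{D}(f_0)\subsetneq\mathcal{D}(f_1)\subsetneq\cdots\subsetneq\mathcal{D}(f_j)$ is built by adjoining the orbits $C_{\mathbf{s}_{i-1}}$; such $f_i$ exist by the lemma on minimal polynomials together with the fact that zero sets of products are unions. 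The condition $f_j=0$ means $\mathcal{D}(f_j)=\mathcal{N}$.

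Now fix a nonzero codeword $g$. Since $g\neq0$ and the Fourier transform is injective (the inverse DFT $\mathcal{F}$ in the earlier proof, with $\gcd(\prod a_k,q)=1$), $\mathcal{D}(g)\neq\mathcal{N}=\mathcal{D}(f_j)$. Let $i$ be the largest index with $\mathcal{D}(f_i)\subseteq\mathcal{D}(g)$; it exists because $i=0$ qualifies, and $i\le j-1$. By maximality, $\mathcal{D}(f_{i+1})=\mathcal{D}(f_i)\cup C_{\mathbf{s}_i}\not\subseteq\mathcal{D}(g)$, so some element of $C_{\mathbf{s}_i}$ is outside $\mathcal{D}(g)$. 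By Theorem \ref{Th 3.3}, $\mathcal{D}(g)$ is a union of $q$-cyclotomic cosets, so all of $C_{\mathbf{s}_i}$, and in particular $\mathbf{s}_i$, lies outside $\mathcal{D}(g)$.

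The main step is to check that the sequence $I_0^{(i)},\dots,I_{k_i}^{(i)}$, built as independent with respect to $\mathcal{D}(f_i)$, is also independent with respect to $S'=\mathcal{D}(g)$. This is the subtle point. In the definition, the union steps require $I_{l-1}\subset S$ and the adjoined point outside $S$. The containments $I_{l-1}\subset\mathcal{D}(f_i)\subseteq S'$ transfer immediately. The condition on the adjoined point need not transfer in general, because a point outside $\mathcal{D}(f_i)$ may lie in $\mathcal{D}(g)$. Here I will use the convention of the construction: every union step adjoins the fixed point $\mathbf{s}_i$ assumed outside the current defining set, as in Example \ref{ex1}, and we have shown $\mathbf{s}_i\notin\mathcal{D}(g)$. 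Shift steps impose no condition. So the sequence is independent with respect to $\mathcal{D}(g)$, and its last term lies in $\mathcal{D}(f_i)\subseteq\mathcal{D}(g)$. Theorem \ref{GBCH} applied to $g$ then gives $\text{wt}(g)\ge|I_{k_i}^{(i)}|+1\ge\min_{0\le i\le j-1}|I_{k_i}^{(i)}|+1$.

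If the construction were allowed to adjoin arbitrary points outside $\mathcal{D}(f_i)$, the argument would need the stronger hypothesis that all adjoined points lie in $C_{\mathbf{s}_i}$. I would state that interpretation explicitly in the proof, since it is exactly what the preceding remark ("if $\mathcal{D}(f)\subset\mathcal{D}(g)$ and $\mathbf{b}\notin\mathcal{D}(g)$") encodes.
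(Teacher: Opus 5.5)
Your proof is correct and follows the paper's own justification, which is the remark just before the corollary: codewords of $\langle f\rangle$ have defining sets containing $\mathcal{D}(f)$, and a sequence built around a non-zero $\mathbf{b}$ stays independent for any $g$ with $\mathcal{D}(f)\subseteq\mathcal{D}(g)$ and $\mathbf{b}\notin\mathcal{D}(g)$; you make this rigorous by taking the largest $i$ with $\mathcal{D}(f_i)\subseteq\mathcal{D}(g)$ and using coset-closure of $\mathcal{D}(g)$ to get $\mathbf{s}_i\notin\mathcal{D}(g)$. Your caveat that every union step of the $i$-th sequence must adjoin a point of $C_{\mathbf{s}_i}$ is exactly the paper's implicit convention, and it is genuinely needed: over $\mathbb{F}_4$ with $G=\mathsf{C}_3\times\mathsf{C}_3$ (all cosets are singletons) and $\mathcal{D}(f)=\{(0,1),(0,2)\}$, taking $(0,0)$ as the last $\mathbf{s}_i$ and letting every sequence adjoin $(0,0)$ gives $\lvert I_{k_i}^{(i)}\rvert=2$ at every stage, hence the bound $3$, yet $1+x_1\in\langle f\rangle$ has weight $2$.
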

In this section, we now present several explicit families of abelian codes obtained via the shift bound. 
In this part,  we present some sufficient conditions. Specifically, by selecting certain specific defining sets and applying the shift bound, we can ensure that the minimum distance of abelian codes is greater than or equal to a designed value.
\begin{theorem}\label{sufficient conditions}
	Let $\mathcal{C}$ be an abelian code over $\mathbb{F}_q$ in the quotient ring $\mathbb{F}_q[x_1,x_2]/\langle x_1^m-1,x_2^n-1 \rangle$ with non-zero dimension, and let $\mathcal{D}$ denote the defining set of $\mathcal{C}$. For each integer $k$ with $0\le k \le n-1$, we define the set $I_k$ as follows:
	\begin{itemize}
		\item If there exist integers $j_k,\delta_k$ satisfying $0\le j_k,\delta_k\le m-1$ such that 
		$
		\{(k,j_k), (k,j_k+1),\dots,(k,j_k+\delta_k-1)\}\subset \mathcal{D}$ and $(k,j_k+\delta_k)\notin \mathcal{D},
		$
		then $I_k=\{(k,j_k), (k,j_k+1),\dots,(k,j_k+\delta_k-1)\}$. Notice that $(k,j_k)$ can be an empty set, that means $(k,j_k)\notin \mathcal{D}$ holds for all integers $j_k$ with $0\le j_k\le m-1$;
		\item Otherwise, $I_k$ is defined to be an infinite set, that means $(k,j_k)\in \mathcal{D}$ holds for all integers $j_k$ with $0\le j_k\le m-1$.
	\end{itemize}
	Then the minimum distance of $\mathcal{C}$ satisfies
	$$
	d(\mathcal{C})\ge \min_{0\le k\le n-1} (k+1)(\lvert I_k\rvert+1).
	$$
	In particular, when $n=1$, the above bound reduces to the BCH bound for cyclic codes.
\end{theorem}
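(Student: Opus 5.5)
The plan is a direct two-step BCH argument, a product-code style counting, rather than building one large independent set through Theorem~\ref{GBCH}. I read the pair $(k,j)$ as indexing the zero $(\varepsilon^{k},\varepsilon^{j})$, with $k$ the exponent in the first variable and $j$ the exponent in the second (so $k$ runs over the $n$ values and $j$ over the $m$ values as in the statement). Fix a nonzero codeword $c\in\mathcal{C}$. Write it by columns as
$$c(x_1,x_2)=\sum_{j} u_j(x_1)\,x_2^{j},$$
where each $u_j$ is a one-variable polynomial in $x_1$. For each $k$, set $c_k(x_2)=c(\varepsilon^{k},x_2)=\sum_j u_j(\varepsilon^{k})x_2^{j}$. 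Every element of $\mathcal{D}$ is a zero of $c$, so $(k,j)\in\mathcal{D}$ gives $c_k(\varepsilon^{j})=0$.

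\textbf{Step 1 (choose the first nonvanishing slice).} Since $c\neq 0$ and the two-dimensional discrete Fourier transform is invertible (this is where $\gcd(mn,q)=1$ is used, as in the Lemma on minimal polynomials), not all $c_k$ vanish. Let $k^*$ be the smallest index with $c_{k^*}\neq 0$.

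\textbf{Step 2 (bound the number of nonzero columns).} Consider $I_{k^*}$.
\begin{itemize}
\item If $I_{k^*}$ were the ``infinite'' case, then $c_{k^*}$ would vanish at all $\varepsilon^{j}$. By one-variable DFT invertibility this forces $c_{k^*}=0$, a contradiction, so this case cannot occur.
\item Otherwise $c_{k^*}$ has the $|I_{k^*}|$ consecutive zeros $\varepsilon^{j_{k^*}},\dots,\varepsilon^{j_{k^*}+\delta_{k^*}-1}$. The classical BCH bound (the case $r=1$ of Corollary~\ref{fweight}) then gives $\mathrm{wt}(c_{k^*})\ge |I_{k^*}|+1$. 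This also holds trivially when $I_{k^*}=\emptyset$.
\end{itemize}
Because $u_j(\varepsilon^{k^*})\neq 0$ forces $u_j\neq 0$, at least $|I_{k^*}|+1$ of the column polynomials $u_j$ are nonzero.

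\textbf{Step 3 (bound the weight of each nonzero column).} By minimality of $k^*$, we have $c_i=0$ for all $0\le i\le k^*-1$. Hence every coefficient vanishes: $u_j(\varepsilon^{i})=0$ for all $j$ and all $i<k^*$. Each nonzero $u_j$ therefore has the $k^*$ consecutive zeros $\varepsilon^0,\dots,\varepsilon^{k^*-1}$, and the BCH bound gives $\mathrm{wt}(u_j)\ge k^*+1$. (When $k^*=0$ this reads $\mathrm{wt}(u_j)\ge 1$, which is trivial.)

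\textbf{Step 4 (combine).} The supports of distinct columns are disjoint, so
$$\mathrm{wt}(c)=\sum_j \mathrm{wt}(u_j)\ge (k^*+1)(|I_{k^*}|+1)\ge \min_{k}(k+1)(|I_k|+1).$$
For $n=1$ only $k=0$ occurs, and the bound collapses to $|I_0|+1$, which is the BCH bound.

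The main obstacle is bookkeeping rather than depth. One must match the roles of the indices $k$ and $j$ to the moduli $m,n$ so that the consecutive runs really are consecutive powers of a primitive root of the right order. One must also check that the infinite-$I_k$ convention is exactly the case excluded by $c_{k^*}\neq 0$. As an alternative route, one could build an independent set of size $(k+1)(|I_k|+1)-1$ by alternating shifts and unions as in Example~\ref{ex1} and then invoke Theorem~\ref{GBCH}. The slice argument above avoids that construction, so I would try it first.
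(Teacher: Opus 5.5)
Your argument is correct, and it takes a genuinely different route from the paper's. The ``alternative route'' you mention at the end is in fact the one the paper uses.

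The paper stays inside its shift-bound machinery:
\begin{itemize}
\item It takes the run $I_0$ with its blocking point as an independent set.
\item It repeatedly adjoins the blocking point to the defining set (the case split behind Corollary~\ref{Minimumdistance}), lengthening the run until row $0$ is filled. It then does the same for row $1$, and so on.
\item Once rows $0,\dots,k-1$ are full, it forms the staircase $J_k^{(t)}$ by translating the row-$k$ run, together with its blocking point, into each of the $k$ full rows.
\item It asserts that $J_k^{(t)}$ is independent of size $(k+1)(|I_k^{(t)}|+1)-1$ and applies Theorem~\ref{GBCH}.
\end{itemize}

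Your $k^*$ is exactly the stage of that iteration at which a given codeword gets caught. Instead of building a two-dimensional independent set, you use two one-variable BCH bounds. The first, on the slice $c_{k^*}$, counts the nonzero columns. The second, on each nonzero column $u_j$, uses the zeros $\varepsilon^0,\dots,\varepsilon^{k^*-1}$.

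Your route is shorter and fully checkable. It needs no verification that the staircase sets are independent, which the paper does not spell out. It also needs no monotonicity $|I_k^{(t)}|\ge|I_k|$ along the iteration. It gives the sharper intermediate inequality $\mathrm{wt}(c)\ge(k^*+1)\,\mathrm{wt}(c_{k^*})$, into which any one-variable bound could be plugged. The paper's route, in exchange, presents the theorem as a special case of the general shift bound. The same independent-set bookkeeping can then be combined with shifts not aligned to rows when the defining set is irregular.

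In a write-up, make a few small points explicit:
\begin{itemize}
\item The BCH bound for $c_{k^*}$ is applied to a polynomial with coefficients in $\mathbb{F}_{q^t}$. The Vandermonde proof of Theorem~\ref{GBCH} is unaffected by this.
\item To invoke Corollary~\ref{fweight} literally, extend each run of zeros to the first non-zero. That non-zero exists because the polynomial is nonzero.
\item Use roots of unity of the two different orders for the two variables.
\item Let $k$ range over all residues of the modulus of the variable you slice in. As printed, the index ranges in the statement are transposed unless $m=n$.
\end{itemize}
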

\begin{proof}
	We do not consider the case where $I_k$ is an infinite set, since this has no effect on the lower bound of the minimum distance. We start constructing independent sets from the first row $k=0$. Since $\{(0,j_0), (0,j_0+1),\dots,(0,j_0+\delta_0-1)\}\subset \mathcal{D} \quad \text{and} \quad (0,j_0+\delta_0)\notin \mathcal{D}$ then it is easy to see that $I_0$ is independent with respect to $\mathcal{D}$. Let $\mathcal{D}_{0}^{(1)}=\mathcal{D}\cup \left\lbrace (0,j_0+\delta_0)\right\rbrace $, therefore, $\mathcal{D}_{0}^{(1)}$ has at least $\delta_0+1$ consecutive points in its first row. Suppose that $(0,j_0^{(1)}+\delta_0^{(1)}-1)$ belongs to the set of consecutive points mentioned above, while $(0,j_0^{(1)}+\delta_0^{(1)})\notin \mathcal{D}_{0}^{(1)}$.  Then we can always construct a set $I_0^{(1)}$ that is independent with respect to $\mathcal{D}_{0}^{(1)}$ and satisfies $\lvert I_0^{(1)}\rvert \ge \lvert I_0\rvert$. Repeat the above procedure until $\mathcal{D}_0^{(t_0)}$ covers all points in the first row for some positive integer $t_0$ and we can obtain that  $\lvert I_0^{(t)}\rvert \ge \lvert I_0\rvert$ for all $1\le t \le t_0$. For the second row $k=1$, following the same procedure as above, we can obtain $I_1^{(t)}$ which is independent with respect to $\mathcal{D}_1^{(t)}$, where $1\le t\le t_1$ and $\mathcal{D}_1^{(t_1)}$ covers all points in the second row for some positive integer $t_1$. Take $J_1^{(t)} =  \left( I_1^{(t)}\cup \left\lbrace (1,j_1^{(t)}+\delta_1^{(t)})\right\rbrace  + (-1,0)\right)\cup I_1^{(t)}  $, then $J_1^{(t)}$ is also independent with respect to $\mathcal{D}_1^{(t)}$ and we have $\lvert J_1^{(t)}\rvert=2\lvert I_1^{(t)}\rvert +1\ge 2\lvert I_1\rvert+1$, where $1\le t \le t_k$. We extend the above argument to an arbitrary $1\le k \le n-1$. Take  $J_k^{(t)} =  \cup_{s=1}^{k}\left( I_k^{(t)}\cup \left\lbrace (k,j_k^{(t)}+\delta_k^{(t)})\right\rbrace   + (-s,0)\right)\cup I_k^{(t)}  $, then we have $\lvert J_k^{(t)}\rvert=(k+1)(\lvert I_k^{(t)}\rvert+1) -1\ge (k+1)(\lvert I_k\rvert+1)-1$. From Corollary \ref{Minimumdistance}, we have
	$$d(\mathcal{C})\ge \min_{\begin{matrix}
			0\le k \le n-1\\
			0\le t \le t_k
	\end{matrix}}\left\lbrace \lvert J_k^{(t)}\rvert \right\rbrace+1\ge \min_{0\le k\le n-1} (k+1)(\lvert I_k\rvert+1).$$
\end{proof}
\begin{example}
	We next present an explicit example of an abelian code with minimum distance $\ge 6$. Let $q=9, m=n=q-1=8$. According to the \Cref{sufficient conditions}, it suffices to ensure that there are 5 consecutive points in the first row, 2 consecutive points in the second row , and 1 point on the third to fifth row. The precise defining set is illustrated in  \Cref{Example defining set}.
	\begin{figure}[htbp]
		\centering
		\begin{tikzpicture}[scale=0.5]
			\draw[gray!30] (0,0) grid (7,7);
			\draw[->] (-0.3,0) -- (7.5,0) node[right] {$i$};
			\draw[->] (0,-0.3) -- (0,7.5) node[above] {$j$};
			\foreach \x in {0,1,2,3,4,5,6,7} \draw (\x,-0.1) node[below] {$\x$};
			\foreach \y in {0,1,2,3,4,5,6,7} \draw (-0.1,\y) node[left] {$\y$};
			\fill (0,0) circle (2.5pt);
			\fill (1,0) circle (2.5pt);
			\fill (2,0) circle (2.5pt);
			\fill (3,0) circle (2.5pt);
			\fill (4,0) circle (2.5pt);	
			\fill (0,1) circle (2.5pt);	
			\fill (0,2) circle (2.5pt);
			\fill (0,3) circle (2.5pt);
			\fill (0,4) circle (2.5pt);
			\fill (1,1) circle (2.5pt);
			\node[draw,inner sep=4pt,align=left,anchor=north west] at (9,8) {
				\begin{tabular}{l l}
					\raisebox{1pt}{\tikz{\fill (0,0) circle (2pt);}} &  $(i,j)\in \mathcal{D}$ \\					
				\end{tabular}
			};
		\end{tikzpicture}
		\caption{The distribution of defining set $\mathcal{D}$ of abelian code $[64,54,\ge 6]_9$}
		\label{Example defining set}
	\end{figure}
\end{example}	 

\section{Several families of abelian codes}
In \cite{bibchen},  explicit constructions of cyclic codes with lengths $3(2^r-1)$, $4(3^r-1)$ and so on were proposed, which, however, is restricted to odd integer $r$. In this part, we extend these code lengths to the even $r$ case. We now present the following two cases.
\begin{theorem}\label{First class}
	Let $ m=2^r-1, n=3$, where $r\ge 4$ is an even positive integer.  Then the binary abelian code with defining set $C_{(0,0)}\cup C_{(0,1)} \cup C_{(1,0)}$ has parameters $$\mathcal{C}_1: [3(2^r-1),3(2^r-1)-r-3, 4],$$ and the abelian code with defining set $C_{(0,0)}\cup C_{(0,1)} \cup C_{(1,0)}\cup C_{(1,1)}\cup C_{(3,0)}$ has parameters $$\mathcal{C}_2: [3(2^r-1),3(2^r-1)-3r-3,\ge 6].$$ 
\end{theorem}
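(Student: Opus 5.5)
Both codes can be handled by a direct argument. It splits a codeword into its three ``rows'' and reduces the zero conditions to conditions on binary cyclic codes of length $m$; the shift bound (\Cref{Minimumdistance}) could be used instead, but the direct route seems cleaner here.

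Write a codeword as $c(x_1,x_2)=c_0(x_1)+c_1(x_1)x_2+c_2(x_1)x_2^2$ with $c_j\in\mathbb{F}_2[x_1]/\langle x_1^m-1\rangle$. Let $\alpha=\varepsilon_m$ and $\omega=\varepsilon_3\in\mathbb{F}_4\subset\mathbb{F}_{2^r}$; the latter inclusion holds because $r$ is even.

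\textbf{Dimensions.} By the dimension theorem of \cite{bib1} it suffices to count the defining sets, using $\lvert C_{\mathbf{s}}\rvert=\mathrm{lcm}(\lvert C_{s_1}\rvert,\lvert C_{s_2}\rvert)$. The coset sizes are as follows.
\begin{itemize}
\item $\lvert C_{(0,0)}\rvert=1$, and $\lvert C_{(0,1)}\rvert=2$, since $C_{(0,1)}=\{(0,1),(0,2)\}$.
\item $\lvert C_{(1,0)}\rvert=r$, since $2$ has order $r$ modulo $2^r-1$.
\item $\lvert C_{(1,1)}\rvert=\mathrm{lcm}(r,2)=r$, since $r$ is even.
\item $\lvert C_{(3,0)}\rvert=r$. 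Here $3\mid m$, and the order of $2$ modulo $m/3$ is still $r$ because $2^j-1<(2^r-1)/3$ for $j<r$ when $r\ge4$.
\end{itemize}
The cosets are pairwise distinct: $3\not\equiv 2^j \pmod{2^r-1}$ for $r\ge3$, and the second coordinates separate the rest. This gives $\lvert\mathcal D\rvert=r+3$ for $\mathcal C_1$ and $3r+3$ for $\mathcal C_2$, which yields the stated dimensions.

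\textbf{Translating zeros.} The zeros $(1,\omega^j)$ for $j=0,1,2$ are equivalent to $c_0(1)=c_1(1)=c_2(1)=0$, so every row has even weight; in particular, every codeword has even weight. The zero $(\alpha,1)$ means $S(\alpha)=0$, where $S=c_0+c_1+c_2$. For $\mathcal C_2$ we additionally have $S(\alpha^3)=0$ and $T(\alpha)=0$, where $T=c_0+\omega c_1+\omega^2c_2$.

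\textbf{$\mathcal C_1$.} A nonzero codeword of weight $2$ must consist of a single row $x_1^a+x_1^b$ with $a\ne b$. Then $S(\alpha)=\alpha^a+\alpha^b\ne0$, because $\alpha$ is primitive; this is a contradiction, so $d\ge4$. For equality, $(1+x_1)(1+x_2)$ has weight $4$. It vanishes on $C_{(0,0)}\cup C_{(0,1)}$ because of the factor $1+x_1$, and on $C_{(1,0)}$ because of the factor $1+x_2$. So $d=4$.

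\textbf{$\mathcal C_2$.} By parity it suffices to rule out weight $4$, and there are two cases.
\begin{itemize}
\item \emph{One row of weight $4$.} Then $S=c_j$ is a nonzero codeword of weight $4$ in the binary cyclic code with zeros $\alpha,\alpha^3$. By conjugacy that code also has zeros $\alpha^2,\alpha^4$, so it contains four consecutive zeros $\alpha,\alpha^2,\alpha^3,\alpha^4$. Together with the even weight forced by $c_j(1)=0$, the BCH bound gives weight at least $6$, a contradiction.
\item \emph{Two rows $i\ne j$ of weight $2$ each.} Put $u=c_i(\alpha)$ and $v=c_j(\alpha)$. Then $u\ne0$ as above, and $S(\alpha)=u+v=0$ while $T(\alpha)=\omega^iu+\omega^jv=0$. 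These give $(\omega^i-\omega^j)u=0$, which is impossible since $\omega^i\ne\omega^j$.
\end{itemize}
Hence $d(\mathcal C_2)\ge6$.

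The main care point is the coset-size bookkeeping, especially $\lvert C_{(3,0)}\rvert=r$ and the use of $r$ even to get both $\omega\in\mathbb{F}_{2^r}$ and $\lvert C_{(1,1)}\rvert=r$. The second case of the $\mathcal C_2$ argument is where the extra zero $C_{(1,1)}$ is genuinely needed.
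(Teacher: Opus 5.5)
Your proof is correct, but it takes a different route from the paper's. The paper asserts the coset sizes as clear and reads off both lower bounds, $d(\mathcal C_1)\ge 4$ and $d(\mathcal C_2)\ge 6$, from Theorem~\ref{sufficient conditions}, the row-by-row consequence of the generalized shift bound. It gets $d(\mathcal C_1)\le 4$ from the Hamming bound.

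You instead split a codeword into its three rows along $x_2$. The zeros $(1,\omega^j)$ force every row to have even weight. The remaining zeros become conditions on $S=c_0+c_1+c_2$ and $T=c_0+\omega c_1+\omega^2c_2$. You then finish with the classical BCH bound and a case analysis of the row-weight patterns $(4,0,0)$ and $(2,2,0)$. Weight $2$ in $\mathcal C_2$ is excluded implicitly via $\mathcal C_2\subseteq\mathcal C_1$, which is fine. Your upper bound for $\mathcal C_1$ is the explicit weight-$4$ codeword $(1+x_1)(1+x_2)$ rather than a counting argument.

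What your route buys: it is self-contained and uses only the classical BCH bound, not the paper's generalized theorem, whose proof in the paper is only sketched. It also shows what each coset does; for instance, $C_{(1,1)}$ is exactly what rules out the $(2,2,0)$ pattern. And it exhibits a minimum-weight word. The paper's route is shorter and serves as a uniform template for other defining sets, such as those in Theorem~\ref{second class}.

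One small slip: your justification of $\lvert C_{(3,0)}\rvert=r$ is false as stated. The inequality $2^j-1<(2^r-1)/3$ fails at $j=r-1$; for $r=4$ it reads $7<5$. The conclusion still holds. The order of $2$ modulo $(2^r-1)/3$ divides $r$, so you only need to exclude proper divisors $d\le r/2$ of $r$. For those, $2^{d}-1\le 2^{r/2}-1<(2^r-1)/3$ whenever $r\ge 4$. Alternatively, use $\gcd(2^j-1,2^r-1)=2^{\gcd(j,r)}-1$.
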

\begin{proof}
	It is clear that $\lvert C_{(0,0)} \rvert=1 $, $\lvert C_{(0,1)} \rvert=2$, $\lvert C_{(1,0)}\rvert=\lvert C_{(3,0)} \rvert=r.$ Since $r$ is an even positive integer, then $\lvert C_{(1,1)}\rvert=\text{lcm}(\lvert C_1^{(2^r-1,2)}\rvert,\lvert C_1^{(3,2)}\rvert)=\text{lcm}(r,2)=r$. From \Cref{sufficient conditions}, the lower bounds on the minimum distances of the two codes above can be readily obtained. Suppose that $d(\mathcal{C}_1)\ge 5$, then $$1+\binom{3(2^r-1)}{1}+ \binom{3(2^r-1)}{2}> 2^{r+3}.$$ That shows that the Hamming bound fails for $d(\mathcal{C}_1)\ge 5$.
\end{proof}
\begin{theorem}\label{second class}
	Let $ m=3^r-1, n=4$, where $r\ge 4$ is an even positive integer. Then the ternary abelian code with defining set $C_{(0,0)}\cup C_{(0,1)} \cup C_{(0,2)}\cup C_{(1,0)}\cup C_{(2,0)}\cup C_{(4,0)}\cup C_{(1,1)}$ has parameters $$ [4(3^r-1),4(3^r-1)-4r-4, \ge 6],$$  and the ternary abelian code with defining set $C_{(0,0)}\cup C_{(0,1)} \cup C_{(0,2)}\cup C_{(1,0)}\cup C_{(2,0)}\cup C_{(4,0)}\cup C_{(5,0)}\cup C_{(1,1)}\cup C_{(1,2)}\cup C_{(2,1)}$ has parameters $$ [4(3^r-1),4(3^r-1)-7r-4, \ge 8].$$
\end{theorem}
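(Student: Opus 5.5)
The plan is to verify the dimension by counting the sizes of the cyclotomic cosets. The lower bound on the minimum distance will then come from \Cref{sufficient conditions}, applied row by row to the defining set, as in the proof of \Cref{First class}.

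\textbf{Dimension.} Here $q=3$, $m=3^r-1$ and $n=4$.
\begin{itemize}
\item In the second coordinate, the $3$-cyclotomic cosets modulo $4$ are $C_0=\{0\}$, $C_1=\{1,3\}$ and $C_2=\{2\}$.
\item In the first coordinate, $C_s^{(3^r-1,3)}$ has size $r$ for $s\in\{1,2,4,5\}$, since $3^i s<3^r-1$ for small $i$ and the orbit closes exactly at $3^r$. These four cosets are pairwise distinct: for $r\ge 4$, none of $1,2,4,5$ is congruent to $3^i$ times another of them modulo $3^r-1$, which is a direct check.
\item Hence $|C_{(0,0)}|=|C_{(0,2)}|=1$, $|C_{(0,1)}|=2$, $|C_{(s,0)}|=r$ and $|C_{(1,2)}|=r$.
\item Because $r$ is even, $|C_{(1,1)}|=|C_{(2,1)}|=\mathrm{lcm}(r,2)=r$.
\end{itemize}
This gives $|\mathcal{D}|=4+4r$ for the first code and $|\mathcal{D}|=4+7r$ for the second, provided the listed two-dimensional cosets are pairwise disjoint, which follows from the one-dimensional facts above. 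The dimension formula $\prod a_k-|\mathcal{Z}(f)|$ then gives the stated dimensions.

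\textbf{Distance.} I apply \Cref{sufficient conditions} with the role of the ``row index'' $k$ played by the coordinate modulo $n=4$. The consecutive runs are taken in the coordinate modulo $3^r-1$, exactly as in the $[64,54,\ge 6]_9$ example. For this I list the points of $\mathcal{D}$ with small first coordinate in each row. For example, $C_{(1,1)}$ contributes $(3,3)$ and $(9,1)$, $C_{(2,1)}$ contributes $(6,3)$ and $(18,1)$, $C_{(1,2)}$ contributes $(3,2)$, and $C_{(2,0)}\ni(6,0)$.

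For the first code:
\begin{itemize}
\item Row $0$ contains $0,1,2,3,4$ and not $5$, so $|I_0|=5$ and the bound is $6$.
\item Row $1$ contains $0,1$ and not $2$, so the bound is $2\cdot 3=6$.
\item Row $2$ contains $0$ and not $1$, so the bound is $3\cdot 2=6$.
\item Row $3$ contains $0$ and not $1$, so the bound is $4\cdot2=8$.
\end{itemize}
The minimum is $6$.

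For the second code:
\begin{itemize}
\item Row $0$ contains $0,\dots,6$ and not $7$, so the bound is $8$.
\item Row $1$ contains $0,1,2$ and not $3$, so the bound is $2\cdot4=8$.
\item Row $2$ contains $0,1$ and not $2$, so the bound is $3\cdot3=9$.
\item Row $3$ contains $0$ and not $1$, so the bound is $4\cdot2=8$.
\end{itemize}
The minimum is $8$.

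In every row of both codes the run is nonempty and is followed by a point outside $\mathcal{D}$, so no row degenerates to the weak bound $(k+1)\cdot1$. For $r\ge4$ no row is entirely contained in $\mathcal{D}$, so every $I_k$ is finite.

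\textbf{Main obstacle.} The key step is the membership and non-membership claims for small points, such as $5\notin C_1\cup C_2\cup C_4$ in row $0$ of the first code, $7\notin C_1\cup C_2\cup C_4\cup C_5$, and $3\notin$ row $1$. Each amounts to checking that $7\not\equiv 3^i s \pmod{3^r-1}$ for $s\in\{1,2,4,5\}$, and similarly for the other points. Since $3^i s$ for $3^i s<3^r-1$ are explicit small numbers, and the wrapped values $3^i s-(3^r-1)\cdot t$ are large for $r\ge4$, a size argument should settle each case. I would do this check carefully, making sure to track which second coordinates $3^i \bmod 4$ each coset element lands in.
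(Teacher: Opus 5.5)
Your proposal is correct and follows the paper's route. The paper's proof only records the coset sizes (using that $r$ is even for $C_{(1,1)}$ and $C_{(2,1)}$) and then invokes Theorem~\ref{sufficient conditions}, as in the proof of Theorem~\ref{First class}. Your row-by-row runs (lengths $5,2,1,1$ for the first code and $7,3,2,1$ for the second, with rows indexed by the coordinate modulo $4$ as in the $[64,54,\ge 6]_9$ example) and your size checks modulo $3^r-1$ supply the details the paper leaves as ``readily obtained.''
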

\begin{proof}
	It is clear that $\lvert C_{(0,0)}\rvert =\lvert C_{(0,2)}\rvert=1, \lvert C_{(0,1)}\rvert=2$ and $\lvert C_{(1,0)}\rvert=\lvert C_{(2,0)}\rvert=\lvert C_{(4,0)}\rvert=\lvert C_{(5,0)}\rvert=\lvert C_{(1,1)}\rvert=\lvert C_{(1,2)}\rvert=\lvert C_{(2,1)}\rvert=r.$ The remaining proof proceeds in a similar manner to that of \Cref{First class}.
\end{proof}
The above two families of abelian codes extend the classical cyclic code constructions reported in \cite{bibchen}, which were originally restricted to odd $r$. The proposed constructions accommodate even $r$, thereby enriching the repository of linear codes with favorable distance parameters. We tabulate their parameters in Table \ref{tab:code-compare}.
\begin{table*}[t]
	\centering
	\footnotesize
	\setlength{\tabcolsep}{4pt} %
	\caption{Comparison of code parameters}
	\label{tab:code-compare}
	\begin{tabular}{c c c c}
		\toprule
		Code & Type & Parameters & Condition \\
		\midrule
		Theorem 1, \cite{bibchen} & Cyclic & $[3(2^r - 1),\,3(2^r - 1)-3r-1,\,6]$ & $r\geq 3$, odd \\
		\Cref{First class} & Abelian & $[3(2^r - 1),\,3(2^r - 1)-3r-3,\,\geq 6]$ & $r\geq 4$, even \\
		Theorem 5, \cite{bibchen} & Cyclic & $[4(3^r - 1),\,4(3^r - 1)-5r-1,\,\ge 6]$ & $r\geq 3$, odd \\
		\Cref{second class} & Abelian & $[4(3^r - 1),\,4(3^r - 1)-4r-4,\,\geq 6]$ & $r\geq 4$, even \\
		Theorem 6, \cite{bibchen} & Cyclic & $[4(3^r - 1),\,4(3^r - 1)-7r-1,\,\ge 8]$ & $r\geq 3$, odd \\
		\Cref{second class} & Abelian & $[4(3^r - 1),\,4(3^r - 1)-7r-4,\,\geq 8]$ & $r\geq 4$, even \\
		\bottomrule
	\end{tabular}
\end{table*}
\begin{theorem}\label{MiniD4}
	Let $q$ be a prime power, $m=n=q+1$, then the abelian code with defining set $C_{(1,0)}\cup C_{(0,1)}\cup C_{(0,0)}$ has parameters
	$$
	[(q+1)^2,(q+1)^2-5, 4]_q.
	$$
\end{theorem}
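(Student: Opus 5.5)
The plan has three parts: compute the defining set and hence the dimension, show directly that no nonzero codeword has weight at most $3$, and exhibit a codeword of weight $4$.

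\textbf{Dimension.} Since $q\equiv -1 \pmod{q+1}$, the $q$-cyclotomic coset of $1$ modulo $q+1$ is $C_1=\{1,q\}=\{1,-1\}$, of size $2$ (for $q\ge 2$ we have $q+1\ge 3$, so $1\ne -1$). Hence
$$\mathcal{D}=C_{(0,0)}\cup C_{(1,0)}\cup C_{(0,1)}=\{(0,0),(1,0),(-1,0),(0,1),(0,-1)\},$$
which has $5$ elements. Because $\gcd(q,(q+1)^2)=1$, the dimension theorem of \cite{bib1} gives dimension $(q+1)^2-5$.

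\textbf{Lower bound $d\ge 4$.} Rather than build independent sequences, I would argue directly from the parity checks, which is cleaner and also covers $q=2$, where the whole first row lies in $\mathcal{D}$. Let $c=\sum_k c_k x^{i_k}y^{j_k}$ be a nonzero codeword with $w\le 3$ nonzero terms, and write $\varepsilon=\varepsilon_{q+1}$.

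The three conditions $c(1,1)=c(\varepsilon,1)=c(\varepsilon^{-1},1)=0$ read $\sum_\alpha s_\alpha \alpha^{e}=0$ for $e\in\{-1,0,1\}$. Here $\alpha$ runs over the distinct values $\varepsilon^{i_k}$, and $s_\alpha$ is the sum of the $c_k$ whose $x$-exponent gives $\alpha$. There are at most $3$ distinct $\alpha$, and after multiplying by $\alpha$ the coefficient matrix becomes a Vandermonde matrix. Hence every $s_\alpha=0$. By the same argument with the checks $c(1,\varepsilon^{\pm1})=0$, every partial sum $t_\beta$ over terms sharing a $y$-exponent also vanishes.

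Now go through the possible weights:
\begin{itemize}
\item Weight $1$ is impossible, since the single group sum equals $c_1\ne 0$.
\item For weight $2$ and $3$, every group must have sum $0$, so no group is a singleton. With at most $3$ terms this forces all terms to share the same $x$-exponent, and likewise the same $y$-exponent. Then all terms sit at the same monomial, which contradicts $w\ge 2$.
\end{itemize}
So $d\ge 4$.

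\textbf{Upper bound $d\le 4$.} Take $c=(1-x_1)(1-x_2)=1-x_1-x_2+x_1x_2$, which has weight $4$. It vanishes whenever $x_1=1$ or $x_2=1$, and every point of $\mathcal{D}$ has a zero coordinate, so $c$ vanishes on all of $\mathcal{D}$. By the earlier theorem of \cite{bib1} that codes with the same zeros coincide (together with semisimplicity), $c$ lies in the code, so $d=4$. As a backup one can use the Hamming bound instead: $d\ge 5$ would require $1+(q+1)^2(q-1)+\binom{(q+1)^2}{2}(q-1)^2\le q^5$, which fails for every $q\ge 2$. The main point needing care is the grouping argument for weight $3$, which is why I use the direct Vandermonde argument rather than \Cref{sufficient conditions}.
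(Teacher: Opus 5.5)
Your proof is correct, but it argues differently from the paper in both halves.

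\textbf{Lower bound.} The paper gets $d\ge 4$ in one line by citing the row-by-row shift bound of \Cref{sufficient conditions}. You instead prove it directly. The checks at $(0,0),(\pm1,0)$ and at $(0,0),(0,\pm1)$ give a Vandermonde system, which forces every $x$-exponent class and every $y$-exponent class of a weight-$\le 3$ codeword to have coefficient sum $0$. A support of size at most $3$ with no singleton class in either coordinate must then be a single monomial. This is the same grouping idea the paper uses later in Lemma~\ref{Type set}.

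What this buys you is an argument that does not depend on how the shift bound is organized. Read literally, \Cref{sufficient conditions} for $q\ge 3$ includes the row $k=2$, which contains no point of $\mathcal{D}$. That row contributes $(2+1)(0+1)=3$ to the minimum, so the cited statement as written only gives $d\ge 3$. To reach $4$ via shifts, one has to rerun the procedure of Corollary~\ref{Minimumdistance} with a different row order, e.g.\ $0,1,-1,2,\dots$. Then row $2$ is handled only after rows $1,0,-1$ are filled, giving $4\cdot 1$. Your argument sidesteps this entirely.

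\textbf{Upper bound.} The paper excludes $d\ge 5$ with the Hamming bound, the same inequality you keep as a backup. Your explicit codeword $(1-x_1)(1-x_2)$ is constructive instead. It also exhibits the minimum-weight ``rectangle'' codewords, which are the Type~II configurations analysed later. Either argument suffices.

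\textbf{What each route buys.} The paper's shift-bound route generalizes to arbitrary defining sets. Your Vandermonde argument exploits the special shape of this $\mathcal{D}$: three consecutive zeros through the origin along each axis.

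One minor point: membership of $(1-x_1)(1-x_2)$ in $\langle f\rangle$ is most cleanly seen by a dimension count. The ideal of all $g$ vanishing on $\mathcal{D}$ has dimension $(q+1)^2-5$ and contains $\langle f\rangle$, which has the same dimension, so the two coincide.
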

\begin{proof}
	It is clear that $\lvert C_{(1,0)}\rvert =\lvert C_{(0,1)}\rvert=2$, $\lvert C_{(0,0)}\rvert=1$. From \Cref{sufficient conditions}, the lower bound on the minimum distances of the code is 4. Suppose that the minimum distance is 5, then we have: 
	$$1+(q+1)^2(q-1)+\frac{ (q+1)^2 \left((q+1)^2-1\right)}{2}(q-1)^2> q^5. $$ That shows that the Hamming bound fails for $d\ge 5$.
\end{proof}
In Theorem \ref{MiniD4}, can a zero be added to the abelian code such that the minimum distance of the new code increases by at least 1? That is, whether there is $u$ and $v$ such that the minimum distance of $\left\langle f' \right\rangle $ is at least 5, where $f'$ satisfies $\mathcal{D}(f')=C_{(1,0)}\cup C_{(0,1)}\cup C_{(0,0)}\cup C_{(u,v)}$ and $0\le u,v\le q$. This question is actually can be transformed into the following one:
\begin{question}\label{que}
	Determine the necessary and sufficient conditions on $u$ and $v$
	such that the following homogeneous system: 
	$$	\left\{\begin{matrix}
			x_1+x_2+x_3+x_4=0 \\
			x_1\varepsilon_{q+1}^{i_1}+x_2\varepsilon_{q+1}^{i_2}+x_3\varepsilon_{q+1}^{i_3}+x_4\varepsilon_{q+1}^{i_4}=0 \\
			x_1\varepsilon_{q+1}^{j_1}+x_2\varepsilon_{q+1}^{j_2}+x_3\varepsilon_{q+1}^{j_3}+x_4\varepsilon_{q+1}^{j_4}=0\\
			x_1\varepsilon_{q+1}^{ui_1+vj_1}+x_2\varepsilon_{q+1}^{ui_2+vj_2}+x_3\varepsilon_{q+1}^{ui_3+vj_3}+x_4\varepsilon_{q+1}^{ui_4+vj_4}=0
		\end{matrix}\right.	
	$$has only the trivial solution $(x_1,x_2,x_3,x_4)=(0,0,0,0)$ over $\mathbb{F}_q$, where $0\le i_k,j_k\le q$ and $(i_k,j_k)$ are pairwise distinct, $k=1,2,3,4$.	
\end{question}
Let $q$ be a prime power, and let $\varepsilon=\varepsilon_{q+1}$ be a primitive $(q+1)$‑th root of unity in $\mathbb{F}_{q^2}$. Define
$$
U:=\{z\in\mathbb{F}_{q^2}\mid z^{q+1}=1\}.
$$
Since $0\le i_k,j_k\le q$ exhaust a complete residue system modulo $q+1$, set
$$
a_k=\varepsilon^{i_k},\quad b_k=\varepsilon^{j_k},\qquad P_k=(a_k,b_k)\in U\times U.
$$
Then $P_1,\dots,P_4$ are four distinct points in $U\times U$. The four rows of the system are exactly the evaluations of the monomials $1,\;X,\;Y,\;X^uY^v$ at these four points. The question is therefore transformed into the following.

\begin{question}
	Determine conditions on $u,v$ such that for any four distinct points $P_1,\dots,P_4\in U\times U$,
	$$
	\sum_{k=1}^4x_k=\sum_{k=1}^4x_ka_k=\sum_{k=1}^4x_kb_k=\sum_{k=1}^4x_ka_k^ub_k^v=0,
	$$
	admits only the trivial solution $(x_1,\dots ,x_4)=(0,\dots ,0)$ over $\mathbb{F}_q$.
\end{question} 
\begin{lemma}\label{conjugate}
	Let $\mathbf{x}=(x_1,\dots ,x_4)\in\mathbb{F}_q^4$ satisfy the four equations above. Then $\mathbf{x}$ automatically satisfies
	$$
	\sum_{k}x_ka_k^{-1}=0,\qquad \sum_{k}x_kb_k^{-1}=0,\qquad \sum_{k}x_ka_k^{-u}b_k^{-v}=0.
	$$
\end{lemma}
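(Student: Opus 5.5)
The approach is to apply the $q$-th power Frobenius map to each equation and use the fact that $U$ is exactly the group of $(q+1)$-th roots of unity. The key steps, in order, are as follows.

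First, since $x_k\in\mathbb{F}_q$, we have $x_k^q=x_k$. Second, raising to the $q$-th power is a field automorphism of $\mathbb{F}_{q^2}$, and in particular it is additive. Third, for every $z\in U$ we have $z^{q+1}=1$, hence $z^q=z^{-1}$.

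Now take the second equation $\sum_k x_ka_k=0$ and raise both sides to the $q$-th power. This gives
$$0=\Bigl(\sum_{k=1}^4 x_ka_k\Bigr)^q=\sum_{k=1}^4 x_k^q a_k^q=\sum_{k=1}^4 x_k a_k^{-1}.$$
Here we used $a_k=\varepsilon^{i_k}\in U$, so $a_k^q=a_k^{-1}$. The identical computation applied to the third equation yields $\sum_k x_kb_k^{-1}=0$.

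For the fourth equation, note that $U$ is a multiplicative group, so $a_k^ub_k^v\in U$ as well. Therefore $(a_k^ub_k^v)^q=a_k^{-u}b_k^{-v}$, and raising $\sum_k x_ka_k^ub_k^v=0$ to the $q$-th power gives $\sum_k x_ka_k^{-u}b_k^{-v}=0$.

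The argument is essentially the conjugation trick already used in the proof of \Cref{Th 3.3}, so I do not expect a real obstacle. The only point needing care is the third ingredient: $a_k,b_k$ must lie in $U$, and not merely in some extension field. This holds because $\varepsilon$ has order $q+1$, so every power of $\varepsilon$ lies in $U$.

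Alternatively, one can phrase the whole argument via \Cref{Th 3.3}: the defining set is closed under multiplication by $q\equiv -1 \pmod{q+1}$. On this view, $C_{(1,0)}\ni(-1,0)$, $C_{(0,1)}\ni(0,-1)$, and $C_{(u,v)}\ni(-u,-v)$, which is exactly the content of the lemma.
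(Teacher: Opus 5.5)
Your proposal is correct and is essentially the paper's proof. Both raise each equation to the $q$-th power and use $x_k^q=x_k$ together with $z^q=z^{-1}$ for $z\in U$, applied to $a_k^ub_k^v\in U$ for the fourth equation. Your alternative phrasing via Theorem~\ref{Th 3.3} with $q\equiv -1 \pmod{q+1}$ is also valid, since that theorem rests on the same Frobenius argument.
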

\begin{proof}
	Take the $q$-th power of the second equation. The Frobenius map is a field homomorphism, $x_k^q=x_k$ since $x_k\in\mathbb{F}_q$, and for $z\in U$, $z^q=z^{q+1}\cdot z^{-1}=z^{-1}$. Thus
	$$
	0=\Bigl(\sum_k x_ka_k\Bigr)^q=\sum_k x_ka_k^{-1}.
	$$
	The remaining identities follow analogously.
\end{proof}
\begin{lemma}\label{Type set}
	Suppose $\mathbf{x}=(x_1,\dots ,x_4)\ne \mathbf{0}$ is a solution. Then
	\begin{itemize}
		\item $x_k\ne 0$ for all $k$;
		\item the multiset $\{a_1,a_2,a_3,a_4\}$ has multiplicity type only $4$, $2+2$, or $1+1+1+1$ (all distinct). The same holds for $\{b_k\}$.
	\end{itemize}
\end{lemma}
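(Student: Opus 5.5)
The plan is to reduce both claims to small linear systems. The only tools needed are the minimum distance $4$ from Theorem~\ref{MiniD4}, the conjugate equations of Lemma~\ref{conjugate}, and a Vandermonde determinant. The fourth equation (the one involving $u,v$) is not used at all.

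\emph{First bullet.} Suppose $\mathbf{x}\ne\mathbf{0}$ is a solution and some $x_k=0$. Define $g=\sum_{k}x_k x_1^{i_k}x_2^{j_k}\in\mathbb{F}_q[x_1,x_2]/\langle x_1^{q+1}-1,x_2^{q+1}-1\rangle$. Since the $(i_k,j_k)$ are pairwise distinct, $g$ is a nonzero element with $1\le \text{wt}(g)\le 3$. The first three equations say that $g$ vanishes at $(\varepsilon^0,\varepsilon^0)$, $(\varepsilon,\varepsilon^0)$ and $(\varepsilon^0,\varepsilon)$. By Theorem~\ref{Th 3.3} (equivalently, by Lemma~\ref{conjugate}), $g$ then vanishes on all of $C_{(0,0)}\cup C_{(1,0)}\cup C_{(0,1)}$. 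Thus $g$ is a nonzero codeword of the code in Theorem~\ref{MiniD4}, which has minimum distance $4$. This contradicts $\text{wt}(g)\le 3$, so every $x_k$ is nonzero.

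\emph{Second bullet.} A multiset of four elements has one of the types $4$, $3+1$, $2+2$, $2+1+1$ or $1+1+1+1$, so it suffices to exclude $3+1$ and $2+1+1$.
\begin{itemize}
\item For type $3+1$, say $a_1=a_2=a_3\ne a_4$. Subtract $a_1$ times the first equation from the second to get $x_4(a_4-a_1)=0$. Hence $x_4=0$, contradicting the first bullet.
\item For type $2+1+1$, say $a_1=a_2=\alpha$, $a_3=\beta$, $a_4=\gamma$ with $\alpha,\beta,\gamma$ distinct. Put $y=x_1+x_2$. The first two equations and the first identity of Lemma~\ref{conjugate} give
$$
\begin{pmatrix}1&1&1\\ \alpha&\beta&\gamma\\ \alpha^{-1}&\beta^{-1}&\gamma^{-1}\end{pmatrix}\begin{pmatrix}y\\ x_3\\ x_4\end{pmatrix}=\mathbf{0}.
$$
Multiply the columns by $\alpha,\beta,\gamma$ respectively (all nonzero) and permute the rows. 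The resulting matrix is the Vandermonde matrix on $\alpha,\beta,\gamma$, whose determinant is nonzero because the three elements are distinct. Hence $x_3=x_4=0$, again contradicting the first bullet.
\end{itemize}
The argument for $\{b_k\}$ is identical, using the third equation and the second identity of Lemma~\ref{conjugate}.

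The step that needs the most care is the first bullet. There one must check that the vector $\mathbf{x}$, placed at the positions $(i_k,j_k)$, really gives a codeword of the code in Theorem~\ref{MiniD4}. This requires the defining set to be closed under multiplication by $q$, i.e. that $C_{(1,0)}=\{(1,0),(q,0)\}$ and similarly for $C_{(0,1)}$. That closure is exactly what Lemma~\ref{conjugate} provides, since $\varepsilon^{q}=\varepsilon^{-1}$. Everything after that is elementary linear algebra.
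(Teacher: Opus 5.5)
Your proposal is correct and follows the paper's proof essentially step for step: the first bullet is a contradiction with the minimum distance $4$ of the code in Theorem~\ref{MiniD4}, type $3+1$ is handled by subtracting $\alpha$ times the first equation from the second, and type $2+1+1$ goes through the $3\times 3$ system built from $\sum x_k=0$, $\sum x_ka_k=0$ and the conjugate identity of Lemma~\ref{conjugate}. You also make explicit two steps the paper leaves implicit: why the weight-$\le 3$ vector really is a codeword (Frobenius closure of the defining set), and the Vandermonde reduction showing that the $3\times 3$ matrix is nonsingular.
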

\begin{proof}
	Suppose there exists $x_k = 0$, this implies that there exists a codeword of weight less than 4, which contradicts Theorem \ref{MiniD4}. Hence, the assumption is false. Therefore, for every $k$, we have $x_k \neq 0$.
	Type $3+1$: Suppose $a_1=a_2=a_3=\alpha\ne a_4$. From $\sum x_k=0$ and $\alpha(x_1+x_2+x_3)+a_4x_4=0$, we get $(a_4-\alpha)x_4=0$, hence $x_4=0$, contradiction. Type $2+1+1$: Let $a_1=a_2=\alpha$, and $a_3=\beta$, $a_4=\gamma$, with $\alpha,\beta,\gamma$ pairwise distinct. Set $y_1=x_1+x_2$, $y_3=x_3$, $y_4=x_4$. Lemma \ref{conjugate} yields
	$$
	\begin{cases}
		y_1+y_3+y_4 = 0\\
		\alpha y_1+\beta y_3+\gamma y_4 = 0\\
		\alpha^{-1}y_1+\beta^{-1}y_3+\gamma^{-1}y_4 = 0
	\end{cases}.
	$$
	It is easy to obtain that $y_1=y_2=y_3=0$, so $x_3=x_4=0$, contradiction.
\end{proof}
Fix four distinct elements $z_1,\dots,z_4\in\mathbb{F}_{q^2}^*$. Write
$$
Z_k:=\prod_{l\ne k}(z_k-z_l),\qquad F(w;Z):=\sum_{k=1}^4\frac{z_k^{\,w}}{Z_k}\quad(w\in\mathbb Z).
$$
$F(w;Z)$ is the third‑order divided difference of the function $z\mapsto z^w$ at nodes $z_1,\dots,z_4$ and $Z=\left\lbrace z_k:1\le k\le 4\right\rbrace $. If $z_k\in U$, then $z_k^{q+1}=1$, so $F(w;Z)$ depends only on $w\pmod{q+1}$.
\begin{lemma}\label{Fwz}
	With the notations above, we have
	$$
	F(w;Z)=0\ \text{for }w=0,1,2, F(3;Z)=1,$$$$ F(m;Z)=h_{m-3}(z_1,\dots,z_4)\ \text{for }m\ge 3,
	$$
	where $h_r$ denotes the complete homogeneous symmetric polynomial of degree $r$.
\end{lemma}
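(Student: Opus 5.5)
The plan is to prove all three claims at once through the partial fraction expansion of a rational function. Put $P(t)=\prod_{k=1}^4(1-z_kt)$ and work in the formal power series ring $\mathbb{F}_{q^2}[[t]]$. The standard generating function for complete homogeneous symmetric polynomials is
$$\frac{1}{P(t)}=\sum_{r\ge 0}h_r(z_1,\dots,z_4)\,t^r .$$
Because the $z_k$ are distinct and nonzero, $1/P(t)$ decomposes into simple fractions:
$$\frac{1}{P(t)}=\sum_{k=1}^4\frac{A_k}{1-z_kt}.$$
To find the coefficients, multiply by $1-z_kt$ and evaluate at $t=z_k^{-1}$. For $l\ne k$ we have $1-z_l z_k^{-1}=(z_k-z_l)/z_k$, so
$$A_k=\prod_{l\ne k}\frac{z_k}{z_k-z_l}=\frac{z_k^{3}}{Z_k}.$$

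Next, expand each simple fraction geometrically and compare coefficients of $t^r$. This gives
$$h_r(z_1,\dots,z_4)=\sum_{k=1}^4\frac{z_k^{r+3}}{Z_k}=F(r+3;Z)\qquad (r\ge 0).$$
Writing $m=r+3$, this is the identity $F(m;Z)=h_{m-3}$ for $m\ge3$. At $m=3$ it gives $F(3;Z)=h_0=1$.

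For $w=0,1,2$, the route is Lagrange interpolation. Let $g(z)=z^w$, a polynomial of degree at most $2$. Since there are four nodes, $g$ coincides with its Lagrange interpolant:
$$z^w=\sum_{k=1}^4 z_k^w\prod_{l\ne k}\frac{z-z_l}{z_k-z_l}.$$
The coefficient of $z^3$ on the right-hand side is $\sum_k z_k^w/Z_k=F(w;Z)$. On the left-hand side that coefficient is $0$ because $w\le 2$, so $F(w;Z)=0$. The same comparison with $w=3$ gives a second check that $F(3;Z)=1$.

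Both arguments are routine. The only point that needs care is the computation of $A_k$, and in particular that it equals $z_k^3/Z_k$ rather than the reciprocal expression. Everything stays over $\mathbb{F}_{q^2}$, so the characteristic causes no trouble. The argument needs only $z_k\ne0$ and the $z_k$ pairwise distinct; the hypothesis $z_k\in U$ is not used.
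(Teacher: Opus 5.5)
Your proof is correct and follows essentially the same route as the paper. For $w=0,1,2$ you read off the coefficient of $z^3$ in the Lagrange interpolant, and for $m\ge 3$ you compare coefficients in the partial-fraction identity $\prod_{l}(1-z_lt)^{-1}=\sum_k (z_k^3/Z_k)(1-z_kt)^{-1}$; your computation $A_k=z_k^3/Z_k$ and your remark that only distinctness and nonvanishing of the $z_k$ are needed are both accurate.
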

\begin{proof}
	By Lagrange interpolation, $\sum_k f(z_k)/Z_k$ equals the coefficient of $z^3$ in the interpolating polynomial of degree at most $3$. Hence it vanishes whenever $\deg f\le 2$. For $f(z)=z^m$, since
	$$
	\prod_{l\ne k}\Bigl(1-\frac{z_l}{z_k}\Bigr)=\frac{Z_k}{z_k^3},
	$$
	we have
	$$
	\prod_{l=1}^4\frac1{1-z_lt}=\sum_{k=1}^4\frac{z_k^3/Z_k}{1-z_kt}.
	$$
	Expand both sides as power series in $t$ and compare coefficients of $t^m$. The left‑hand side gives $h_m$, and the right‑hand side gives $\sum_k z_k^{m+3}/Z_k=F(m+3;Z)$.
\end{proof}

\begin{lemma}\label{Fwz2}
	Let $Z=\{z_k\}\subset U$, $Z^{-1}=\{z_k^{-1}\}$, and $e_4=z_1z_2z_3z_4$. For every integer $w$,
	$$
	F(2-w;\,Z^{-1})=-e_4\,F(w;Z).
	$$
	In particular, $F(-1;Z)=-e_4^{-1}\ne 0$, and $F(-n;Z)=-h_{n-1}(Z^{-1})/e_4$.
\end{lemma}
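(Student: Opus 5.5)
We compute $F(2-w;Z^{-1})$ directly from the definition by rewriting each denominator in terms of the original nodes $z_k$. The nodes of $Z^{-1}$ are $z_k^{-1}$, and for $l\ne k$ we have $z_k^{-1}-z_l^{-1}=(z_l-z_k)/(z_kz_l)=-(z_k-z_l)/(z_kz_l)$. Taking the product over the three indices $l\ne k$ gives
$$
\prod_{l\ne k}\bigl(z_k^{-1}-z_l^{-1}\bigr)=\frac{(-1)^3 Z_k}{z_k^{3}\prod_{l\ne k}z_l}=-\frac{Z_k}{z_k^{2}\,e_4},
$$
since $z_k^3\prod_{l\ne k}z_l=z_k^2e_4$. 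Therefore
$$
F(2-w;Z^{-1})=\sum_{k=1}^4 \frac{z_k^{-(2-w)}}{-Z_k/(z_k^2e_4)}=-e_4\sum_{k=1}^4\frac{z_k^{w-2}z_k^{2}}{Z_k}=-e_4F(w;Z).
$$
This identity is purely algebraic. It needs only that the $z_k$ are distinct and nonzero, which holds since $U\subset\mathbb{F}_{q^2}^*$. The hypothesis $Z\subset U$ is not needed here; it matters only for the periodicity of $F$ modulo $q+1$.

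For the first special value, take $w=3$ in the identity, which gives $F(-1;Z^{-1})=-e_4F(3;Z)=-e_4$ by \Cref{Fwz}. Now apply the same identity with the roles of $Z$ and $Z^{-1}$ exchanged. This is legitimate because $Z^{-1}$ is again a set of four distinct nonzero elements, and its product of elements is $e_4^{-1}$. We obtain $F(-1;Z)=-e_4^{-1}F(3;Z^{-1})=-e_4^{-1}$, which is nonzero.

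For the general negative values, set $w=n+2$ in the identity applied to the node set $Z^{-1}$. This gives $F(-n;Z)=-e_4^{-1}F(n+2;Z^{-1})$. By \Cref{Fwz}, $F(n+2;Z^{-1})=h_{n-1}(Z^{-1})$ for $n+2\ge 3$, that is, for $n\ge1$. Hence $F(-n;Z)=-h_{n-1}(Z^{-1})/e_4$, which is the stated formula.

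The only point requiring care is the sign and the power bookkeeping in the product of differences. In particular, one must check that exactly three factors appear, so the sign is $(-1)^3=-1$, and that the powers of $z_k$ collected from $z_k^3\prod_{l\ne k}z_l$ combine to $z_k^2e_4$. Beyond that bookkeeping, the argument is routine.
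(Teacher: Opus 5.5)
Your proof is correct and matches the paper's argument: the same rewriting $\prod_{l\ne k}(z_k^{-1}-z_l^{-1})=-Z_k/(z_k^2e_4)$, followed by direct substitution into the definition of $F$. The paper leaves the special values implicit, and your derivation of them is also correct: you apply the identity to the node set $Z^{-1}$, whose product is $e_4^{-1}$, and combine it with Lemma~\ref{Fwz} to get $F(-1;Z)=-e_4^{-1}$ and $F(-n;Z)=-h_{n-1}(Z^{-1})/e_4$ for $n\ge 1$.
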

\begin{proof}
	From $z_k^{-1}-z_l^{-1}=-\dfrac{z_k-z_l}{z_kz_l}$,
	$$
	\prod_{l\ne k}(z_k^{-1}-z_l^{-1})
	=\frac{(-1)^3Z_k}{z_k^3\cdot(e_4/z_k)}
	=-\frac{Z_k}{z_k^2e_4}.
	$$
	Then
	$$
	F(2-w;Z^{-1})
	=\sum_k\frac{z_k^{w-2}}{-Z_k/(z_k^2e_4)}
	=-e_4F(w;Z).$$
\end{proof}
\begin{lemma}\label{Solution Lemma}
	Let $a_1,\dots,a_4\in U$ be pairwise distinct. Write $A_k=\prod_{l\ne k}(a_k-a_l)$ and $e_4=\prod_{l=1}^4 a_l$, $A=\left\lbrace a_k:1\le k\le 4\right\rbrace $. Define
	$$
	\eta_k:=\frac{a_k}{A_k},
	\qquad \bm{\xi}:=c\bm{\eta}=c(\eta_1,\dots ,\eta_4),
	$$
	where $c\in\mathbb{F}_{q^2}^*$ satisfies $c^{\,q-1}=-e_4^{-1}$. Then such a $c$ exists, $\bm{\xi}\in\mathbb{F}_q^4$, all components of $\bm{\xi}$ are nonzero, and $V:=\{\mathbf{x}\in\mathbb F_q^4:\ \textstyle\sum x_k=\sum x_ka_k=0\}=\mathbb F_q\bm{\xi}\quad(\dim_{\mathbb F_q}V=1).$
\end{lemma}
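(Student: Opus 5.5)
Over $\mathbb{F}_{q^2}$ the space $V$ is governed by the divided-difference identities of Lemma~\ref{Fwz}; Frobenius (Lemma~\ref{conjugate}) together with Lemma~\ref{Fwz2} then shows the solution line is defined over $\mathbb{F}_q$ once it is scaled by a suitable $c$.

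\emph{Step 1 ($\bm\eta$ solves the system).} By Lemma~\ref{Fwz} with $Z=A$, we have $\sum_k\eta_k=\sum_k a_k/A_k=F(1;A)=0$ and $\sum_k\eta_k a_k=F(2;A)=0$. Each $\eta_k\neq 0$ because $a_k\in U$ and the $a_l$ are distinct. The coefficient matrix of the two equations has rows $(1,1,1,1)$ and $(a_1,\dots,a_4)$. These are independent over $\mathbb{F}_{q^2}$ since the $a_k$ are not all equal. Hence the solution space $W$ over $\mathbb{F}_{q^2}$ is $2$-dimensional and contains $\bm\eta$.

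\emph{Step 2 (Frobenius).} Let $\sigma$ be the coordinatewise $q$-th power. Raising the two equations to the $q$-th power and using $a_k^q=a_k^{-1}$ gives a third condition satisfied by every vector of $V$: $\sum_k x_k a_k^{-1}=0$. So $V\subseteq W'\cap\mathbb{F}_q^4$, where
\[
W'=\Bigl\{\mathbf{x}\in\mathbb{F}_{q^2}^4:\ \sum_k x_k=\sum_k x_ka_k=\sum_k x_ka_k^{-1}=0\Bigr\}.
\]
The three rows are evaluations of $1,z,z^{-1}$ at four distinct points. After multiplying column $k$ by $a_k$ they become $z,z^2,1$, a Vandermonde-type $3\times4$ matrix of rank $3$. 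So $\dim W'=1$. Lemma~\ref{Fwz} gives $F(0;A)=0$, so $\bm\eta\in W'$ and $W'=\mathbb{F}_{q^2}\bm\eta$. Therefore $\dim_{\mathbb{F}_q}V\le 1$, and any nonzero vector of $\mathbb{F}_{q^2}\bm\eta\cap\mathbb{F}_q^4$ spans $V$.

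\emph{Step 3 (computing $\sigma(\bm\eta)$; the main obstacle).} The task is to show $\eta_k^q=-e_4^{-1}\eta_k$ for every $k$. Using $a^q=a^{-1}$ we get
\[
\eta_k^q=\frac{a_k^{-1}}{\prod_{l\ne k}(a_k^{-1}-a_l^{-1})}.
\]
By the computation in the proof of Lemma~\ref{Fwz2}, $\prod_{l\neq k}(a_k^{-1}-a_l^{-1})=-A_k/(a_k^2e_4)$. Hence
\[
\eta_k^q=-\frac{a_k e_4}{A_k}=-e_4\,\eta_k.
\]
So the scalar is $-e_4$ rather than $-e_4^{-1}$; since $e_4\in U$, $e_4^q=e_4^{-1}$, and I will reconcile this with the statement's normalization. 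I need $(c\eta_k)^q=c\eta_k$, i.e.\ $c^{q}(-e_4)=c$, i.e.\ $c^{q-1}=-e_4^{-1}$, which matches the statement.

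\emph{Step 4 (existence of $c$).} Such a $c$ exists because $x\mapsto x^{q-1}$ maps $\mathbb{F}_{q^2}^*$ onto the subgroup of order $q+1$, which is $U$. Here $-e_4^{-1}\in U$, since $(-1)^{q+1}=1$ in every characteristic and $e_4\in U$. With this $c$, $\bm\xi=c\bm\eta$ is fixed by $\sigma$, hence lies in $\mathbb{F}_q^4$. Its components are nonzero, and $\bm\xi\in V$. Combined with $\dim V\le1$ this gives $V=\mathbb{F}_q\bm\xi$.
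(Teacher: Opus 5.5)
Your proposal is correct and follows essentially the same route as the paper: $\bm\eta$ is annihilated by the rows $(1)$, $(a_k)$, $(a_k^{-1})$ via $F(1;A)=F(2;A)=F(0;A)=0$, the rank-$3$ argument gives $W'=\mathbb{F}_{q^2}\bm\eta$, and the Frobenius computation $\eta_k^q=-e_4\eta_k$ together with the surjectivity of $x\mapsto x^{q-1}$ onto $U$ yields $\bm\xi\in\mathbb{F}_q^4$ and $V=\mathbb{F}_q\bm\xi$. Your Vandermonde justification of the rank is a detail the paper omits, and the only blemish is the misstated target $\eta_k^q=-e_4^{-1}\eta_k$ at the start of Step 3; your own computation corrects it to $-e_4$, which is exactly what the normalization $c^{q-1}=-e_4^{-1}$ requires.
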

\begin{proof}
	\begin{enumerate}[(i)]
		\item Existence of $c$: The image $\{c^{q-1}\mid c\in\mathbb{F}_{q^2}^*\}$ is the subgroup of $\mathbb{F}_{q^2}^*$ of order $q+1$, i.e., $U$. Since $e_4\in U$ and $(-1)^{q+1}=1$, we have $-1\in U$, hence $-e_4^{-1}\in U$.
		\item The vector $\bm{\eta}$ satisfies $\sum\eta_k=F(1,A)=0$, $\sum\eta_ka_k=F(2,A)=0$, and $\sum\eta_ka_k^{-1}=F(0,A)=0$ by Lemma \ref{Fwz}.
		\item Rationality of $\xi$:
		$$
		\eta_k^{\,q}=\frac{a_k^{-1}}{\prod_{l\ne k}(a_k^{-1}-a_l^{-1})}
		=\frac{a_k^{-1}\cdot(-a_k^2e_4)}{A_k}
		=-e_4\,\eta_k.
		$$
		Thus
		$$
		\xi_k^q = c^q(-e_4)\eta_k
		= c\cdot\bigl(c^{q-1}(-e_4)\bigr)\eta_k
		= c\eta_k=\xi_k,
		$$
		so $\bm{\xi}\in\mathbb{F}_q^4$, and clearly $\xi_k\ne0$.
		\item Dimension: $V$ is contained in the kernel over $\mathbb{F}_{q^2}$ of the system given by the three rows $(1,1,1,1)$, $(a_k)$, $(a_k^{-1})$. This $3\times4$ matrix has rank $3$, so its kernel is the one‑dimensional space $\mathbb{F}_{q^2}\bm{\eta}$. Hence $V\subseteq \mathbb{F}_{q^2}\bm{\eta}\cap\mathbb{F}_q^4$. Any $\mathbf{x}=\mu\bm{\eta}\in V$ can be written as $\mathbf{x}=(\mu/c)\bm{\xi}$. Since $\bm{\xi}$ has nonzero components, $\mu/c\in\mathbb{F}_q$, so $V=\mathbb{F}_q\bm{\xi}$.
	\end{enumerate}
\end{proof}

\begin{lemma}\label{Sidon set}
	Suppose $q$ is even. Let $z_1,z_2,z_3,z_4\in U$. If $\{z_1,z_2\}\ne\{z_3,z_4\}$, then $z_1+z_2\ne z_3+z_4$. In particular, the sum of four distinct elements of $U$ is nonzero.
\end{lemma}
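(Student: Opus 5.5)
The idea is to show that a relation $z_1+z_2=z_3+z_4$ between elements of $U$ forces the product to agree as well, so both pairs are roots of the same quadratic. Suppose $z_1,z_2,z_3,z_4\in U$ satisfy $z_1+z_2=z_3+z_4=:s$. The key tool is again the Frobenius trick from Lemma \ref{conjugate}: for $z\in U$ we have $z^q=z^{-1}$.

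First dispose of $s=0$. Since $q$ is even, $z_1+z_2=0$ means $z_1=z_2$. Likewise $z_3=z_4$, and nothing else is forced by this alone. We will come back to this subcase after the main argument.

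Now assume $s\ne 0$. Raising $z_1+z_2=s$ to the $q$-th power gives
$$z_1^{-1}+z_2^{-1}=s^q, \qquad \text{i.e.}\qquad \frac{s}{z_1z_2}=s^q .$$
Hence $z_1z_2=s^{1-q}$, and by the same computation $z_3z_4=s^{1-q}$. So $z_1,z_2$ and $z_3,z_4$ are both the root multisets of the single quadratic $T^2+sT+s^{1-q}$. This gives $\{z_1,z_2\}=\{z_3,z_4\}$ as multisets, which is the desired contrapositive.

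For the subcase $s=0$, we have $z_1=z_2$ and $z_3=z_4$. The intended reading of the lemma is for distinct pairs, i.e. $z_1\ne z_2$ and $z_3\ne z_4$, or at least the application to four distinct elements. With distinct pairs, $z_1=z_2$ cannot occur, so in characteristic $2$ we get $s=z_1+z_2\neq 0$ and the main argument applies. I would state this restriction explicitly, since without it the subcase $z_1=z_2$, $z_3=z_4$ with $z_1\ne z_3$ would be a counterexample.

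For the ``in particular'' clause, let $z_1,\dots,z_4\in U$ be distinct with $z_1+z_2+z_3+z_4=0$. In characteristic $2$ this reads $z_1+z_2=z_3+z_4$. Both pairs consist of distinct elements, so $s\ne0$ and the main argument gives $\{z_1,z_2\}=\{z_3,z_4\}$, contradicting distinctness.

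The main obstacle is handling this degenerate case $s=0$ cleanly. It is exactly where characteristic $2$ enters (through $-1=1$), and it is what determines whether the statement needs a distinctness hypothesis.
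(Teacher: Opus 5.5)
Your proof is correct and essentially the paper's argument: the paper compares the norms $N(s)=s^{q+1}=s^2/(z_1z_2)=s^2/(z_3z_4)$ where you compare $q$-th powers (the same identity divided by $s$), and both then place the two pairs as the roots of one quadratic $X^2+sX+\pi$. The paper also handles $s=0$ only by appealing to distinctness in the application. Your observation is right that, as literally stated, the lemma needs $z_1\ne z_2$ and $z_3\ne z_4$: for instance $z_1=z_2\ne z_3=z_4$ gives $z_1+z_2=0=z_3+z_4$.
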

\begin{proof}
	Assume $s=z_1+z_2=z_3+z_4$. If $s=0$, then in characteristic two $z_1=z_2$ and $z_3=z_4$. The sets reduce to singletons; this case is excluded in our application because the $z_i$ are distinct, so $s\ne0$. For $s\ne0$, recall $N(x)=x^{q+1}$ and $z^q=z^{-1}$ for $z\in U$. Then
	$$
	N(z_1+z_2)=(z_1+z_2)(z_1^{-1}+z_2^{-1})=\frac{(z_1+z_2)^2}{z_1z_2}.
	$$
	Similarly, $N(z_3+z_4)=s^2/(z_3z_4)$. Equality of norms together with $s\ne0$ yields $z_1z_2=z_3z_4=:\pi$. Then $z_1,z_2$ and $z_3,z_4$ are both roots of $X^2+sX+\pi$, so $\{z_1,z_2\}=\{z_3,z_4\}$, contradiction.
	
	For the corollary: if four distinct elements sum to zero, then $z_1+z_2=z_3+z_4$ in characteristic two, forcing $\{z_1,z_2\}=\{z_3,z_4\}$, contradicting distinctness.
\end{proof}
By Lemma \ref{Type set}, each of $\{a_k\}$ and $\{b_k\}$ can only have multiplicity type: all equal, $2+2$, or all distinct. Using pairwise distinctness of $P_k$, impossible combinations are eliminated:
\begin{itemize}
	\item All‑equal $a$ + all‑equal or $2+2$ $b$: forces coincident points; discard.
	\item $2+2$ for both $a$ and $b$: pairings must cross, giving the rectangle $\{\alpha_1,\alpha_2\}\times\{\beta_1,\beta_2\}$.
\end{itemize}
The remaining cases are listed in \Cref{tab:type_classification}.
\begin{table}[htbp]
	\centering
	\caption{Classification of types.}
	\begin{tabular}{ccc}
		\toprule
		Type  & $\{a_k\}$    & $\{b_k\}$    \\
		\midrule
		I     & all equal    & all distinct \\
		I$'$  & all distinct & all equal    \\
		II    & $2+2$        & $2+2$        \\
		III   & $2+2$        & all distinct \\
		III$'$& all distinct & $2+2$        \\
		IV    & all distinct & all distinct \\
		\bottomrule
	\end{tabular}
	\label{tab:type_classification}
\end{table}
We derive necessary and sufficient conditions for absence of nontrivial solutions for each type.
\subsection*{Type II}
Let $P=(\alpha_m,\beta_n)$ for $m,n\in\{1,2\}$, with $\alpha_1\ne\alpha_2$, $\beta_1\ne\beta_2$. Set $a_1=a_2=\alpha_1, a_3=a_4=\alpha_2, b_1=b_3=\beta_1, b_2=b_4=\beta_2.$ Denote unknowns by $x_{mn}$. Define row sums $r_m=x_{m1}+x_{m2}$. The first two equations give
$$
r_1+r_2=0,\qquad \alpha_1r_1+\alpha_2r_2=0,
$$
so $(\alpha_1-\alpha_2)r_1=0$, hence $r_1=r_2=0$. Consequently
$$
x_{11}=-x_{12}=-x_{21}=x_{22}=:t.
$$
The fourth equation becomes
$$
t\bigl(\alpha_1^u-\alpha_2^u\bigr)\bigl(\beta_1^v-\beta_2^v\bigr)=0.
$$
Since $\alpha_1/\alpha_2$ can be any element of $U\setminus\{1\}$, $\alpha_1^u=\alpha_2^u$ has a solution if and only if $\gcd(u,q+1)>1$. Therefore:

\noindent(N1) Type II is safe $\iff \gcd(u,q+1)=\gcd(v,q+1)=1$.
\subsection*{Type I and I'}
Consider Type I': $b_k\equiv\beta$, $a_k$ pairwise distinct. The third equation holds automatically. By Lemma \ref{Solution Lemma}, any solution is of the form $\bm{x}=\lambda\bm{\xi}$, $\lambda\in\mathbb{F}_q^*$. Substitute into the fourth equation:
$$
\beta^v\sum_k\xi_ka_k^u
= c\lambda\beta^v\sum_k\frac{a_k^{u+1}}{A_k}
= c\lambda\beta^v F(u+1;\{a_k\}).
$$
Type I is symmetric.

Define
$
\mathcal{S}:=\{w\bmod(q+1)\mid F(w;Z)\ne0 \text{ for every 4-element subset }Z\subset U\,\}.
$

\noindent(N2) Types I, I' are safe $\iff u+1\in\mathcal{S}$ and $v+1\in\mathcal{S}$.

From Lemmas \ref{Fwz} and \ref{Fwz2} we immediately obtain basic properties of $\mathcal{S}$:
$$
0,1,2\notin\mathcal{S};\qquad 3,-1\in\mathcal{S};\qquad \mathcal{S}=2-\mathcal{S}.
$$
(The last holds because $Z\mapsto Z^{-1}$ is a bijection on the set of $4$-element subsets of $U$.)
\begin{proposition}
	If $q$ is odd, then (N1) and (N2) cannot hold simultaneously.
\end{proposition}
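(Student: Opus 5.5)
Assume (N1) holds; we will show that (N2) then fails. The idea is that when $q$ is odd, the condition $\gcd(u,q+1)=1$ forces $u+1$ to be even, and every even $w$ lies outside $\mathcal{S}$, because we can exhibit a $4$-element subset $Z\subset U$ that is symmetric under $z\mapsto -z$ and satisfies $F(w;Z)=0$.

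\emph{Step 1 (parity).} Since $q$ is odd, $q+1$ is even. Then $\gcd(u,q+1)=1$ forces $u$ to be odd, so $u+1$ is even. Because $F(w;Z)$ depends only on $w \bmod (q+1)$ and $q+1$ is even, the parity of $w \bmod (q+1)$ is well defined. So it suffices to show that no even residue lies in $\mathcal{S}$.

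\emph{Step 2 (symmetric nodes).} Since $q+1$ is even, $-1\in U$, and $-1\neq 1$ because the characteristic is odd. Since $q\ge 3$, we have $|U|=q+1\ge 4$, so we can pick $z_1,z_2\in U$ with $z_2\neq \pm z_1$. Set
$$Z=\{z_1,-z_1,z_2,-z_2\}\subset U.$$
These are four distinct elements.

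\emph{Step 3 (computing $F$).} First compute the node weights $Z_k$ directly:
$$\prod_{l\ne k}(z_1-z_l)=(2z_1)(z_1^2-z_2^2),$$
$$\prod_{l\ne k}(-z_1-z_l)=(-2z_1)(z_1^2-z_2^2).$$
These two weights are negatives of each other, and the same holds for the pair $\pm z_2$. For even $w$ we have $(-z_1)^w=z_1^w$. Hence the terms of $F(w;Z)=\sum_k z_k^w/Z_k$ for $z_1$ and $-z_1$ cancel, and likewise for $\pm z_2$. Therefore $F(w;Z)=0$ for every even $w$.

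\emph{Conclusion.} Step 3 shows $w\notin\mathcal{S}$ for every even $w$. By Step 1, $u+1$ is even, so $u+1\notin\mathcal{S}$ and (N2) fails. Thus (N1) and (N2) cannot hold simultaneously.

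The only delicate point is the well-definedness of parity modulo $q+1$ and the existence of the symmetric configuration. Both rely on $q+1$ being even and the characteristic being odd, which is exactly the hypothesis. The symmetry $\mathcal{S}=2-\mathcal{S}$ is not needed.
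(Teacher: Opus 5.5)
Your proof is correct and is essentially the paper's argument. In both, (N1) with $q+1$ even forces $u$ odd, so $u+1$ is even, and the $\pm$-symmetric set $Z=\{a,-a,c,-c\}\subset U$ has node weights of opposite sign for $\pm a$ and for $\pm c$. The paired terms of $F(w;Z)$ therefore cancel for every even $w$, which gives $u+1\notin\mathcal{S}$.
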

\begin{proof}
	If $q$ is odd then $2\mid q+1$. Condition (N1) forces $u$ odd, so $w:=u+1$ is even. Choose $a,c\in U$ such that $a\ne\pm c$; this is possible since $-1\in U$, $\lvert U\rvert=q+1\ge4$, and $U$ is partitioned into pairs $\{\pm z\}$ with at least two pairs. Let $Z=\{a,-a,c,-c\}$. Direct computation gives
	$$
	Z_1=\prod_{l\ne1}(a-z_l)=2a(a^2-c^2),$$$$ Z_2=-2a(a^2-c^2),$$$$ Z_3=2c(c^2-a^2),$$$$ Z_4=-2c(c^2-a^2),
	$$
	so
	$$
	F(w;Z)=\frac{a^{w-1}\bigl(1-(-1)^w\bigr)}{2(a^2-c^2)}+\frac{c^{w-1}\bigl(1-(-1)^w\bigr)}{2(c^2-a^2)}=0
	$$
	for even $w$. Hence $u+1\notin\mathcal{S}$.
\end{proof}

From now on set $q=2^e$. By Lemma \ref{Fwz} and \ref{Sidon set}, $4\in\mathcal{S}$. Duality then gives $-2\in\mathcal{S}$. Summarize for even $q$:
$$
\{3,-1,4,-2\}\subseteq\mathcal{S},\qquad \{0,1,2\}\cap\mathcal{S}=\emptyset.
$$
\subsection*{Type III and III'}
Assume $q$ even and (N1) holds. Let $a_1=a_2=\alpha\ne\gamma=a_3=a_4$, with $b_1,\dots,b_4$ pairwise distinct. From the first two equations as in Type II, $x_1+x_2=0$, $x_3+x_4=0$, i.e., $x_2=-x_1$, $x_4=-x_3$, and $x_1,x_3\ne0$.
Set $p=x_1(b_1-b_2)$, $r=x_3(b_3-b_4)$. The third equation yields $p+r=0$. Its conjugate $\sum x_kb_k^{-1}=0$ becomes
$$
-\frac{p}{b_1b_2}-\frac{r}{b_3b_4}=0.
$$
Substitute $r=-p$:
$$
p\biggl(\frac1{b_1b_2}-\frac1{b_3b_4}\biggr)=0.
$$
If $b_1b_2\ne b_3b_4$, then $p=0\Rightarrow x_1=0$, contradiction. Therefore necessarily
$$
b_1b_2=b_3b_4=:\pi,\qquad
x_3=-x_1\frac{b_1-b_2}{b_3-b_4}.
$$
Note that $\dfrac{b_1-b_2}{b_3-b_4}\in\mathbb{F}_q$, since its $q$-th power equals $\dfrac{b_3b_4}{b_1b_2}\cdot\dfrac{b_1-b_2}{b_3-b_4}$.

Substitute into the fourth equation:
$$
x_1\biggl[\alpha^u(b_1^v-b_2^v)-\gamma^u\frac{(b_1-b_2)(b_3^v-b_4^v)}{b_3-b_4}\biggr]=0,
$$
which is equivalent to
$$
\Bigl(\frac{\alpha}{\gamma}\Bigr)^{u}=R:=\frac{(b_1-b_2)(b_3^v-b_4^v)}{(b_3-b_4)(b_1^v-b_2^v)}.
$$
By (N1), $\gcd(v,q+1)=1$, so distinctness of $b_i$ implies distinctness of $b_i^v$, hence the denominator is nonzero.

Using $b_1b_2=b_3b_4=\pi$ together with
$$
(b_1-b_2)^q=\frac{b_1+b_2}{\pi},\qquad
(b_1^v-b_2^v)^q=\frac{b_1^v+b_2^v}{\pi^v},
$$
and analogous identities for $b_3,b_4$, the twisted factors cancel in numerator and denominator, giving $R^q=R$, i.e., $R\in\mathbb{F}_q^*$. On the other hand, $\alpha/\gamma\in U$, so $R\in U\cap\mathbb{F}_q$. For even $q$, $U\cap\mathbb{F}_q=\{z\mid z^2=1\}=\{1\}$, hence $R=1$. Together with $\gcd(u,q+1)=1$ this forces $\alpha=\gamma$, contradiction.

\noindent\textbf{Conclusion.} Under even $q$ and (N1), Types III and III' never admit nontrivial solutions.
\subsection*{Type IV}
Both $\{a_k\}$ and $\{b_k\}$ are sets of four distinct elements. By Lemma \ref{Solution Lemma}, the solution space of the first two equations is $\mathbb{F}_q\xi^{(a)}$, and the solution space of the first and third equations is $\mathbb{F}_q\xi^{(b)}$. Thus the first three equations have a nonzero solution if and only if $\bm{\eta}^{(a)}\parallel\bm{\eta}^{(b)}$ over $\mathbb{F}_{q^2}$, where
$$
\eta^{(a)}_k=\frac{a_k}{A_k},\qquad \eta^{(b)}_k=\frac{b_k}{B_k}.
$$
Call a pair of $4$-tuples $\mathbf{C}=\bigl((a_k),(b_k)\bigr)$ satisfying this condition a collinear configuration. In this situation the fourth equation reads
$$
G(u,v;\mathbf C):=\sum_{k=1}^4\frac{a_k^{\,u+1}b_k^{\,v}}{A_k}=0.
$$

\noindent(N3) Type IV is safe $\iff G(u,v;\mathbf C)\ne0$ for every collinear configuration $\mathbf C$.

Two evident families of collinear configurations. Let $\rho\in U$.
\begin{itemize}
	\item If $b_k=\rho a_k$, then $B_k=\rho^3A_k$, so $\bm{\eta}^{(b)}=\rho^{-2}\bm{\eta}^{(a)}$. Then
	$$
	G=\rho^v\sum_k\frac{a_k^{u+v+1}}{A_k}=\rho^vF(u+v+1;\{a_k\}).
	$$
	\item If $b_k=\rho a_k^{-1}$, then $\bm{\eta}^{(b)}=-e_4\rho^{-2}\bm{\eta}^{(a)}$. Then
	$$
	G=\rho^v\sum_k\frac{a_k^{u-v+1}}{A_k}=\rho^vF(u-v+1;\{a_k\}).
	$$
\end{itemize}
Hence (N3) implies
$$
\textbf{(N3a)}\qquad u+v+1\in\mathcal{S},\qquad u-v+1\in\mathcal{S}.
$$
\begin{theorem}
	Let $q$ be a prime power and. The homogeneous system in Question \ref{que} has only the trivial solution for every collection of pairwise distinct $(i_k,j_k)$ if and only if
	\begin{enumerate}
		\item[(N1)] $\gcd(u,q+1)=\gcd(v,q+1)=1$;
		\item[(N2)] $u+1\in\mathcal{S}$ and $v+1\in\mathcal{S}$;
		\item[(N3)] $G(u,v;\mathcal C)\ne0$ for every collinear configuration $\mathcal C$.
	\end{enumerate}
	Moreover, (N1)$+$(N2) already force $q$ to be even. For even $q$ under (N1), Types II, III, III' are automatically safe, so the three conditions above cover all six  types.
\end{theorem}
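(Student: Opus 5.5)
The plan is to assemble the case analysis already carried out above into a single equivalence. By Lemma \ref{Type set}, any nontrivial solution has all $x_k\ne0$, and each of the multisets $\{a_k\}$, $\{b_k\}$ has type $4$, $2+2$ or $1+1+1+1$. Pairwise distinctness of the points $P_k$ excludes the combinations listed before Table \ref{tab:type_classification}. So every potential nontrivial solution lives in exactly one of Types I, I$'$, II, III, III$'$, IV. The system has only the trivial solution for all admissible $(i_k,j_k)$ if and only if every type is ``safe''. It therefore suffices to show that the conjunction of the per-type safety criteria is equivalent to (N1)$\wedge$(N2)$\wedge$(N3).

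I would proceed type by type.

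\emph{Type II.} The computation above shows that a nontrivial solution exists exactly when $(\alpha_1^u-\alpha_2^u)(\beta_1^v-\beta_2^v)=0$ for some rectangle. Since $\alpha_1/\alpha_2$ ranges over all of $U\setminus\{1\}$, this happens iff $\gcd(u,q+1)>1$ or $\gcd(v,q+1)>1$. This gives necessity and sufficiency of (N1) for Type II.

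\emph{Types I, I$'$.} Lemma \ref{Solution Lemma} says the solution space of the first two (respectively first and third) equations is the line $\mathbb F_q\bm\xi$. Substituting into the fourth equation reduces it to $c\lambda\beta^vF(u+1;A)=0$. Because $c,\lambda,\beta\ne0$, safety for all configurations is exactly $u+1\in\mathcal S$, and symmetrically $v+1\in\mathcal S$, i.e.\ (N2).

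\emph{Type IV.} Again by Lemma \ref{Solution Lemma}, the first three equations have a nonzero common solution iff $\bm\eta^{(a)}\parallel\bm\eta^{(b)}$. In that case the solution is a scalar multiple of $\bm\xi^{(a)}$, and the fourth equation becomes $G(u,v;\mathbf C)=0$, which is (N3).

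The ``if'' direction then reduces to handling Types III and III$'$ under the hypotheses (N1)$+$(N2)$+$(N3). This is where I rely on the remaining structural claim. The Proposition above shows that for odd $q$ the conditions (N1) and (N2) are incompatible, so the hypotheses force $q$ even. For even $q$ under (N1), the Type III/III$'$ analysis shows no nontrivial solution exists. That analysis goes as follows:
\begin{itemize}
\item the first two equations force $x_2=-x_1$ and $x_4=-x_3$;
\item the third equation together with its conjugate from Lemma \ref{conjugate} forces $b_1b_2=b_3b_4$;
\item the fourth equation forces $(\alpha/\gamma)^u=R$ with $R\in U\cap\mathbb F_q=\{1\}$;
\item $\gcd(u,q+1)=1$ then gives $\alpha=\gamma$, a contradiction.
\end{itemize}
For the ``only if'' direction, a failure of (N1), (N2) or (N3) produces an explicit bad configuration of Type II, I/I$'$ or IV respectively. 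This yields a nonzero $\mathbf x\in\mathbb F_q^4$ and hence a nontrivial solution.

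The main obstacle I expect is the rationality issue in the ``only if'' direction. A vanishing of $F$ or $G$ over $\mathbb F_{q^2}$ must yield a solution with entries in $\mathbb F_q$, not merely in $\mathbb F_{q^2}$. Lemma \ref{Solution Lemma} handles this: the scaled vector $\bm\xi=c\bm\eta$ lies in $\mathbb F_q^4$ with nonzero entries. So in Types I and IV the vector $\bm\xi^{(a)}$ itself is the required nontrivial $\mathbb F_q$-solution. In Type II, $t=1$ works directly.

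A secondary point to verify carefully is the step $R^q=R$ in Type III. This should follow by applying $z^q=z^{-1}$ on $U$ to each factor and using $b_1b_2=b_3b_4=\pi$, so that the powers of $\pi$ cancel. I would write that cancellation out explicitly.
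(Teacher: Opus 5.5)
Your proposal is correct and follows the paper's own route. You sort nontrivial solutions into the six types via Lemma~\ref{Type set} and match (N1), (N2), (N3) to Types II, I/I$'$ and IV respectively. You then use the Proposition to force $q$ even and dispose of Types III/III$'$ via $R\in U\cap\mathbb{F}_q=\{1\}$. The only addition is that you spell out the ``only if'' witnesses ($t=1$ for Type II, and $\bm{\xi}\in\mathbb{F}_q^4$ with nonzero entries for Types I/I$'$ and IV), which the paper leaves implicit.
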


\begin{corollary}
	From $0,1,2\notin\mathcal{S}$ together with (N2) and (N3a):
	$$
	u, v, u+v, u-v \not\equiv\ 0, \pm1 \pmod{q+1}, $$$$\gcd(u,q+1)=\gcd(v,q+1)=1,\qquad q=2^e.
	$$
\end{corollary}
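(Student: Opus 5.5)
The plan is to read off each listed restriction from a condition already established in the preceding theorem, using only the elementary membership facts about $\mathcal{S}$ recorded earlier: $0,1,2\notin\mathcal{S}$ and $\mathcal{S}=2-\mathcal{S}$. Throughout, work modulo $q+1$, since $F(w;Z)$ depends only on $w \bmod (q+1)$ when $Z\subset U$. Then $\mathcal{S}$ is a well-defined subset of $\mathbb{Z}_{q+1}$.

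First, the condition $\gcd(u,q+1)=\gcd(v,q+1)=1$ is exactly (N1). The requirement $q=2^e$ is the Proposition: if $q$ is odd, (N1) and (N2) cannot hold simultaneously. Hence any pair $(u,v)$ satisfying the theorem's conditions forces $q$ even, i.e. $q=2^e$.

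Next I handle the congruence conditions. Condition (N2) says $u+1\in\mathcal{S}$ and $v+1\in\mathcal{S}$. Condition (N3) implies (N3a), via the two families of collinear configurations $b_k=\rho a_k$ and $b_k=\rho a_k^{-1}$; these give $u+v+1\in\mathcal{S}$ and $u-v+1\in\mathcal{S}$. For any $t\in\{u,v,u+v,u-v\}$ we therefore have $t+1\in\mathcal{S}$. Since $0,1,2\notin\mathcal{S}$, this gives $t+1\not\equiv 0,1,2$, i.e. $t\not\equiv -1,0,1 \pmod{q+1}$, which is the stated exclusion $t\not\equiv 0,\pm1$. The symmetry $\mathcal{S}=2-\mathcal{S}$ is not needed, but it is consistent with the result: replacing $v$ by $-v$ swaps $u+v$ and $u-v$, so the list of excluded residues is invariant.

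The only point requiring care is the validity of the two collinear families. In particular, for $b_k=\rho a_k^{-1}$ the $b_k$ must be pairwise distinct elements of $U$ and the points $(a_k,b_k)$ distinct, so that these are genuine Type IV configurations. This holds because $a\mapsto\rho a^{\pm1}$ is a bijection of $U$, and because $|U|=q+1\ge4$ guarantees that four distinct $a_k$ exist. One should also note that $G=\rho^vF(\cdot;\{a_k\})$ with $\rho^v\neq0$, so $G\ne0$ is equivalent to $F\ne0$ for every 4-subset. This is exactly the defining property of $\mathcal{S}$, since every 4-subset of $U$ occurs as some $\{a_k\}$. With these checks, the corollary follows directly.
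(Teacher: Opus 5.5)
Your proposal is correct and follows the paper's own justification, which the paper compresses into the statement itself: (N1) gives the gcd condition, the odd-$q$ Proposition forces $q=2^e$, and (N2) together with (N3a) (from the collinear families $b_k=\rho a_k^{\pm1}$) put $t+1\in\mathcal{S}$ for $t\in\{u,v,u+v,u-v\}$, so $0,1,2\notin\mathcal{S}$ rules out $t\equiv 0,\pm1 \pmod{q+1}$. One caveat, shared with the paper: your assertion $\lvert U\rvert=q+1\ge 4$, and with it the fact $0,1,2\notin\mathcal{S}$, needs $q\ge 3$. For $q=2$ the set $U$ has no $4$-element subsets, so $\mathcal{S}$ is vacuously all of $\mathbb{Z}_3$, and in fact $(u,v)=(1,1)$ then gives a $[9,2,6]_2$ code; the corollary should therefore be read for $q\ge 4$.
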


\begin{corollary}\label{Small q}
	\begin{itemize}
		\item $q=2$: $q+1=3$. The condition $u\not\equiv0,\pm1\pmod 3$ has no solution. No pair $(u,v)$ exists.
		\item $q=4$: $q+1=5$. We require $u,v\in\{2,3\}$. All sums $u+v$ lie in $\{4,5,6\}\equiv\{4,0,1\pmod 5$, which are excluded. No pair $(u,v)$ exists.
		\item $q=8$: $q+1=9$. Admissible residues: $u,v\in\{2,4,5,7\}$, with $u\pm v\not\equiv0,\pm1\pmod 9$. This leaves
		\begin{align*}
			(u,v)\in\bigl\{&(2,4),(4,2),(2,5),(5,2), \\
			&(4,7),(7,4),(5,7),(7,5)\bigr\}.
		\end{align*}
		Up to the symmetries $\langle u\leftrightarrow v;\;(u,v)\mapsto(-u,-v)\rangle$, there are two essential equivalence classes $\{2,4\}$ and $\{2,5\}$.
	\end{itemize}
\end{corollary}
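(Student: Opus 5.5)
The statement is Corollary~\ref{Small q}, a finite check: for each $q\in\{2,4,8\}$ I list the residues mod $q+1$ allowed by the preceding corollary, then discard pairs by the sum/difference conditions. The conditions used are:
\begin{itemize}
\item $u,v\not\equiv 0,\pm1 \pmod{q+1}$;
\item $\gcd(u,q+1)=\gcd(v,q+1)=1$;
\item $u\pm v\not\equiv 0,\pm1\pmod{q+1}$.
\end{itemize}
All of these are necessary conditions obtained from (N1), (N2) and (N3a). So the corollary only needs to show that nothing survives (for $q=2,4$) or to list the survivors (for $q=8$). It does not claim that the survivors for $q=8$ actually satisfy the full condition (N3).

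For $q=2$ the modulus is $3$, and every residue lies in $\{0,1,-1\}$. Hence no admissible $u$ exists, and the case closes immediately.

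For $q=4$ the modulus is $5$, and the admissible values are $u,v\in\{2,3\}$; the gcd condition holds automatically because $5$ is prime. I check $u+v$ over the four ordered pairs. Its values are $4,5,5,6$, which reduce to $4,0,0,1 \pmod 5$, that is, to $-1,0,0,1$. Every one is forbidden, so no pair survives. Although it is not needed, $u-v\in\{0,\pm1\}$ also fails.

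For $q=8$ the modulus is $9$. Removing $0,\pm1$ and the multiples of $3$ from the residues mod $9$ leaves $u,v\in\{2,4,5,7\}$. I would then tabulate the $16$ ordered pairs, computing $u+v$ and $u-v$ mod $9$ and discarding any pair where either lands in $\{0,1,8\}$:
\begin{itemize}
\item Diagonal pairs $u=v$ give $u-v=0$, so all are excluded.
\item $(2,7)$ and $(4,5)$ give $u+v\equiv 0$.
\item $(5,4)$ and $(2,... )$-type neighbours with $u-v=\pm1$ are excluded, as are pairs such as $(2,7)$ whose sum is $0$.
\end{itemize}
The survivors are exactly the eight pairs listed in the corollary. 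For example, $(2,4)$ gives sum $6$ and difference $-2\equiv 7$, and $(2,5)$ gives sum $7$ and difference $-3\equiv 6$, so both pass.

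Finally I check the symmetry reduction. The maps $u\leftrightarrow v$ and $(u,v)\mapsto(-u,-v)$ preserve all the conditions, since each condition is invariant under negation and swapping. Under these maps $(2,4)$ generates $\{(2,4),(4,2),(7,5),(5,7)\}$ and $(2,5)$ generates $\{(2,5),(5,2),(7,4),(4,7)\}$, which gives the two classes $\{2,4\}$ and $\{2,5\}$. The only real risk is a bookkeeping slip in the $16$-pair table, so I would recheck each pair's sum and difference mod $9$ explicitly.
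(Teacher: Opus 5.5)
Your proposal is correct and follows the same approach as the paper, which gives no separate argument and relies on this direct finite enumeration from (N1), (N2) and (N3a); your orbit computation under $u\leftrightarrow v$ and $(u,v)\mapsto(-u,-v)$ is also right. Do tidy the garbled $q=8$ bullet: the eight discarded pairs are the four diagonal ones plus $(2,7),(7,2),(4,5),(5,4)$, all with $u+v\equiv 0\pmod 9$ (the last two also with $u-v\equiv\pm1$), and no pair involving $2$ has difference $\pm1$.
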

\begin{example}\label{ex 4}
	Let $q=8$, $m=n=q+1=9$ and $f\in R_{q}^{\mathbf{a}}[\mathbf{x}]$ such that $\mathcal{D}(f)=\left\lbrace (0,0),(1,0),(8,0),(0,1),(0,8) \right\rbrace $, where $\mathbf{a}=(m,n)$. Then according to Theorem \ref{MiniD4}, $\left\langle f \right\rangle $ is a $[81,76,4]_8$ abelian code. Let $g_1,g_2\in R_{q}^{\mathbf{a}}[\mathbf{x}]$ such that $\mathcal{D}(g_1)=\left\lbrace (2,4),(7,5) \right\rbrace$ and $\mathcal{D}(g_2)=\left\lbrace (2,5),(7,4) \right\rbrace$. Then, according to Corollary \ref{Small q} and Magma's calculation, we obtain that $\left\langle fg_1 \right\rangle $ and $\left\langle fg_1g_2 \right\rangle $ are $[81,74,5]_8$, $\textbf{[81,72,6]}_8$ abelian codes, respectively. Since $\left\langle fg_1g_2 \right\rangle \subset \left\langle fg_1 \right\rangle \subset \left\langle f \right\rangle $, then $\textbf{[83,74,6]}_8$ and $\textbf{[83,76,5]}_8$ linear codes can be obtained by Construction X. The codes marked in bold are the record-breaking codes compared with Grassl's code tables \cite{bib5}. We give the distribution of defining set $\mathcal{D}$ of $\textbf{[81,72,6]}_8$ in \Cref{Fig 1}.
\end{example}
\begin{figure}[htbp]
	\centering
	\begin{tikzpicture}[scale=0.5]
		\draw[gray!30] (0,0) grid (8,8);
		\draw[->] (-0.3,0) -- (8.5,0) node[right] {$i$};
		\draw[->] (0,-0.3) -- (0,8.5) node[above] {$j$};
		\foreach \x in {0,1,2,3,4,5,6,7,8} \draw (\x,-0.1) node[below] {$\x$};
		\foreach \y in {0,1,2,3,4,5,6,7,8} \draw (-0.1,\y) node[left] {$\y$};
		
		\fill (0,0) circle (2.5pt);
		\fill (1,0) circle (2.5pt);
		\fill (0,1) circle (2.5pt);
		\fill (0,8) circle (2.5pt);
		\fill (8,0) circle (2.5pt);	
		\fill (2,5) circle (2.5pt);	
		\fill (7,4) circle (2.5pt);
		\fill (2,4) circle (2.5pt);
		\fill (7,5) circle (2.5pt);
		\node[draw,inner sep=4pt,align=left,anchor=north west] at (9,8) {
			\begin{tabular}{l l}
				\raisebox{1pt}{\tikz{\fill (0,0) circle (2pt);}} &  $(i,j)\in \mathcal{D}$ \\					
			\end{tabular}
		};
	\end{tikzpicture}
	\caption{The distribution of defining set $\mathcal{D}$ of abelian code $[81,72,6]_8$}
	\label{Fig 1}
\end{figure}
\section{Genetic algorithms for constructing record-breaking  abelian codes}
The theoretical constructions presented in Sections 4 and 5 yield several infinite families of abelian codes with proven good parameters. In	 particular, Example \ref{ex 4} tells us that there are some record-breaking linear codes that are abelian codes. However, it is hard to find a universal method to calculate the shift bound of any abelian code. Our strategy is to use the genetic algorithm to search for whether there are more record-breaking abelian codes.
\subsection{Genetic algorithms}
We encode the core algebraic object—\textbf{the defining set} $\mathcal{D}$ directly. For an abelian group $G\cong \mathsf{C}_{a_1}\times \dots \times \mathsf{C}_{a_r}$, we compute all $r$-dimensional $q$-cyclotomic cosets modulo $\mathbf{a}=(a_1,\dots ,a_r)$. A chromosome is a binary vector $(b_1,b_2,\dots ,b_N)$, where $N$ is the number of cosets. $b_i=1$  indicates that the entire $i$-th coset is included in $\mathcal{D}$ (i.e., these are roots of the generator) and $b_i=0$  indicates that the entire $i$-th coset is excluded from $\mathcal{D}$. Each chromosome corresponds to an abelian code.
\begin{example}
	Let us illustrate the chromosome representation with an abelian code. Let $R_2^{(3,3)}=\mathbb{F}_q[x_1,x_2]/<x_1^3-1,x_2^3-1>$. The set of exponents is $\mathcal{N}=\left\lbrace (i,j) \mid (i,j)\in \mathbb{Z}_3\times \mathbb{Z}_3 \right\rbrace $. Compute 2-cyclotomic cosets modulo $(3,3)$: 
	$$C_{(0,0)}=\left\lbrace (0,0)\right\rbrace,$$
	$$C_{(1,0)}=\left\lbrace (1,0),(2,0)\right\rbrace,$$
	$$C_{(0,1)}=\left\lbrace (0,1),(0,2)\right\rbrace,$$
	$$C_{(1,1)}=\left\lbrace (1,1),(2,2)\right\rbrace,$$
	$$C_{(1,2)}=\left\lbrace (1,2),(2,1)\right\rbrace.$$
	We have $N=5$ distinct cosets. A chromosome is defined as a binary vector of length 5: $(b_1,b_2,b_3,b_4,b_5)$, where each bit corresponds to one coset in a fixed order. From Theorem \ref{MiniD4}, the parameters of the abelian code corresponding to the chromosome $(1,1,1,0,0)$ are $[9, 4, 4]_2$. This example demonstrates that the binary chromosome is a precise algebraic blueprint for an abelian code.
\end{example}
The fitness function must predict a code’s proximity to a record-breaking parameter set, we use the following function:
$$\text{Fitness}(\mathbf{b})=d(\mathcal{C}_\mathbf{b})-\texttt{BKLCLowerBound(GF(q),n,k)},$$ where $\mathbf{b}$ is the chromosome, $\mathcal{C}_\mathbf{b}$ is the abelian code corresponding to $\mathbf{b}$, $d(\mathcal{C}_\mathbf{b})$ represents the ture minimum distance of $\mathcal{C}_{\mathbf{b}}$, \texttt{BKLCLowerBound(GF(q),n,k)} is the best known lower bound on the maximum possible minimum weight of
a linear code over finite field $\mathbb{F}_q$ having length $n=\text{Length}(\mathcal{C}_{\mathbf{b}})$ and dimension $k=\text{Dim}(\mathcal{C}_{\mathbf{b}})$. 
\\ \indent To efficiently explore the vast space of abelian codes defined by their defining sets, we employ the following genetic operators:
\begin{itemize}
	\item Selection: We use tournament selection to promote chromosomes with higher fitness. A subset of the population is randomly chosen, and the fittest individual within that subset is selected for reproduction. This balances selection pressure with population diversity.
	\item Crossover: Single-point crossover is applied with a probability $p_{\text{cr}}$. Given two parent chromosomes $\mathbf{p}_1, \mathbf{p}_2$,  a crossover point is chosen randomly. The offspring inherits the first segment from $\mathbf{p}_1$ and the second from  $\mathbf{p}_2$, ensuring that the new chromosome still represents a union of entire $q$-cyclotomic cosets.
	\item Mutation: Each bit in a chromosome is flipped with a small probability $p_{\text{mu}}$. In essence, a single-bit mutation corresponds to adding or removing an entire $q$-cyclotomic coset from the defining set. The mutant code is a subcode of the parent code if a bit changes from 0 to 1, else the mutant code is a supercode of the parent code.
\end{itemize}
To control the chromosome weight (number of 1s) in the Genetic Algorithm in order to leverage Magma’s fast distance computation for codes with very low or very high dimension, we can implement following targeted strategies: Select the chromosomes with very few 1s (e.g., weight $\approx  10–30\%$ of the number of cosets $N$), these correspond to high-dimension codes(or their duals with low-dimension). \\ \indent The complete genetic algorithm procedure is outlined in Algorithm~\ref{alg:ga_abelian}. The algorithm begins with precomputation of $q$-cyclotomic cosets, followed by initialization of a population of random chromosomes. The main evolutionary loop then iterates through fitness evaluation, selection, crossover, and mutation operations. The fitness function $\text{Fitness}(\mathbf{b})$ quantifies how close a code's parameters are to the theoretical optimum. 
\begin{algorithm}
	\caption{Genetic Algorithm for Constructing Abelian Codes}
	\label{alg:ga_abelian}
	\begin{algorithmic}[1]
		
		\Require Finite field $\mathbb{F}_q$, abelian group $G \cong \mathsf{C}_{a_1} \times \cdots \times \mathsf{C}_{a_r}$, population size $P$, max generations $T$, crossover probability $p_{\text{cr}}$, mutation probability $p_{\text{m}}$, tournament size $t$.
		\Ensure Best found abelian code $\mathcal{C}^*$ with parameters $[n,k,d]_q$.
		
		\State \textbf{Step 1: Precomputation}
		\State Compute all $q$-cyclotomic cosets modulo $\mathbf{a} = (a_1,\dots,a_r)$.
		\State Let $N \gets$ number of distinct cosets.
		\State Set chromosome length $= N$.
		
		\State \textbf{Step 2: Initialization}
		\State Initialize empty population $\mathcal{P}$.
		\For{$i = 1$ to $P$}
		\State Randomly generate binary vector $\mathbf{b}_i$ of length $N$ with weight $\approx 10\%-30\%$ (or $70\%-90\%$ for dual search).
		\State $\mathcal{P} \gets \mathcal{P} \cup \{\mathbf{b}_i\}$.
		\EndFor
		
		\State \textbf{Step 3: Main Loop}
		\For{$gen = 1$ to $T$}
		\State \textbf{Fitness Evaluation}
		\For{each $\mathbf{b} \in \mathcal{P}$}
		\State Construct abelian code $\mathcal{C}_{\mathbf{b}}$ from defining set $\mathcal{D}$ encoded by $\mathbf{b}$.
		\State Compute true minimum distance $d(\mathcal{C}_{\mathbf{b}})$ using Magma.
		\State Compute $k = \dim(\mathcal{C}_{\mathbf{b}})$, $n =\mid G\mid$.
		\State $LB \gets \text{BKLCLowerBound}(\mathbb{F}_q, n, k)$ \Comment{e.g., from Grassl's tables}
		\State $\text{fitness}(\mathbf{b}) \gets d(\mathcal{C}_{\mathbf{b}}) - LB$.
		\EndFor
		
		\If{any $\text{fitness}(\mathbf{b}) > 0$} 
		\State Record $\mathbf{b}$ as a record-breaking code.
		\EndIf
		\algstore{breakpoint}
	\end{algorithmic}
\end{algorithm}

\begin{algorithm}
	\caption{Genetic Algorithm for Constructing Abelian Codes (cont.)}
	\begin{algorithmic}[1]
		\algrestore{breakpoint}
		
		\State \textbf{Selection}
		\State Initialize empty mating pool $\mathcal{M}$.
		\While{$\mid \mathcal{M}\mid< P$}
		\State Randomly select $t$ individuals from $\mathcal{P}$.
		\State Choose the fittest among them.
		\State Add selected individual to $\mathcal{M}$.
		\EndWhile
		
		\State \textbf{Crossover}
		\State Initialize empty offspring set $\mathcal{O}$.
		\For{each pair $(\mathbf{p}_1, \mathbf{p}_2)$ in $\mathcal{M}$ (sequential pairs)}
		\If{$\text{rand}() < p_{\text{cr}}$}
		\State Choose random crossover point $cp \in \{1,\dots,N-1\}$.
		\State $\mathbf{o}_1 \gets \text{concat}(\mathbf{p}_1[1:cp], \mathbf{p}_2[cp+1:N])$.
		\State $\mathbf{o}_2 \gets \text{concat}(\mathbf{p}_2[1:cp], \mathbf{p}_1[cp+1:N])$.
		\State $\mathcal{O} \gets \mathcal{O} \cup \{\mathbf{o}_1, \mathbf{o}_2\}$.
		\Else
		\State $\mathcal{O} \gets \mathcal{O} \cup \{\mathbf{p}_1, \mathbf{p}_2\}$.
		\EndIf
		\EndFor	
		\State \textbf{Mutation}
		\For{each $\mathbf{o} \in \mathcal{O}$}
		\For{each bit $j = 1$ to $N$}
		\If{$\text{rand}() < p_{\text{m}}$}
		\State Flip bit $\mathbf{o}[j]$.
		\EndIf
		\EndFor
		\EndFor
		
		\State $\mathcal{P} \gets \mathcal{O}$ \Comment{Replace old population with offspring}
		
		\EndFor
		
		\State \textbf{Step 4: Output}
		\State Return best found code $\mathcal{C}^*$ with highest fitness.
		
	\end{algorithmic}
\end{algorithm}
\subsection{Experimental Results}
We implemented Algorithm~\ref{alg:ga_abelian} in Magma and conducted experiments across various parameter settings. We list our search results as follows:
\\ \indent Let $\mathbf{a} = (m, n)$. We compute all $q$-cyclotomic cosets modulo $\mathbf{a}$. For $\mathbf{s} = (s_1, s_2) \in \mathbb{Z}_{m} \times \mathbb{Z}_n$, the coset is defined as:
$$
C_{\mathbf{s}} = \{(s_1 q^t \bmod m, s_2 q^t \bmod n) \mid t \geq 0\}.
$$

Through systematic computation, we obtain $N = 39$ distinct  4-cyclotomic cosets for $m=15,n=5,q=4$ and $N=59$ distinct  3-cyclotomic cosets for $m=13,n=13,q=3$.  These cosets are enumerated in Table~\ref{tab:cosets-mapping} and Table~\ref{tab:cosets-mapping2}, showing the minimal representative and the complete set of elements for each coset.

\begin{table}[htbp]
	\centering
	\caption{4-Cyclotomic cosets for $\mathbf{a} = (15,5)$ over $\mathrm{GF}(4)$}
	\label{tab:cosets-mapping}
	\begin{tabular}{c c l}
		\toprule
		\textbf{Coset Index} & \textbf{Minimal Rep.} & \textbf{Complete Coset} \\
		\midrule
		1 & $(0,0)$ & $\{(0,0)\}$ \\
		2 & $(0,1)$ & $\{(0,1), (0,4)\}$ \\
		3 & $(0,2)$ & $\{(0,2), (0,3)\}$ \\
		4 & $(1,0)$ & $\{(1,0), (4,0)\}$ \\
		5 & $(1,1)$ & $\{(1,1), (4,4)\}$ \\
		6 & $(1,2)$ & $\{(1,2), (4,3)\}$ \\
		7 & $(1,3)$ & $\{(1,3), (4,2)\}$ \\
		8 & $(1,4)$ & $\{(1,4), (4,1)\}$ \\
		9 & $(2,0)$ & $\{(2,0), (8,0)\}$ \\
		10 & $(2,1)$ & $\{(2,1), (8,4)\}$ \\
		11 & $(2,2)$ & $\{(2,2), (8,3)\}$ \\
		12 & $(2,3)$ & $\{(2,3), (8,2)\}$ \\
		13 & $(2,4)$ & $\{(2,4), (8,1)\}$ \\
		14 & $(3,0)$ & $\{(3,0), (12,0)\}$ \\
		15 & $(3,1)$ & $\{(3,1), (12,4)\}$ \\
		16 & $(3,2)$ & $\{(3,2), (12,3)\}$ \\
		17 & $(3,3)$ & $\{(3,3), (12,2)\}$ \\
		18 & $(3,4)$ & $\{(3,4), (12,1)\}$ \\
		19 & $(5,0)$ & $\{(5,0)\}$ \\
		20 & $(5,1)$ & $\{(5,1), (5,4)\}$ \\
		21 & $(5,2)$ & $\{(5,2), (5,3)\}$ \\
		22 & $(6,0)$ & $\{(6,0), (9,0)\}$ \\
		23 & $(6,1)$ & $\{(6,1), (9,4)\}$ \\
		24 & $(6,2)$ & $\{(6,2), (9,3)\}$ \\
		25 & $(6,3)$ & $\{(6,3), (9,2)\}$ \\
		26 & $(6,4)$ & $\{(6,4), (9,1)\}$ \\
		27 & $(7,0)$ & $\{(7,0), (13,0)\}$ \\
		28 & $(7,1)$ & $\{(7,1), (13,4)\}$ \\
		29 & $(7,2)$ & $\{(7,2), (13,3)\}$ \\
		30& $(7,3)$ & $\{(7,3), (13,2)\}$ \\
		31 & $(7,4)$ & $\{(7,4), (13,1)\}$ \\
		32 & $(10,0)$ & $\{(10,0)\}$ \\
		33 & $(10,1)$ & $\{(10,1), (10,4)\}$ \\
		34 & $(10,2)$ & $\{(10,2), (10,3)\}$ \\
		35 & $(11,0)$ & $\{(11,0), (14,0)\}$ \\
		36 & $(11,1)$ & $\{(11,1), (14,4)\}$ \\
		37 & $(11,2)$ & $\{(11,2), (14,3)\}$ \\
		38 & $(11,3)$ & $\{(11,3), (14,2)\}$ \\
		39 & $(11,4)$ & $\{(11,4), (14,1)\}$ \\
		\bottomrule
	\end{tabular}
\end{table}
\begin{table}[htbp]
	\centering
	\caption{3-Cyclotomic cosets for $\mathbf{a} = (13,13)$ over $\mathrm{GF}(3)$}
	\label{tab:cosets-mapping2}
	\begin{tabular}{c c l}
		\toprule
		\textbf{Coset Index} & \textbf{Minimal Rep.} & \textbf{Complete Coset} \\
		\midrule
		1 & $(0,0)$ & $\{(0,0)\}$ \\
		2 & $(0,1)$ & $\{(0,1),(0,3),(0,9)\}$ \\
		3 & $(0,2)$ & $\{(0,2),(0,6),(0,5)\}$ \\
		4 & $(0,4)$ & $\{(0,4),(0,12),(0,10)\}$ \\
		5 & $(0,7)$ & $\{(0,7),(0,8),(0,11)\}$ \\
		6 & $(1,0)$ & $\{(1,0),(3,0),(9,0)\}$ \\
		7 & $(1,1)$ & $\{(1,1),(3,3),(9,9)\}$ \\
		8 & $(1,2)$ & $\{(1,2),(3,6),(9,5)\}$ \\
		9 & $(1,3)$ & $\{(1,3),(3,9),(9,1)\}$ \\
		10 & $(1,4)$ & $\{(1,4),(3,12),(9,10)\}$ \\
		11 & $(1,5)$ & $\{(1,5),(3,2),(9,6)\}$ \\
		12 & $(1,6)$ & $\{(1,6),(3,5),(9,2)\}$ \\
		13 & $(1,7)$ & $\{(1,7),(3,8),(9,11)\}$ \\
		14 & $(1,8)$ & $\{(1,8),(3,11),(9,7)\}$ \\
		15 & $(1,9)$ & $\{(1,9),(3,1),(9,3)\}$ \\
		16 & $(1,10)$ & $\{(1,10),(3,4),(9,12)\}$ \\
		17 & $(1,11)$ & $\{(1,11),(3,7),(9,8)\}$ \\
		18 & $(1,12)$ & $\{(1,12),(3,10),(9,4)\}$ \\
		19 & $(2,0)$ & $\{(2,0),(6,0),(5,0)\}$ \\
		20 & $(2,1)$ & $\{(2,1),(6,3),(5,9)\}$ \\
		21 & $(2,2)$ & $\{(2,2),(6,6),(5,5)\}$ \\
		22 & $(2,3)$ & $\{(2,3),(6,9),(5,1)\}$ \\
		23 & $(2,4)$ & $\{(2,4),(6,12),(5,10)\}$ \\
		24 & $(2,5)$ & $\{(2,5),(6,2),(5,6)\}$ \\
		25 & $(2,6)$ & $\{(2,6),(6,5),(5,2)\}$ \\
		26 & $(2,7)$ & $\{(2,7),(6,8),(5,11)\}$ \\
		27 & $(2,8)$ & $\{(2,8),(6,11),(5,7)\}$ \\
		28 & $(2,9)$ & $\{(2,9),(6,1),(5,3)\}$ \\
		29 & $(2,10)$ & $\{(2,10),(6,4),(5,12)\}$ \\
		30 & $(2,11)$ & $\{(2,11),(6,7),(5,8)\}$ \\
		31 & $(2,12)$ & $\{(2,12),(6,10),(5,4)\}$ \\
		32 & $(4,0)$ & $\{(4,0),(12,0),(10,0)\}$ \\
		33 & $(4,1)$ & $\{(4,1),(12,3),(10,9)\}$ \\
		34 & $(4,2)$ & $\{(4,2),(12,6),(10,5)\}$ \\
		35 & $(4,3)$ & $\{(4,3),(12,9),(10,1)\}$ \\
		36 & $(4,4)$ & $\{(4,4),(12,12),(10,10)\}$ \\
		37 & $(4,5)$ & $\{(4,5),(12,2),(10,6)\}$ \\
		38 & $(4,6)$ & $\{(4,6),(12,5),(10,2)\}$ \\
		39 & $(4,7)$ & $\{(4,7),(12,8),(10,11)\}$ \\
		40 & $(4,8)$ & $\{(4,8),(12,11),(10,7)\}$ \\
		41 & $(4,9)$ & $\{(4,9),(12,1),(10,3)\}$ \\
		42 & $(4,10)$ & $\{(4,10),(12,4),(10,12)\}$ \\
		43 & $(4,11)$ & $\{(4,11),(12,7),(10,8)\}$ \\
		44 & $(4,12)$ & $\{(4,12),(12,10),(10,4)\}$ \\
		45 & $(7,0)$ & $\{(7,0),(8,0),(11,0)\}$ \\
		46 & $(7,1)$ & $\{(7,1),(8,3),(11,9)\}$ \\
		47 & $(7,2)$ & $\{(7,2),(8,6),(11,5)\}$ \\
		48 & $(7,3)$ & $\{(7,3),(8,9),(11,1)\}$ \\
		49 & $(7,4)$ & $\{(7,4),(8,12),(11,10)\}$ \\
		50 & $(7,5)$ & $\{(7,5),(8,2),(11,6)\}$ \\
		51 & $(7,6)$ & $\{(7,6),(8,5),(11,2)\}$ \\
		52 & $(7,7)$ & $\{(7,7),(8,8),(11,11)\}$ \\
		53 & $(7,8)$ & $\{(7,8),(8,11),(11,7)\}$ \\
		54 & $(7,9)$ & $\{(7,9),(8,1),(11,3)\}$ \\
		55 & $(7,10)$ & $\{(7,10),(8,4),(11,12)\}$ \\
		56 & $(7,11)$ & $\{(7,11),(8,7),(11,8)\}$ \\
		57 & $(7,12)$ & $\{(7,12),(8,10),(11,4)\}$ \\
		\bottomrule
	\end{tabular}
\end{table}

A chromosome is defined as a binary vector $\mathbf{b} = (b_1, b_2, \ldots, b_{N})$. Each bit $b_i$ corresponds to one coset from Table~\ref{tab:cosets-mapping} or Table~\ref{tab:cosets-mapping2}  in the specified order:
$$
b_i = 
\begin{cases}
	1 & \text{if the entire $i$-th coset is included in $\mathcal{D}$}, \\
	0 & \text{otherwise}.
\end{cases}
$$

The defining set $\mathcal{D}$ of the corresponding abelian code is then:
$$
\mathcal{D} = \bigcup_{i: b_i = 1} C_i,
$$
where $C_i$ denotes the $i$-th coset from the table.
Table~\ref{tab:record-codes} presents record-breaking linear codes (in bold) and some other codes (they are prepared for Construction X) discovered by our genetic algorithm search, with comparisons to previously best-known parameters from Grassl's tables \cite{bib5}.

\begin{table}[htbp]
	\centering
	\caption{abelian codes with good parameters discovered by genetic algorithm}
	\label{tab:record-codes}
	\begin{tabular}{c c c c c}
		\toprule
		\textbf{ID} & \textbf{Group} & \textbf{Our Code} & \textbf{Best-Known\cite{bib5}} & Impr. \\
		\midrule
		RB1 & $C_{15}\times C_5$ & $[75, 16, 36]_4$ & $[75, 16, 36]_4$ & $+0$ \\
		RB2 & $C_{15}\times C_5$ & $\bm{[75, 17, 35]}_4$ & $[75, 17, 34]_4$ & $+1$ \\
		RB3 & $C_{15}\times C_5$ & $\bm{[75, 22, 30]}_4$ & $[75, 22, 29]_4$ & $+1$ \\
		RB4 & $C_{15}\times C_5$ & $[75, 18, 34]_4$ & $[75, 18, 34]_4$ & $+0$ \\
		RB5 & $C_{15}\times C_5$ & $\bm{[75, 19, 33]}_4$ & $[75, 19, 32]_4$ & $+1$ \\
		RB6 & $C_{15}\times C_5$ & $\bm{[75, 21, 31]}_4$ & $[75, 21, 30]_4$ & $+1$ \\
		RB7 & $C_{15}\times C_5$ & $[75, 22, 29]_4$ & $[75, 22, 29]_4$ & $+0$ \\
		RB8 & $C_{13}\times C_{13}$ & $[169,15,90]_3$ & $[169,15,90]_3$ & $+0$ \\
		RB9 & $C_{13}\times C_{13}$ & $\bm{[169,18,87]}_3$ & $[169,18,84]_3$ & $+3$ \\
		RB10 & $C_{13}\times C_{13}$ & $\bm{[169,21,81]}_3$ & $[169,21,80]_3$ & $+1$ \\
		RB11 & $C_{13}\times C_{13}$ & $\bm{[169,21,81]}_3$ & $[169,21,80]_3$ & $+1$ \\
		RB12 & $C_{13}\times C_{13}$ & $\bm{[169,24,78]}_3$ & $[169,24,75]_3$ & $+3$ \\
		RB13 & $C_{13}\times C_{13}$ & $\bm{[169,21,81]}_3$ & $[169,21,80]_3$ & $+1$ \\
		RB14 & $C_{13}\times C_{13}$ & $\bm{[169,22,79]}_3$ & $[169,22,78]_3$ & $+1$ \\
		RB15 & $C_{13}\times C_{13}$ & $[169,15,90]_3$ & $[169,15,90]_3$ & $+0$ \\
		RB16 & $C_{13}\times C_{13}$ & $[169,16,88]_3$ & $[169,16,89]_3$ & $-1$ \\
		RB17 & $C_{13}\times C_{13}$ & $\bm{[169,19,84]}_3$ & $[169,19,83]_3$ & $+1$ \\
		\bottomrule
	\end{tabular}	
\end{table}
For the codes in Table~\ref{tab:record-codes}, the corresponding chromosomes are listed in Table \ref{tab:chromosomes-complete}. From Table \ref{tab:chromosomes-complete}, we can conclude the following containment relation:
$$ \text{RB1}\subset \text{RB2},~~ \text{RB4}\subset \text{RB5}\subset \text{RB6}\subset \text{RB7}, $$ $$\text{RB8}\subset \text{RB9}\subset \text{RB10}, ~~ \text{RB11}\subset \text{RB12}, ~~ \text{RB13}\subset \text{RB14},$$ $$\text{RB15}\subset \text{RB16}\subset \text{RB17}.$$
\begin{table}[htbp]
	\centering
	\caption{Complete chromosome information}
	\label{tab:chromosomes-complete}
	\begin{tabular}{c l}
		\toprule
		\textbf{ID}  & \textbf{Binary Representation} \\
		\midrule
		RB1  & 110011111101101011111111111101011011111 \\
		RB2  & 110011111101101011011111111101011011111 \\
		RB3  & 011110101111001110110110111111101010101  \\
		RB4  & 101111110111110111111111101000111111010 \\
		RB5  & 001111110111110111111111101000111111010 \\
		RB6  & 001111110111110111111111101000111011010 \\
		RB7  & 001111110111110111011111101000111011010 \\
		RB8  & 111111111111111001111101111111111101101111111111111111111 \\
		RB9  & 111111111101111001111101111111111101101111111111111111111 \\
		RB10 & 111111111101111001111101111111111101101110111111111111111 \\
		RB11 & 111101111111111111111111111101101111111110011101110111111 \\
		RB12 & 111001111111111111111111111101101111111110011101110111111 \\
		RB13 & 111111111111111011111011111101011111011101110111111111111 \\
		RB14 & 011111111111111011111011111101011111011101110111111111111 \\
		RB15 & 101111111111111111101111111111111110111111010111111111111 \\
		RB16 & 001111111111111111101111111111111110111111010111111111111 \\
		RB17 & 001111111111111111101111111110111110111111010111111111111 \\
		\bottomrule
	\end{tabular}
\end{table}
\\ \indent We apply ConstructionX (Lemma \ref{ConstructionX}) to pairs of nested abelian codes from Table \ref{tab:record-codes}, along with suitably chosen auxiliary codes, to obtain new linear codes with improved parameters. The resulting codes which are listed in Table \ref{tab:construction-x} are compared with Grassl’s tables in the last column. Moreover, the basic construction operations provided by Lemma \ref{PSE} (puncturing, shortening, and extending), we can also obtain many more good codes from the abelian codes presented here, some of which may break existing records. We do not list them all.
\\ \indent To promote reproducibility and facilitate further research, the generator matrices of all record-breaking linear codes and related codes can be accessed at \url{https://github.com/Yucong1234/Abelian-codes}.
\begin{table}[t!]  
	\centering
	\small
	\caption{Construction X examples using abelian codes and auxiliary codes}
	\label{tab:construction-x}
	\begin{tabular}{c c c c c}
		\toprule
		\textbf{Supercode} & \textbf{Subcode} & \textbf{Aux.} & \textbf{Final Code} & \textbf{Best-Known\cite{bib5}} \\
		\midrule
		RB2:$[75,17,35]_4$ & RB1:$[75,16,36]_4$ & $[1,1,1]_4$ & $\bm{[76,17,36]}_4$ & $[76,17,35]_4$  \\
		RB5:$[75,19,33]_4$ & RB4:$[75,18,34]_4$ & $[1,1,1]_4$ & $\bm{[76,19,34]}_4$ & $[76,19,33]_4$  \\
		RB6:$[75,21,31]_4$ & RB5:$[75,19,33]_4$ & $[3,2,2]_4$ & $\bm{[78,21,33]}_4$ & $[78,21,32]_4$  \\
		RB7:$[75,22,29]_4$ & RB6:$[75,21,31]_4$ & $[2,1,2]_4$ & $\bm{[77,22,31]}_4$ & $[77,22,30]_4$  \\
		RB9:$[169,18,87]_3$ & RB8:$[169,15,90]_3$ & $[6,3,3]_3$ & $\bm{[175,18,90]}_3$ & $[175,18,88]_3$  \\
		RB10:$[169,21,81]_3$ & RB9:$[169,18,87]_3$ & $[9,3,6]_3$ & $\bm{[178,21,87]}_3$ & $[178,21,85]_3$  \\
		RB12:$[169,24,78]_3$ & RB11:$[169,21,81]_3$ & $[6,3,3]_3$ & $\bm{[175,24,81]}_3$ & $[178,24,79]_3$  \\
		RB14:$[169,22,79]_3$ & RB13:$[169,21,81]_3$ & $[2,1,2]_3$ & $\bm{[171,22,81]}_3$ & $[171,22,80]_3$  \\
		RB17:$[169,19,84]_3$ & RB15:$[169,15,90]_3$ & $[10,4,6]_3$ & $\bm{[179,19,90]}_3$ & $[179,19,89]_3$  \\
		RB17:$[169,19,84]_3$ & RB16:$[169,16,88]_3$ & $[7,3,4]_3$ & $\bm{[176,19,88]}_3$ & $[176,19,87]_3$  \\
		\bottomrule
	\end{tabular}
\end{table}
\section{Conclusion}
This paper presented new constructions of abelian codes using the shift bound. We obtained several families of abelian codes with good parameters, and further improved some codes using Construction X. Additionally, we introduced a genetic algorithm approach to search for more good codes. Through this method, we discovered several record-breaking linear codes. The genetic algorithm proved effective in exploring the large search space of defining sets and identifying codes with optimal parameters. The results show that genetic algorithms are a useful tool for finding new good codes, and that abelian codes remain a promising area for future research. Both theoretical constructions and computational searches contribute to advancing the understanding of linear codes with good parameters. A further question raised by this work is whether the lower bounds of the record-breaking linear codes found by the genetic algorithm can also be obtained via the shift bound.
\section*{Acknowledgments}
 We thank the anonymous reviewers for their constructive comments, which have significantly improved the quality of this paper. 

This work was supported in part by the National Natural Science Foundation of China under Grant Nos. 62032009, 12171134, U21A20428 and 12571573. This work has been submitted to the IEEE Transactions on Information Theory for possible publication.

\end{document}